\newcounter{algoline}
\newcommand\Numberline{\refstepcounter{algoline}\nlset{\thealgoline}}
\newtheorem{theorem}{Theorem}[section]
\newtheorem{definition}[theorem]{Definition}
\newtheorem{lemma}[theorem]{Lemma}
\newtheorem{claim}[theorem]{Claim}
\newtheorem{corollary}[theorem]{Corollary}
\newtheorem{proposition}[theorem]{Proposition}
\definecolor{darkred}{rgb}{1, 0.1, 0.3}
\definecolor{darkblue}{rgb}{0.1, 0.1, 1}
\definecolor{red}{rgb}{1, 0, 0}
\definecolor{darkgreen}{rgb}{0,0.6,0.5}
\newcommand{\Idomain} 	{\Omega}
\newcommand{\myw}		{{\omega}}
\newcommand{\thickenG}   {{G^\myw}} 
\newcommand{\redR}		{\mathrm{cl}(\overline{G^\myw})} 
\newcommand{\redregion} {{outside-region\xspace}}
\newcommand{\greenregion}  {{neighborhood\xspace}}
\newcommand{\etal}		{{et al.\xspace{}}}
\newcommand{\Ghat}		{{\widehat{G}}}
\newcommand{\WWLAlg}	{{\sf MorseRecon\xspace}}
\newcommand{\Perall}		{{P}} 
\newcommand{\GV}		{{M}}
\newcommand{\dimension}	{{dimension}}
\newcommand{\validcancel}		{{cancellable\xspace}}
\newcommand{\LowSt}		{{\mathrm{LowSt}}}
\newcommand{\myf}		{{\bar{f}}}
\newcommand{\trueG}		{{\mathrm{G}}}
\newcommand{\pers}	{\mathrm{pers}}
\newcommand\K{K}
\newcommand{\mytree}		{{\mathcal{T}}}
\newcommand{\myv}			{{\mathsf{v}}}
\newcommand{\afunc}			{{g_\rho}}
\newcommand{\SimpVecF}		{{\sf PerSimpVF\xspace}}
\newcommand{\SimpVecFVtwo}	{{\sf PerSimpTree\xspace}}
\newcommand{\CollectOutput}	{{\sf CollectOutputG\xspace}}
\newcommand{\CollectOutputVtwo}	{{\sf Treebased-OutputG\xspace}}
\newcommand{\WWLAlgSimp}	{{\sf MorseReconSimp\xspace}}
\newcommand{\mysink}		{{\mathrm{si}}}
\title{Graph Reconstruction by Discrete Morse Theory }
\author{Tamal K. Dey\thanks{Department of Computer Science and Engineering, The Ohio State University. \texttt{tamaldey, yusu@cse.ohio-state.edu, wang.6195@osu.edu}}, Jiayuan Wang\footnotemark[1], Yusu Wang\footnotemark[1]}
\date{}
\begin{document}

 \maketitle

\begin{abstract}
Recovering hidden graph-like structures from potentially noisy data is a fundamental task in modern data analysis. Recently, a persistence-guided discrete Morse-based framework to extract a geometric graph from low-dimensional data has become popular. 
However, to date, there is very limited theoretical understanding of this framework in terms of graph reconstruction. This paper makes a first step towards closing this gap. Specifically, first, leveraging existing theoretical understanding of persistence-guided discrete Morse cancellation, we provide a simplified version of the existing discrete Morse-based graph reconstruction algorithm. We then introduce a simple and natural noise model and show that the aforementioned framework can correctly reconstruct a graph under this noise model, in the sense that it has the same loop structure as the hidden ground-truth graph, and is also geometrically close. We also provide some experimental results for our simplified graph-reconstruction algorithm.  
 \end{abstract}

\section{Introduction}

Recovering hidden structures from potentially noisy data is a fundamental task in modern data analysis. A particular type of structure often of interest is the geometric graph-like structure. 
For example, given a collection of GPS trajectories, recovering the hidden road network can be modeled as reconstructing a geometric graph embedded in the plane. Given the simulated density field of dark matters in universe, finding the hidden filamentary structures is essentially a problem of geometric graph reconstruction.

Different approaches have been developed for reconstructing a curve or a metric graph from input data. For example, in computer graphics, much work have been done in extracting 1D skeleton of geometric models using the medial axis or Reeb graphs \cite{dey2006defining,yan2016erosion,liu2011extended,ge2011data,natali2011graph,biasotti2008reeb}.  
In computer vision and machine learning, a series of work has been developed based on the concept of \emph{principal curves}, originally proposed by Hastie and Steutzle \cite{hastie1989principal}. 
Extensions to graphs include the work in \cite{kegl2002piecewise} for 2D images and in \cite{ozertem2011locally} for high dimensional point data. 

In general, there is little theoretical guarantees for most approaches developed in practice to extract hidden graphs. One exception is some recent work in computational topology: Aanijaneya et al. \cite{aanjaneya2012metric}
proposed the first algorithm to approximate a metric graph from an input metric space with guarantees. The authors of \cite{chazal2015gromov,ge2011data} used Reeb-like structures to approximate a hidden (metric) graph with some theoretical guarantees. These work however only handles (Gromov-)Hausdorff-type of noise. When input points are embedded in an ambient space, they requires the input points to lie within a small tubular neighborhood of the hidden graph. Empirically, these methods do not seem to be effective when the input contains ambient noise allowing some faraway points from the hidden graph. 
 
Recently, a discrete Morse-based framework for recovering hidden structures was proposed and studied \cite{DRS15,GDN07,RWS11}. 
This line of work computes and simplifies a discrete analog of
(un)stable manifolds of a Morse function by using the (Forman's) discrete  Morse theory coupled with persistent homology for 2D or 3D volumetric data.
One of the main issues in such simplification is the inherent obstructions that may occur for cancelling critical pairs. The authors of \cite{RWS11} 
suggest sidestepping this and consider
a combinatorial representation of critical pairs for further processing. The authors in \cite{DRS15} identify a restricted set of pairs called ``cancellable close pairs'' which are guaranteed to admit cancellation.
This framework has been applied to, for example, extracting filament structures from simulated dark matter density fields \cite{2011MNRAS} and reconstructing road networks from GPS traces \cite{WWL15}.

This persistence-guided discrete Morse-based framework has shown to be very effective in recovering a hidden geometric graph from (non-Hausdorff type) noise and non-homogeneous data. The method draws upon the global topological structure hidden in the input scalar field and thus is particularly effective at identifying junction nodes which has been a challenge for previous approaches that rely mostly on local information.
However, to date, theoretical understanding of such a framework remains limited. 
Simplification of a discrete Morse gradient vector field using persistence has been studied before.
For example, the work of \cite{DRS15} clarifies the connection between persistence-pairing and the simplification of \emph{discrete Morse chain complex} (which is closely related, but different from the cancellation in the discrete gradient vector field) for 2D and 3D domains. 
Bauer \etal{} \cite{Bauer2012} obtain several results on persistence guided discrete Morse simplification for combinatorial surfaces. 
The simplification of vertex-edge persistence pairing used in \cite{Bauer2012} has also been observed in \cite{attali2009persistence} independently for simplifying Morse functions on surfaces.
Leveraging these existing developments, we aim to provide a theoretical understanding of a persistence-guided discrete Morse based approach to reconstruct a hidden geometric graph. 

\vspace{-0.2in}
\subparagraph*{Main contributions and organization of paper.}
In Section \ref{sec:alg}, we start with one version of the existing persistence-guided discrete Morse-based graph reconstruction algorithm (as employed in \cite{2011MNRAS,WWL15,newpaper}). 
We show that this algorithm can be significantly simplified while still yielding the same output. 
To establish the theoretical guarantee of the reconstruction algorithm, we introduce a simple yet natural noise model in Section \ref{sec:noisemodel}. Intuitively, this noise model assumes that we are given an input density field $\rho: \mathbb{R}^d \to \mathbb{R}$ where densities are significantly higher within a small neighborhood around a hidden graph than outside it. 
Under this noise model, we show that the reconstructed graph has the same loop structure as the hidden graph, and is also geometrically close to it; the technical details are in Sections \ref{sec:general} and \ref{sec:2D} for the general case and the 2-dimensional case (with additional guarantees), respectively. 

While our noise model is simple, our theoretical guarantees are first of a kind developed for a discrete Morse-based approach applied to graph reconstruction. In fact, prior to this, it was not clear whether a discrete Morse based approach can recover a graph even if there is no noise, that is, the density function has positive values {\it only} on the hidden graph. 
For our specific noise model, it may be possible to develop thresholding strategies perhaps with theoretical guarantees. However, previous work (e.g, \cite{2011MNRAS,WWL15}) have shown that discrete Morse approach succeeds in many cases handling non-homogeneous data where thresholding fails. 
 We have implemented the proposed simplified algorithm and tested it on several data sets, which generally gives a speed-up of at least a factor of 2 over a state-of-the-art approach.  We present more discussions and experimental results  in the appendix.  

\section{Preliminaries}

\subsection{Morse theory}

For simplicity, we consider only a smooth function $f: \mathbb{R}^d \to \mathbb{R}$. 
See \cite{EH10,Morse_Theory} for more general discussions. 

For a point $p\in \mathbb{R}^d$, the gradient vector of $f$ at a point $p$ is $\nabla f(p) = -[\frac{\partial f}{\partial x_1} \cdots  \frac{\partial f}{\partial x_d}]^T$, which represents the steepest descending direction of $f$ at $p$,  with its magnitude being the rate of change. 
An integral line of $f$ is a path $\pi: (0,1)\rightarrow \mathbb{R}^d$ such that the tangent vector at each point $p$ of this path equals $\nabla f(p)$, which is intuitively a flow line following the steepest descending direction at any point. A point $p\in \mathbb{R}^d$ is \emph{critical} if its gradient vector vanishes, i.e, $\nabla f(p) = [0 \cdots 0]^T$. 
A \emph{maximal} integral line necessarily ``starts'' and ``ends'' at critical points of $f$; that is, $\lim_{t\to 0} \pi(t) = p$ with $\nabla f(p) = [0 \cdots 0]^T$, and $\lim_{t\to 1} \pi(t) = q$ with $\nabla f(q) = [0 \cdots 0]^T$. See Figure \ref{fig:cripts} where we show the graph of a function $f: \mathbb{R}^2 \to \mathbb{R}$, and there is an integral line from $p'$ to the minimum $v_1$. 

For a critical point $p$, the union of $p$ and all the points from integral lines flowing into $p$ is referred to as the \emph{stable manifold of $p$}. Similarly, for a critical point $q$, the union of $q$ and all the points on integral lines starting from $q$ is called the \emph{unstable manifold of $q$}.
The stable manifold of a minimum $p$ intuitively corresponds the basin/valley around $p$ in the terrain of $f$. 
The 1-stable manifolds of index ($d-1$) saddles consist of pieces of curves connecting ($d-1$)-saddles to maxima -- These curves intuitively capture ``mountain ridges'' of the terrain (graph of the function $f$); see Figure \ref{fig:cripts} for an example. 
Symmetrically, the unstable manifold of a maximum $q$ corresponds to the mountain around $q$.
The 1-unstable manifolds consist of a collection of curves connecting $1$-saddles to minima, corresponding intuitively to the ``valley ridges''. 

In this paper, we focus on a graph-reconstruction framework using Morse-theory (as in e.g, \cite{GDN07,DRS15,2011MNRAS,WWL15}). 
Intuitively, the 1-stable manifolds of saddles (mountain ridges) of the density field $\rho$ are used to capture the hidden graphs.
To implement such an idea in practice, the \emph{discrete Morse theory} is used for robustness and simplicity contributed by its combinatorial nature; and a simplification scheme guided by the persistence pairings is employed to remove noise. 
Below, we introduce some necessary background notions in these topics. 

\begin{figure}[tbph]
\captionsetup[subfigure]{justification=centering}
    \centering
    \begin{subfigure}[b]{0.2\textheight}

\includegraphics[width=\textwidth]{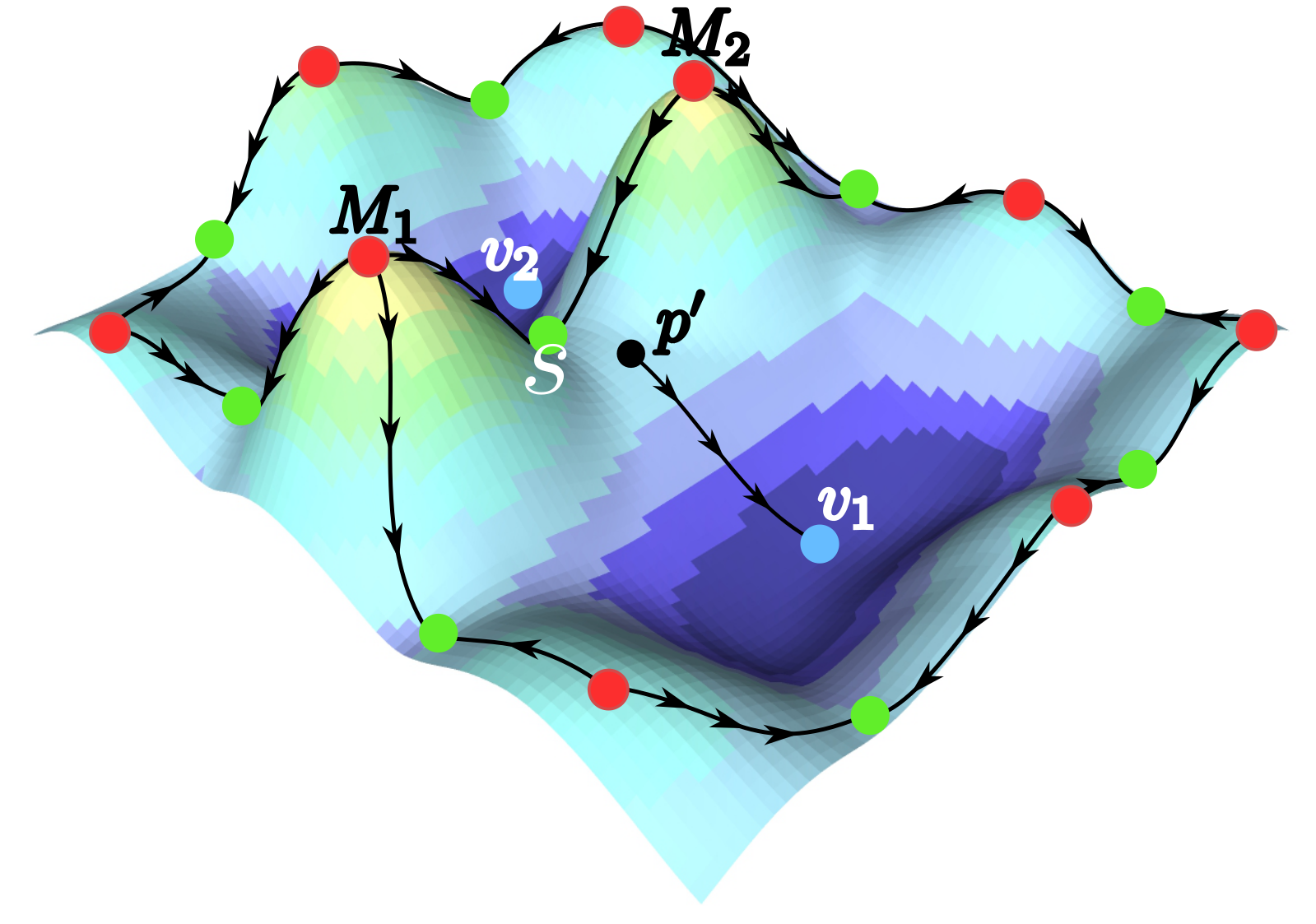}
        \caption{}
        \label{fig:cripts}
    \end{subfigure}
     \begin{subfigure}[b]{0.2\textwidth}
        \includegraphics[width=\textwidth]{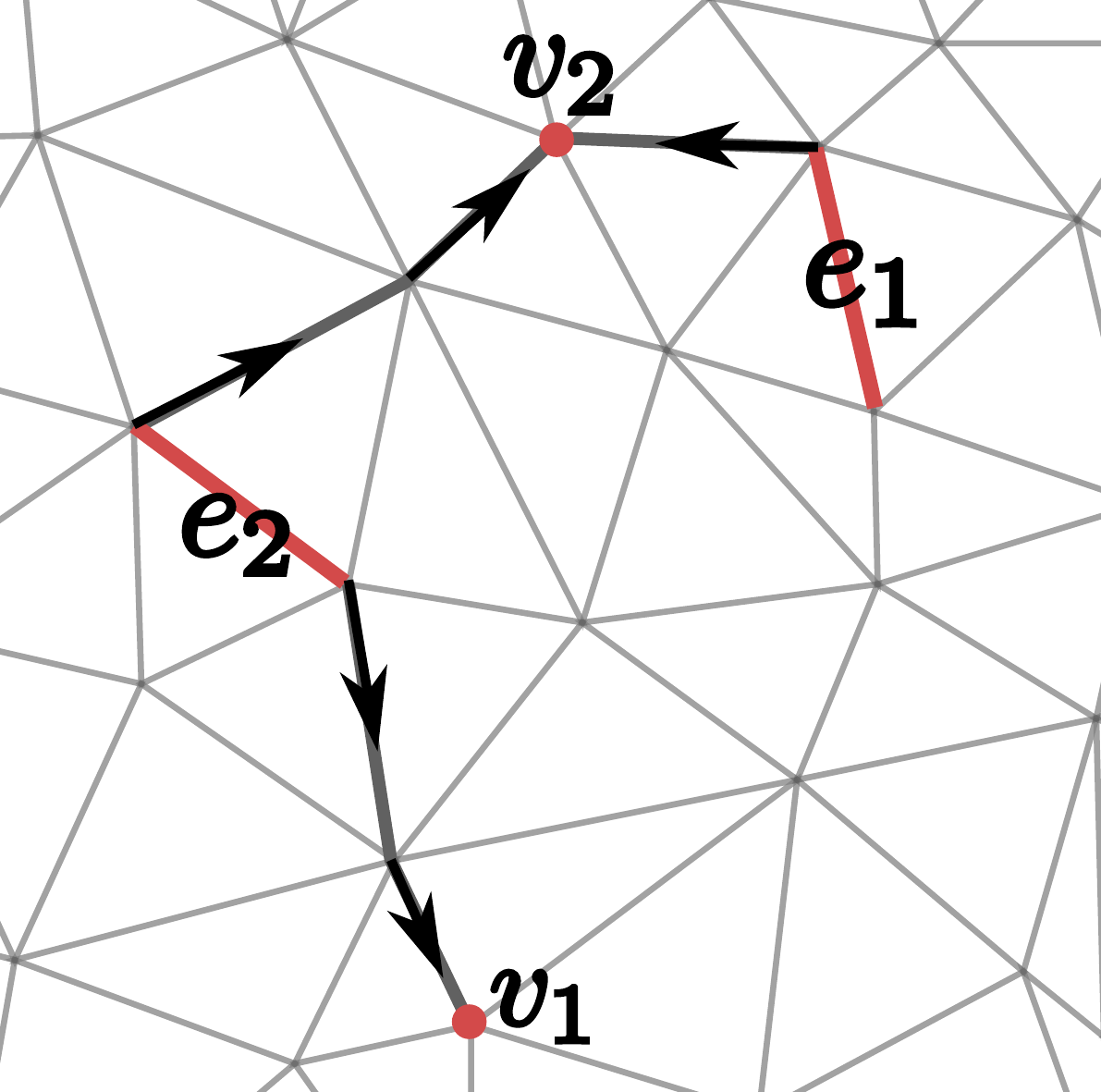}
        \caption{}
        \label{fig:bfcancel}
    \end{subfigure}
    \begin{subfigure}[b]{0.2\textwidth}
        \includegraphics[width=\textwidth]{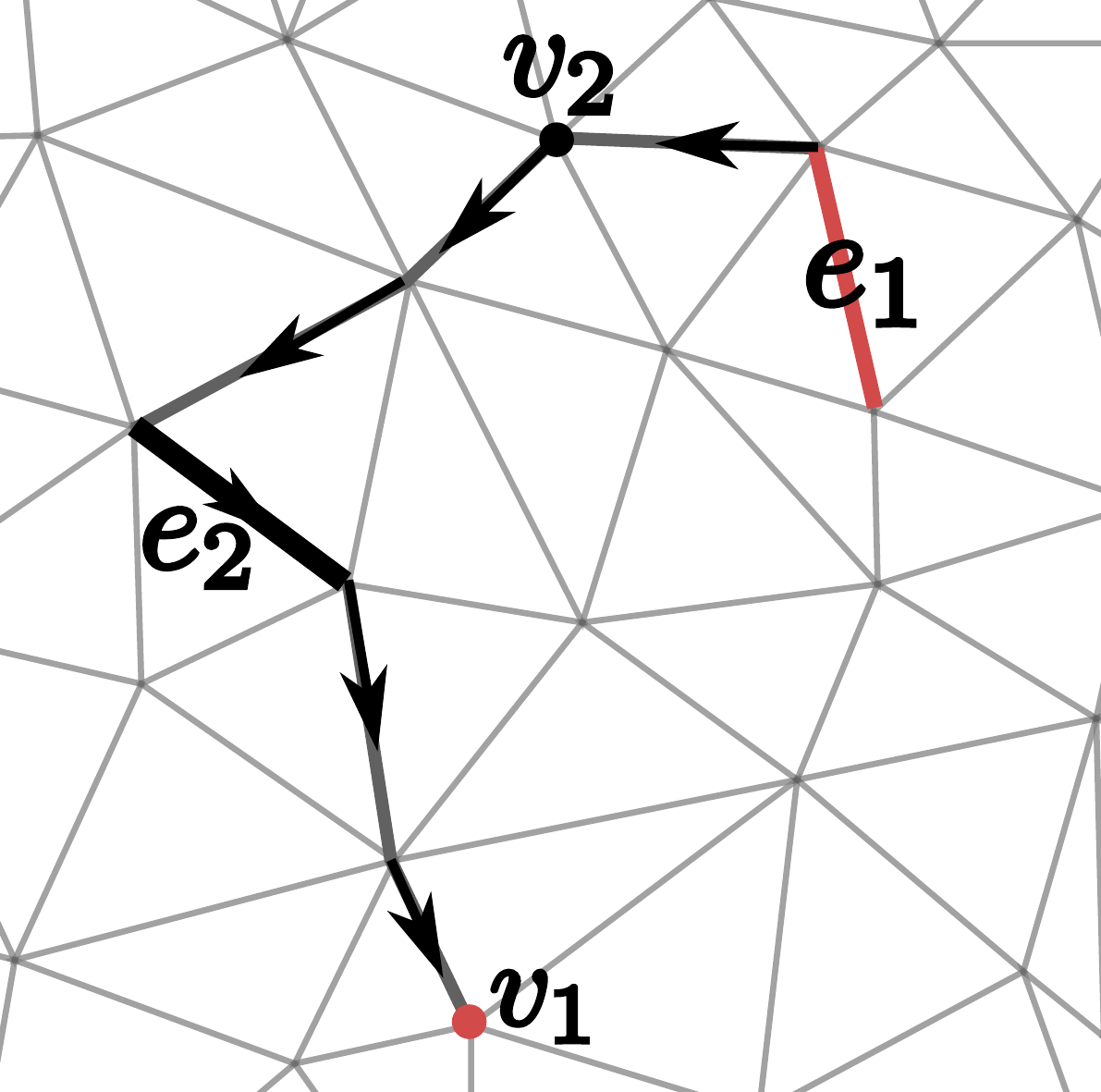}
        \caption{}
        \label{fig:afcancel}
    \end{subfigure}
    \begin{subfigure}[b]{0.2\textwidth}
         \includegraphics[width=\textwidth]{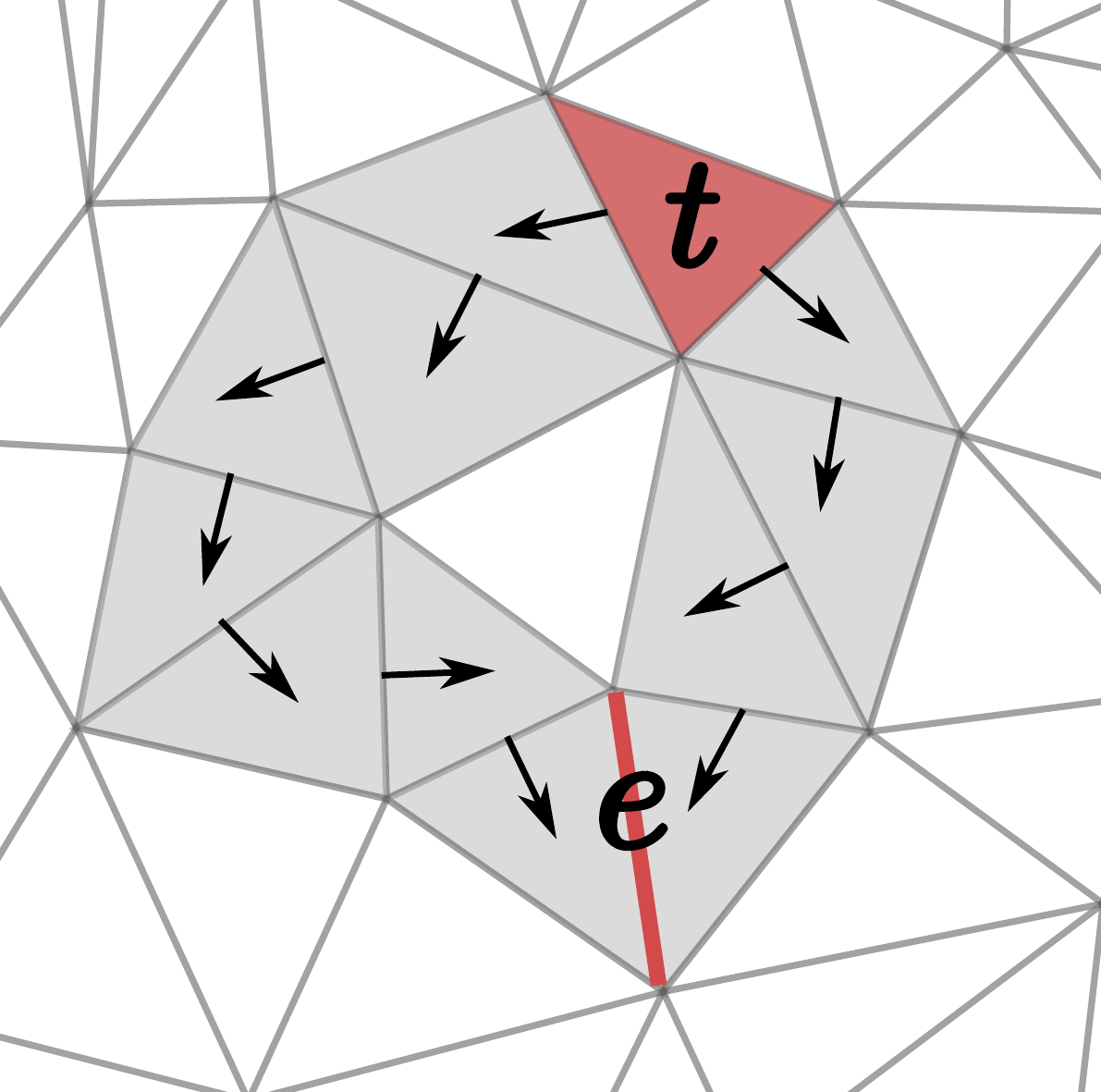}
        \caption{}
        \label{fig:cantCancel}
    \end{subfigure}

 \vspace*{-0.1in}   \caption{(a) $M_1$ and $M_2$ are maxima (red dots), $v_1$ and $v_2$ are minima (blue dots), $s$ is a saddle (green dots) with its stable manifolds flowing to it from $M_1$ and $M_2$. If we put a drop of water at $p'$ it will flow to $v_1$. If we put it on the other side of the mountain ridge it will flow to minimum $v_2$. 
(b) Before cancellation of pair $\langle v_2, e_2\rangle$. (c) After cancellation, the path from $e_2$ to $v_2$ is inverted, giving rise to a gradient path from $e_1$ to $v_1$, making $\langle v_1, e_1\rangle$ now potentially \validcancel. (d) An edge-triangle pair $\langle e, t\rangle$ which is not \validcancel{} as there are two gradient paths between them.
\label{fig:defs} }

\end{figure}

\subsection{Discrete Morse theory}
First we briefly describe some notions from discrete Morse theory (originally introduced by Forman \cite{forman}) in the simplicial setting. 

A $k$-simplex $\tau = \{p_0, \ldots, p_k\}$ is the convex hull of $k+1$ affinely independent points; 
$k$ is called the \emph{dimension} of $\tau$. A \emph{face $\sigma$ of $\tau$} is a simplex spanned by a proper subset of vertices of $\tau$; $\sigma$ is a \emph{facet} of the $k$-simplex $\tau$, denoted by $\sigma < \tau$, if its dimension is $k-1$. 

Suppose we are given a simplicial complex $\K$ which is simply a collection of simplices and all their faces so that if two simplices intersect, they do so in a common face. 
A \emph{discrete (gradient) vector} is a pair of simplices $(\sigma, \tau)$ such that $ \sigma < \tau$. 
A \emph{Morse pairing} in $\K$ is a collection of discrete vectors $M(\K)=\{(\sigma,\tau)\}$ where each simplex appears in \emph{at most} one pair; simplices that are not in any pair are called \emph{critical}. 

Given a Morse pairing $M(\K)$, a \emph{V-path} is a sequence 
$\tau_0,\sigma_1,\tau_1,\ldots,\sigma_\ell, \tau_\ell, \sigma_{\ell+1}, $
where $(\sigma_i, \tau_i)\in M(\K)$ for every $i=1,\ldots,\ell$, and each $\sigma_{i+1}$ is a facet of $\tau_i$ for each $i=0,\ldots,\ell$. If $\ell=0$, the V-path is \emph{trivial}. This V-path is \emph{cyclic} if $\ell> 0$ and $(\sigma_{\ell+1},\tau_0)\in M(\K)$; otherwise, it is \emph{acyclic} in which case we call this V-path a \emph{gradient path}.
We say that a gradient path is a vertex-edge gradient path if $\dimension(\sigma_i) = 0$, implying that $\dimension(\tau_i) = 1$. 
Similarly, it is a edge-triangle gradient path if $\dimension(\sigma_i) = 1$. 
A Morse pairing $M(\K)$ becomes a \emph{discrete gradient vector field} (or equivalently a \emph{gradient Morse pairing}) if there is no cyclic V-path induced by $M(\K)$. 

Intuitively, given a discrete gradient vector field $M(\K)$, a gradient path $\tau_0,\sigma_1,\ldots,\tau_\ell,\sigma_{\ell+1}$ is the analog of an integral line in the smooth setting. But different from the smooth setting, a maximal gradient path may not start or end at critical simplices. 
However, those that do (i.e, when $\tau_0$ and $\sigma_{k+1}$ are critical simplices) are analogous to maximal integral line in the smooth setting which ``start'' and ``end'' at critical points, and for convenience one can think of \emph{critical $k$-simplices} in the discrete Morse setting as \emph{index-$k$ critical points} in the smooth setting. For example, for a function on $\mathbb{R}^2$, critical 0-, 1- and 2-simplices in the discrete Morse setting correspond to minima, saddles and maxima in the smooth setting, respectively. 

For a critical edge $e$, we define its \emph{stable manifold} to be the union of edge-triangle gradient paths that ends at $e$.  
Its \emph{unstable  manifold} is defined to be the union of vertex-edge gradient paths that begins with $e$. 
While earlier we use ``mountain ridges'' (1-stable manifolds) to motivate the graph reconstruction framework, algorithmically 
(especially for the Morse cancellations below), vertex-edge gradient paths are simpler to handle. 
Hence in our algorithm below, we in fact consider the function {\bf $\afunc = -\rho$} (instead of the density field $\rho$ itself) and the algorithm outputs (a subset of) the \emph{1-unstable manifolds} (vertex-edge paths in the discrete setting) as the recovered hidden graph. 

\subparagraph*{Morse cancellation / simplification.}
One can simplify a discrete gradient vector field $M(\K)$ (i.e, reducing the number of critical simplices) by the following Morse cancellation operation: 
A pair of critical simplices $\langle \sigma, \tau \rangle$ with $\dimension(\tau) = \dimension(\sigma)+1$ is \emph{\validcancel}, 
if there is a \emph{unique} gradient path 
$ \tau = \tau_0,\sigma_1,\ldots,\tau_\ell, \sigma_{\ell+1} = \sigma$
starting at the $k+1$-simplex $\tau$ and ends at the $k$-simplex $\sigma$. 
The \emph{Morse cancellation operation on $\langle \sigma, \tau \rangle$} then modifies the vector field $M(\K)$ by removing all gradient vectors $(\sigma_i, \tau_i)$, for $i=1, \ldots, \ell$, while adding new gradient vectors $(\sigma_i, \tau_{i-1})$, for $i = 1, \ldots, \ell+1$. Intuitively, the gradient path is inverted. Note that $\tau = \tau_0$ and $\sigma=\sigma_{\ell+1}$ are no longer critical after the cancellation as they now participate in discrete gradient vectors. 
If there is no gradient path, or more than one gradient path between this pair of critical simplices $\langle \sigma, \tau \rangle$, then this pair is \emph{not \validcancel{}} -- 
the uniqueness condition is to ensure that no cyclic V-paths are formed after the cancellation operation. 
See Figure \ref{fig:defs} (b) -- (d) for examples.  

\subsection{Persistence pairing}
\label{subsec:persist}

The Morse cancellation can be applied to any sequence of critical simplices pairs as long as they are \validcancel{} at the time of cancellation. 
There is no canonical cancellation sequence. 
To cancel features corresponding to ``noise'' w.r.t. an input piecewise-linear function $f: |\K| \to \mathbb{R}$, a popular strategy is to guide the Morse cancellation by the persistent homology induced by the lower-star filtration \cite{GDN07,2011MNRAS}, which we introduce now. 

\subparagraph*{Filtrations and lower-star filtration.}
Given a simplicial complex $\K$, let $S$ be an ordered sequence $\sigma_1, \ldots, \sigma_N$ of all $n$ simplices in $\K$ so that for any simplex $\sigma_i \in \K$, all of its faces appear before it in $S$. 
Then $S$ induces a \emph{(simplex-wise) filtration $F(\K)$:}
$ K_1 \subset K_2 \subset \cdots \subset K_N = \K, $
where $K_i = \bigcup_{j\le i} \sigma_j$ is the subcomplex formed by the prefix $\sigma_1,\ldots, \sigma_i$ of $S$. 
Passing  to homology groups, we have a \emph{persistence module} $H_*(K_1) \to \cdots \to H_*(K_N)$, which has a unique decomposition into the direct sum of a set of indecomposable summands that can be represented by the set of \emph{persistence-pairing $P(\K)$ induced by $F(\K)$}: Each \emph{persistence pair} $(\sigma_i, \sigma_j) \in P(\K)$ indicates that a new $k$-th homological class, $k = \dimension(\sigma_i)$, is created at $K_i$ and destroyed at $K_j$; $\sigma_i$ is thus called a \emph{positive simplex} as it creates, and $\sigma_j$ a \emph{negative simplex}. 
Assuming that there is a \emph{simplex-wise function} $\myf: \K \to \mathbb{R}$ such that $\myf(\sigma_i) \le \myf(\sigma_j)$ if $i < j$, 
then the \emph{persistence} of the pair $(\sigma,\tau)$ is defined as $\pers(\sigma) = \pers(\tau) = \pers(\sigma, \tau) = \myf(\tau) - \myf(\sigma)$. 
Some simplices $\sigma_\ell$'s may be unpaired, meaning that homological features created at $K_\ell$ are never destroyed. 
We augment $P(\K)$ by adding $(\sigma_\ell, \infty)$ for every unpaired simplex $\sigma_\ell$ to it, and set $\pers(\sigma_\ell, \infty) = \infty$.  

The persistent homology can be defined for any filtration of $\K$. 
In our setting, there is an input function $f: V(\K) \to \mathbb{R}$ defined at the vertices $V(\K)$ of $\K$ whose linear extension leads to a piecewise-linear (PL) function still denoted by $f: |\K| \to \mathbb{R}$.  
To reflect topological features of $f$, we use the lower-star filtration of $\K$ induced by $f$: 
Specifically, for any vertex $v\in V(\K)$, its lower-star $\LowSt(v)$ is the set of simplicies containing $v$ where $v$ has the highest $f$ value among their vertices. 
Now sort vertices of $\K$ in non-decreasing order of their $f$-values: $v_1, \ldots, v_n$.
An ordered sequence $S = \langle \sigma_1, \ldots, \sigma_N\rangle$ induces a \emph{lower-star filtration $F_f(\K)$ of $\K$ w.r.t. $f$} if $S$ can be partitioned to $n$ consecutive pieces $\langle \sigma_1, \ldots, \sigma_{I_1} \rangle$, $\langle \sigma_{I_1+1}, \ldots, \sigma_{I_2}\rangle$, $\ldots, \langle \sigma_{I_{n-1}+1}, \ldots, \sigma_{N} \rangle$, such that the $i$-th piece $\langle \sigma_{I_{i-1}+1}, \ldots, \sigma_{I_i}\rangle$ equals $\LowSt(v_i)$. 

Now let $P_f(\K)$ be the resulting set of persistence pairs induced by the lower-star filtration $F_f(\K)$. 
Extend the function $f: V(\K) \to \mathbb{R}$ to a simplex-wise function $\myf: \K \to \mathbb{R}$ where $\myf(\sigma) = \max_{v\in \sigma} f(v)$ (i.e, $\myf(\sigma)$ is the highest f-value of any of its vertices). 
For each pair $(\sigma, \tau)$, we measure its persistence by $\pers(\sigma, \tau) = \myf(\tau) - \myf(\sigma)$. 
Every simplex in $\K$ contributes to a  persistence pair in $P_f(\K)$. 
However, assuming the value of $f$ is distinct on all vertices, then those persistence pairs with zero-persistence are ``trivial'' in the sense they correspond to the local pairing of two simplices from the lower-star of the same vertex. 
A persistence pair $(\sigma, \tau)$ \emph{with positive persistence} corresponds to a pair of (homological) critical points $(p, q)$  for the PL-function $f : |\K| \to \mathbb{R}$ \cite{EH10} induced by the function $f$ on $V(\K)$, with $p\in \sigma$ and $q\in \tau$. 

\section{Reconstruction algorithm}
\label{sec:alg}

\subparagraph*{Problem setup.}
Suppose we have a domain $\Idomain$ (which will be a cube in $\mathbb{R}^d$ in this paper) and a density function $\rho: \Idomain \to \mathbb{R}$ (that ``concentrates'' around a hidden geometric graph $G \subset \Idomain$). 
In the discrete setting, our input will be a triangulation $K$ of $\Idomain$ and a density function given as a PL-function $\rho: K \to \mathbb{R}$. 
Our goal is to compute a graph $\Ghat$ approximating the hidden graph $G$. 
In Algorithm \ref{alg:WWL15}, we first present a \emph{known} discrete Morse-based graph (1-skeleton) reconstruction framework, which is based on the approaches in \cite{GDN07,DRS15,2011MNRAS,WWL15}. 

Intuitively, we wish to use ``mountain ridges'' of the density field to approximate the hidden graph, which are computed as the 1-unstable manifolds of $\afunc = -\rho$, the negation of the density function. 
Specifically, after initializing the discrete gradient vector field $\GV$ to be a trivial one, a persistence-guided Morse cancellation step is performed in Procedure \SimpVecF() to compute a new discrete gradient vector field $\GV_\delta$ so as to capture only important (high persistent) features of $\afunc$. In particular, Morse-cancellation is performed for each pair of critical simplices from $\Perall(\K)$ (if possible) in increasing order of persistence values (for pairs with equal persistence, we use the nested order as in \cite{Bauer2012}). 
Finally, the union of the 1-unstable manifolds of all remaining high-persistence critical edges is taken as the output graph $\Ghat$, as outlined in Procedure \CollectOutput().  

Since we only need 1-unstable manifolds, $\K$ is assumed to be a $2$-complex. 
It is pointed out in \cite{newpaper} that in fact, instead of performing Morse-cancellation for all critical pairs involving edges (i.e, vertex-edge pairs and edge-triangle pairs), one only needs to cancel vertex-edge pairs -- This is because only vertex-edge gradient vectors will contribute to the 1-unstable manifolds, and also new vertex-edge vectors can only be generated while canceling other vertex-edge pairs.
Hence in \SimpVecF(), we can consider \emph{only vertex-edge pairs} $(\sigma, \tau)\in P$ in order. 
Furthermore, it is not necessary to check whether the cancellation is valid or not -- it will always be valid as long as the pairs are processed in increasing orders of persistence \cite{Bauer2012}\footnote{We remark that though \cite{Bauer2012} states that the cancellation is not valid in higher dimension or non-manifold 2-complexes, all cancellations in \SimpVecF() are for vertex-edge pairs in a spanning tree which can be viewed as a 1-complex, and thus are always valid.}.

However, we can further simplify the algorithm as follows: 
First, we replace procedure \SimpVecF() by procedure \SimpVecFVtwo() as shown in Algorithm \ref{alg:WWLSimp15}, which is much simpler both conceptually and implementation speaking. Note that there is {\it no explicit cancellation operation} any more. 

\RestyleAlgo{boxruled}
\LinesNumberedHidden
\begin{algorithm}[hptb]
\caption{\WWLAlg($K$,$\rho$, $\delta$) \label{alg:WWL15}}
\DontPrintSemicolon
\KwData{Triangulation $K$ of $\Idomain$, density function $\rho: K \to \mathbb{R}$, threshold $\delta$}
\KwResult{Reconstructed graph $\Ghat$}
\Begin{
\setcounter{algoline}{0}
\Numberline Compute persistence pairings $\Perall(\K)$ by the lower-star filtration of $K$ w.r.t $\afunc = -\rho$\;
\Numberline $\GV = $\SimpVecF($\Perall(\K), \delta$)\;
\Numberline$\Ghat$ = \CollectOutput($\GV$)\;
\Numberline \Return $\Ghat$
}
\SetKwProg{myproc}{Procedure}{}{}
\myproc{\SimpVecF($\Perall(\K), \delta$)}{
\setcounter{algoline}{0}
\Numberline Set initial discrete gradient field $\GV$ on $K$ to be trivial\;
\Numberline Rank all persistence pairs in $\Perall(\K)$ in increasing order of their persistence\;
\Numberline\For{each $(\sigma, \tau) \in \Perall(\K)$ with $\pers(\sigma, \tau)\le \delta$}{
\Numberline If possible, perform discrete-Morse cancellation of $(\sigma,\tau)$ and update the discrete gradient vector field $\GV$ \;}
\Numberline \Return $\GV$
}
\SetKwProg{myproc}{Procedure}{}{}
\myproc{\CollectOutput($\GV$)}{
\setcounter{algoline}{0}
\Numberline $\Ghat=\emptyset$\;
\Numberline\For{each remaining critical edge $e$ with $\pers(e)> \delta$}{
\Numberline$\Ghat = \Ghat\bigcup\{$1-unstable manifold of $e\}$\;}
\Numberline\Return $\Ghat$
}
\end{algorithm}

\RestyleAlgo{boxruled}
\LinesNumberedHidden
\begin{algorithm}[hptb]
\caption{\WWLAlgSimp($K$,$\rho$, $\delta$) \label{alg:WWLSimp15}}
\DontPrintSemicolon
\SetKwProg{myproc}{Procedure}{}{}
\myproc{\SimpVecFVtwo($\Perall(\K), \delta$) $/*$ This procedure replaces original \SimpVecF() $*/$ \label{alg:FVtwo}}{
\setcounter{algoline}{0}
\Numberline$\Pi :=$ the set of vertex-edge persistence pairs from $ \Perall(\K)$\;
\Numberline Set $\Pi_{\le \delta} \subseteq \Pi$ to be $\Pi_{\le \delta} = \{(v,e)\in \Pi \mid \pers(v,e) \le \delta\}$\;
\Numberline$\mytree := \bigcup_{(v, \sigma) \in \Pi_{\le \delta}} \{ \sigma=\langle u_1, u_2\rangle, u_1, u_2 \}$\;
\Numberline\Return $\mytree$\;

}
\SetKwProg{myproc}{Procedure}{}{}
\myproc{\CollectOutputVtwo($\mytree$) $/*$ This procedure replaces \CollectOutput() $*/$\label{alg:OutputVtwo}}{
\setcounter{algoline}{0}
\Numberline$\Ghat=\emptyset$\;
\Numberline\For{each edge $e=\langle u, v \rangle$ with $\pers(e) > \delta$}{
 \Numberline Let $\pi(u)$ be the unique path from $u$ to the sink of the tree $T_i$ containing $u$\; 
 \Numberline Define $\pi(v)$ similarly; ~~Set $\Ghat = \Ghat \cup \pi(u) \cup \pi(v) \cup \{e\}$ 
 }
\Numberline\Return $\Ghat$
}
\end{algorithm}

\vspace*{-0.15in}
The $1$-dimensional simplicial complex $\mytree$ output by procedure \SimpVecFVtwo() may have multiple connected components $\mytree = \{T_1, \ldots, T_k\}$ -- In fact, it is known that each $T_i$ is a tree and $\mytree$ is a forest (see results from 
\cite{attali2009persistence,Bauer2012} as summarized in Lemma \ref{lem:01cripairs} below). 
For each component $T_i$, we define its \emph{sink}, denoted by $\mysink(T_i)$, as the vertex $\myv_i \in T_i$ with the lowest function $\afunc = -\rho$ value. 
Lemma \ref{lem:01cripairs} also states that the sink of $T_i$ would have been the \emph{only critical simplex} among all simplices in $T_i$, if we had performed the $\delta$-simplification as specified in procedure \SimpVecF(). 
Next, we replace procedure \CollectOutput() by procedure \CollectOutputVtwo() shown in Algorithm \ref{alg:WWLSimp15}. 
We use \WWLAlgSimp() to denote our simplified version of Algorithm \ref{alg:WWL15} (with \SimpVecF() replaced by \SimpVecFVtwo(), and \CollectOutput() replaced by \CollectOutputVtwo(). 
In summary,
algorithm \WWLAlgSimp($K, \rho, \delta$) works by first computing all persistence pairs as before. It then collects all vertex-edge persistence pairs $(v,e)$ with $\pers(v,e)\leq\delta$. These edges along with the set of all vertices form a spanning forest $\mytree$. 
Then, for every edge $e=\langle u,v\rangle$ with $\pers(e)>\delta$, it outputs the 1-unstable manifold of $e$, which is simply the union of $e$ and the unique paths from $u$ and $v$ to the sink (root) of the tree containing them respectively.
Its time complexity is stated below; note for the previous algorithm \WWLAlg(), the cancellation step can take $\tilde{O}(n^2)$ time. 
\begin{theorem}
The time complexity of our Algorithm \SimpVecF() is $O(Pert(\K)+n)$,
where $PerT(\K)$ is the time to compute persistence pairings for $\K$, and $n$ is the total number of vertices and edges in $\K$. 
\end{theorem}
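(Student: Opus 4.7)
The plan is to bound each line of \SimpVecFVtwo() separately and observe that, once the persistence pairing is given, every remaining step is purely linear in $n$. By hypothesis, line 1 of \WWLAlgSimp() (computing $\Perall(\K)$ by the lower-star filtration w.r.t.\ $\afunc = -\rho$) costs $PerT(\K)$ time, so I only need to argue that the rest of the work in \SimpVecFVtwo() is $O(n)$.

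First, since every simplex of $\K$ appears in at most one pair of $\Perall(\K)$, the total number of persistence pairs is at most the number of simplices in the $2$-complex $\K$; restricting to vertex-edge pairs, $|\Pi| \le |V(\K)| + |E(\K)| = n$. Thus line 1 of \SimpVecFVtwo(), which scans $\Perall(\K)$ once and keeps the vertex-edge pairs, takes $O(|\Perall(\K)|) = O(n)$ time. Line 2, which filters $\Pi$ by a single persistence-threshold comparison per pair, is likewise $O(|\Pi|) = O(n)$. Line 3 builds the $1$-complex $\mytree$ by inserting, for each $(v, e = \langle u_1, u_2 \rangle) \in \Pi_{\le \delta}$, the edge $e$ together with its endpoints $u_1, u_2$; this is $O(1)$ per pair and therefore $O(|\Pi_{\le \delta}|) \le O(n)$ overall. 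Summing yields the claimed $O(PerT(\K) + n)$ bound.

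Finally, I would contrast this with the original \SimpVecF(): there, each Morse cancellation of a vertex-edge pair requires walking and inverting the unique $V$-path between the two critical simplices, and pairs processed later may have to traverse paths of length $\Theta(n)$, leading to the $\tilde O(n^2)$ cost mentioned in the text. The only genuine content of the theorem is that \SimpVecFVtwo() avoids this entirely by reading the forest $\mytree$ directly off the persistence pairs, so the ``hard part'' here is not the running-time count (which is a line-by-line bookkeeping exercise) but rather the correctness fact, already recorded in Lemma~\ref{lem:01cripairs}, that the edges picked out by $\Pi_{\le \delta}$ indeed form the same forest that the cancellation-based \SimpVecF() would produce. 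With that equivalence in hand, the time bound follows immediately from the line-by-line accounting above.
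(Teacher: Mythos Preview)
Your line-by-line accounting of \SimpVecFVtwo() is fine as far as it goes, but you have analyzed only half of the simplified algorithm. Despite the typo ``\SimpVecF()'' in the statement, the theorem is about the full \WWLAlgSimp(): the sentence immediately preceding it (``Its time complexity is stated below'') refers to \WWLAlgSimp(), and the paper's own justification---the remark right after the theorem---is entirely about the \emph{other} procedure, \CollectOutputVtwo(). That is where the only non-trivial $O(n)$ argument lives. For each critical edge $e=\langle u,v\rangle$ with $\pers(e)>\delta$ one traces the two root-paths $\pi(u),\pi(v)$ in $\mytree$; if these traces are done independently for every such $e$, the total work can be superlinear, since many critical edges may share a long common path to the same sink. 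The paper's point is that one must mark edges as visited and stop a trace as soon as it reaches an already-output edge; only with this bookkeeping does building $\Ghat$ cost $O(n)$ in total. Your proposal never mentions \CollectOutputVtwo(), so the bound you establish does not cover the step the paper singles out as the source of the $O(n)$ term.

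A smaller slip: you write $O(|\Perall(\K)|)=O(n)$ when extracting $\Pi$ in line~1. Since $n$ counts only vertices and edges while $\Perall(\K)$ also contains edge--triangle pairs, this equality need not hold for a $2$-complex with many triangles. It does no harm to the final bound, because the scan of $\Perall(\K)$ is dominated by $PerT(\K)$ anyway, but the equality as written is not justified.
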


We remark that the $O(n)$ term is contributed by the step collecting all $1$-unstable manifolds, which takes linear time if one avoids revisiting edges while tracing the paths.
\subparagraph*{Justification of the modified algorithm \WWLAlgSimp().}
Let $\GV_\delta$ denote the resulting discrete gradient field after canceling all \emph{vertex-edge} persistence pairs in $\Perall(\K)$ with persistence at most $\delta$; that is, $\GV_\delta$ is the output of the procedure \SimpVecF() (although we only compute the relevant part of the discrete gradient vector field). 
Using observations from \cite{attali2009persistence,Bauer2012},
we show that the output $\mytree$ of procedure \SimpVecFVtwo() includes all information of $\GV_\delta$. Furthermore, procedure \CollectOutputVtwo() computes the correct 1-unstable manifolds for all critical edges with persistence larger than $\delta$. 
Indeed, observe that edges in Morse pairings from $\GV_\delta$ (for any $\delta \ge 0$) form a spanning forest of edges in $\K$. 
Results of \cite{Bauer2012} imply that the output $\mytree$ constructed by our modified procedure corresponds exactly to this spanning forest:  

\begin{lemma}\label{lem:01cripairs}
The following statements hold for the output $\mytree$ of procedure \SimpVecFVtwo()
w.r.t any $\delta \ge 0$: 
\begin{itemize}
\item[(i)] $\mytree$ is a spanning forest consisting of potentially multiple trees $\{T_1, \ldots, T_k\}$. 
\item[(ii)] For each tree $T_i$, its sink $\myv_i$ is the only critical simplex in $\GV_\delta$. The collection of $\myv_i$s corresponds exactly to those vertices whose persistence is bigger than $\delta$. 
\item[(iii)] Any edge with $\pers(e) > \delta$ remains critical in $\GV_\delta$ (and cannot be contained in $\mytree$). 
\end{itemize}
\end{lemma}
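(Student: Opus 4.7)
The plan is to leverage the classical union-find (elder-rule) interpretation of $0$-dimensional persistent homology for the lower-star filtration of $\afunc=-\rho$: each vertex-edge pair $(v,e)\in\Pi$ corresponds to a negative edge $e$ whose insertion in filtration order merges two previously disjoint connected components, and $v$ is the root (minimum-$\afunc$ vertex) of the younger component. The set of all negative edges forms a spanning forest of the $1$-skeleton of $\K$, since each such edge reduces the component count by exactly one. For part (i), $\mytree$ is then the subforest obtained by restricting to pairs of persistence at most $\delta$, with vertices not incident to any such edge viewed as singleton trees, consistent with how \CollectOutputVtwo() uses $\mytree$. For part (iii), an edge with $\pers(e)>\delta$ is either positive (paired with a triangle, hence never in $\mytree$) or negative but excluded from $\Pi_{\le\delta}$; in either case \SimpVecF() never cancels it, so it remains critical in $\GV_\delta$.

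For part (ii), I would invoke the result of \cite{Bauer2012, attali2009persistence} that processing vertex-edge pairs in increasing persistence order produces a valid cancellation sequence — applicable here because all relevant gradient paths stay within the evolving spanning forest of the $1$-skeleton, as noted in the paper's footnote. Each cancellation of a pair $(v,e)\in\Pi_{\le\delta}$ converts $v$ and $e$ from critical to non-critical and leaves every other simplex's status unchanged, so after all cancellations the critical vertices in $\GV_\delta$ are exactly those vertices with persistence strictly greater than $\delta$. To match these with the sinks of the trees of $\mytree$, I would use the key claim that the persistence-paired vertex $v$ of every edge $e\in\mytree$ lies in the same tree $T_i$ of $\mytree$ as $e$. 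Given the claim, each tree $T_i$ with $m$ edges pairs $m$ of its $m+1$ vertices with distinct edges, and the elder rule identifies the single remaining unpaired vertex as the one with the smallest $\afunc$-value in $T_i$ — the sink $\myv_i$. Any isolated vertex not in $\mytree$ must have $\pers>\delta$ (otherwise it would be paired with an edge of $\mytree$ and would therefore be forced into the same tree).

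The main obstacle is establishing the key claim, which I would prove by tracing the union-find history of $v$'s component prior to the insertion of $e$. At every earlier merging that grew $v$'s component, $v$ remains the elder while the other component's root — which is younger, with $\afunc$-value strictly larger than $\afunc(v)$ — is paired with the corresponding edge. A direct computation shows that the persistence of any such earlier merging edge is strictly less than $\pers(v,e)\le\delta$, since its simplex-wise value is at most $\myf(e)$ while its younger root's $\afunc$-value strictly exceeds $\afunc(v)$. Hence every such earlier edge itself lies in $\mytree$, and these edges connect $v$ to the relevant endpoint of $e$, placing $v$ in the same tree $T_i$ as $e$ and completing the verification of (ii).
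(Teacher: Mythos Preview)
The paper does not give a proof of this lemma; it states the result immediately after the one-line observation that ``edges in Morse pairings from $\GV_\delta$ form a spanning forest'' and attributes everything to results in \cite{attali2009persistence,Bauer2012}. Your proposal is therefore far more explicit than the paper itself, and the overall route you take---the elder-rule description of $0$-dimensional persistence, the identification of $\mytree$ as a subforest of the negative-edge spanning forest, the observation that each Morse cancellation changes the critical/non-critical status of exactly its two simplices, and the counting argument that each tree has exactly one vertex not paired with one of its own edges---is correct and is essentially what one would distill from those references.

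There is, however, a small gap in your last paragraph. You restrict attention to merges that \emph{grew $v$'s component}, i.e., merges in which $v$'s then-current component was one of the two parties. The edges of those merges need not by themselves connect $v$ to the endpoint $a\in C_1$ of $e$: a component $C'$ absorbed at such a merge may itself have been assembled earlier from several pieces via negative edges that never touched $v$'s component, and a path from $v$ to $a$ inside $C_1$ may have to pass through those internal edges of $C'$. The repair is immediate and uses the very inequality you already wrote down: for \emph{every} negative edge $e'$ contained in $C_1$ (not only those merging directly into $v$'s component), its paired vertex $v'$ also lies in $C_1$, hence $\afunc(v')>\afunc(v)$ since $v$ is the minimum of $C_1$, and $\myf(e')<\myf(e)$ since $e'$ precedes $e$ in the filtration; thus $\pers(v',e')<\pers(v,e)\le\delta$. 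All negative edges of $C_1$ therefore lie in $\mytree$, and since they form a spanning tree of $C_1$ they connect $v$ to $a$, which is what the key claim needs.
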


\begin{wrapfigure}{r}{0.12\textheight}
\centering
\vspace*{-0.2in}\includegraphics[width=.12\textwidth]{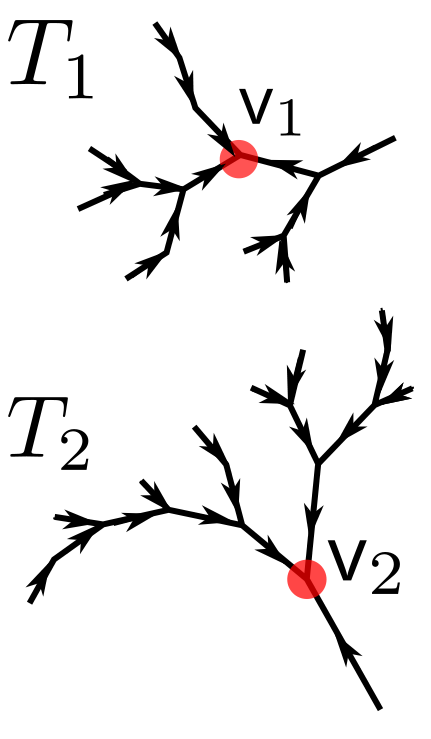}
\label{fig:tree}
\end{wrapfigure}
Note that, (ii) above implies that for each $T_i$, any discrete gradient path of $\GV_\delta$ in $T_i$ terminates at its sink $\myv_i$. See the right figure for an illustration. Hence for any vertex $v\in T_i$, the path $\pi(v)$ computed in procedure \CollectOutputVtwo() is the unique discrete gradient path starting at $v$. 
This immediately leads to the following result: 

\begin{corollary}\label{cor:output}
For each critical edge $e = \langle u, v \rangle$ with $\pers(e) \ge \delta$, $\pi(u) \cup \pi(v)\cup \{e\}$ as computed in procedure \CollectOutputVtwo() is the 1-unstable manifold of $e$ in $\GV_\delta$. 
Hence the output of our simplified algorithm \WWLAlgSimp() equals that of the original algorithm \WWLAlg(). 
\end{corollary}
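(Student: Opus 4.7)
The plan is to leverage Lemma \ref{lem:01cripairs} to show that the tree paths computed by procedure \CollectOutputVtwo{} coincide simplex-for-simplex with the vertex-edge gradient paths of $\GV_\delta$ emanating from $e$. Fix a critical edge $e = \langle u, v \rangle$ with $\pers(e) > \delta$. By Lemma \ref{lem:01cripairs}(iii), $e$ remains critical in $\GV_\delta$ and does not lie in $\mytree$. By definition, the $1$-unstable manifold of $e$ in $\GV_\delta$ is $\{e\}$ together with all vertex-edge gradient paths of $\GV_\delta$ that begin with $e$. Such a path must start either as $e, u, \ldots$ or $e, v, \ldots$ and then alternate Morse pairs with facet relations, terminating at a critical vertex.

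The heart of the argument is to understand the Morse pairings of $\GV_\delta$ inside a tree $T_i$ of $\mytree$. By Lemma \ref{lem:01cripairs}(i)--(ii), every edge of $T_i$ is non-critical in $\GV_\delta$, and every vertex of $T_i$ except the sink $\mysink(T_i) = \myv_i$ is non-critical. Since each non-critical simplex participates in exactly one Morse pair and the only vertex-edge pairs of $\GV_\delta$ are among edges of $\mytree$ and their incident vertices, each non-sink vertex $w \in T_i$ is paired with a unique incident edge $e_w \in T_i$. I would then show that, starting from any non-sink vertex $w$, the sequence $w, e_w, w_1, e_{w_1}, w_2, \ldots$ obtained by alternately taking the Morse partner and the opposite endpoint must terminate at $\myv_i$. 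It cannot cycle back, since $\GV_\delta$ is a discrete \emph{gradient} vector field and contains no cyclic V-paths; it cannot halt at any other non-sink vertex, since every such vertex is paired and hence has a next step; and since $T_i$ is a tree with a unique path between any two vertices, this discrete gradient path must equal the unique tree path from $w$ to $\myv_i$, which is precisely $\pi(w)$ as defined in \CollectOutputVtwo{} (with the degenerate case $\pi(w) = \{w\}$ when $w = \myv_i$ already a sink).

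Combining these observations, the two vertex-edge gradient paths of $\GV_\delta$ starting with $e$ at $u$ and at $v$ are exactly $\pi(u)$ and $\pi(v)$, so the $1$-unstable manifold of $e$ in $\GV_\delta$ equals $\pi(u) \cup \pi(v) \cup \{e\}$, proving the first assertion. The second assertion is then immediate: by Lemma \ref{lem:01cripairs}(iii) the sets of critical edges with persistence exceeding $\delta$ coincide for the two algorithms, and both procedures \CollectOutput{} and \CollectOutputVtwo{} union the $1$-unstable manifolds over exactly this set. The main obstacle is step two, namely the careful bookkeeping needed to argue that Morse pairings within $T_i$ orient it as a rooted tree with root $\myv_i$ so that the Morse-pair-following traversal genuinely reproduces the tree path; the acyclicity of V-paths in $\GV_\delta$ together with the uniqueness of the critical vertex in each $T_i$ is what makes this work.
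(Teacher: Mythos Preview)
Your proposal is correct and follows essentially the same approach as the paper. The paper's own argument is extremely terse---it observes in two sentences that Lemma~\ref{lem:01cripairs}(ii) forces every discrete gradient path in $T_i$ to terminate at the sink $\myv_i$, so $\pi(v)$ is the unique gradient path from $v$---whereas you spell out the mechanics (each non-sink vertex is paired with an incident tree edge, acyclicity prevents repetition, the tree structure forces the walk to coincide with the unique tree path). Both rely on the same facts from Lemma~\ref{lem:01cripairs} and the identification of $\mytree$ with the edges appearing in vertex-edge Morse pairs of $\GV_\delta$, so there is no substantive difference in strategy.
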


\section{Noise model}
\label{sec:noisemodel}
We first describe the noise model in the continuous setting where the domain is $\Idomain = [0,1]^d$. We then explain the setup in the discrete setting when the input is a triangulation $\K$ of $\Idomain$. 

\begin{wrapfigure}{r}{0.18\textheight}
\centering
\vspace*{-0.1in}\includegraphics[width=.2\textwidth]{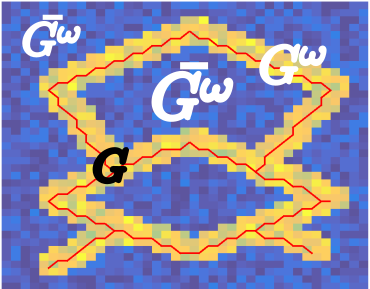}
\label{fig:noise-model}
\end{wrapfigure}
Given a connected ``true graph'' $\trueG \subset \Idomain$, consider a \emph{$\myw$-neighborhood} $\thickenG \subseteq \Idomain$, meaning that (i) $\trueG \subseteq \thickenG$, and (ii) for any $x\in \thickenG$, $d(x, \trueG) \le \myw$ 
(i.e, $\thickenG$ is sandwiched between $\trueG$ and its $\myw$-offset). 
Given $\thickenG$, we use $\redR$ to denote the closure of its complement $\redR = \mathrm{cl}(\Idomain \setminus \thickenG)$.
See the right figure, showing $G$ (red graph) with its $\myw$-neighborhood $\thickenG$ (orange). 

\begin{definition}
A density function $\rho: \Idomain \to \mathbb{R}$ is a \emph{$(\beta, \nu, \myw)$-approximation of a connected graph $\trueG$} if the following holds: 
\begin{itemize}
\item[C-1] There is a $\myw$-neighborhood  $\thickenG$ of $\trueG$ such that $\thickenG$ deformation retracts to $\trueG$. 
\item[C-2] $\rho(x) \in [\beta,\beta+\nu]$ for $x\in \thickenG$; and $\rho(x) \in [0, \nu]$ otherwise. 
Furthermore, $\beta > 2\nu$. 
\end{itemize}
\end{definition}

Intuitively, this noise model requires that the density $\rho$ concentrates around the true graph $\trueG$ in the sense that the density is significantly higher inside $\thickenG$ than outside it; and  
the density fluctuation inside or outside $\thickenG$ is small compared to the density value in $\thickenG$ (condition C-2). 
Condition C-1 says that the neighborhood has the same topology of the hidden graph. 
Such a density field could for example be generated as follows: 
Imagine that there is an ideal density field $f_G: \Idomain \to \mathbb{R}$ where $f_G(x) = \beta$ for $x\in \thickenG$ and $0$ otherwise. 
There is a noisy perturbation $g: \Idomain \to \mathbb{R}$ whose size is always bounded by $g(x) \in [0, \nu]$ for any $x\in \Idomain$. 
The observed density field $\rho = f_G + g$ is an $(\beta, \nu, \myw)$-approximation of $\trueG$. 

In the discrete setting when we have a triangulation $\K$ of $\Idomain$, we define a $\myw$-neighborhood $\thickenG$ to be a subcomplex of $\K$, i.e, $\thickenG \subseteq \K$, such that (i) $\trueG$ is contained in the underlying space of $\thickenG$ and (ii) for any vertex $v \in V(\thickenG)$, $d(v, \trueG) \le \myw$. 
The \redregion{} $\redR \subseteq \K$ is simply the smallest subcomplex of $\K$ that contains all simplices from $\K \setminus \thickenG$ (i.e, all simplices {\bf not} in $\thickenG$ and their faces). A PL-function $\rho: \K \to \mathbb{R}$ $(\beta, \nu, \myw)$-approximation of $\trueG$ can be extended to this setting by requiring the underlying space of $\thickenG$ deformation retracts to $\trueG$ as in (C-1), and having those density conditions in (C-2) only at vertices of $\K$. 

We remark that the noise model is still limited -- In particular, it does not allow significant non-uniform density distribution. However, this is the first time that theoretical guarantees are provided for a discrete Morse based reconstruction framework, despite that such a framework has been used for different applications before. We also give experiments and discussions in Appendix \ref{appendix:sec:exp} that the algorithm works beyond this noise model empirical, where thresholding type approaches do not work. 

\section{Theoretical guarantee}
\label{sec:general}
In this section, we prove results that are applicable to any dimension. 
Recall that $\GV_\delta$ is the discrete gradient field after the \emph{$\delta$-Morse cancellation process}, where we perform Morse-cancellation for all \emph{vertex-edge} persistence pairs from $\Perall(K)$. (While our algorithm does not maintain $\GV_\delta$ explicitly, we use it for theoretical analysis.) 
At this point, all positive edges (i.e, those paired with triangles or unpaired in $\Perall(K)$) remain critical in $\GV_\delta$.
Some negative edges (i.e, those paired with vertices in $\Perall(K)$) are also critical in $\GV_\delta$ -- these are exactly the negative edges with persistence bigger than $\delta$. 
\CollectOutputVtwo() only takes the 1-unstable manifolds of those critical edges (positive or negative) with persistence bigger than $\delta$; so those positive edges whose persistence is $\le \delta$ (if there is any) are ignored. 

From now on, we use ``\emph{under our noise model}'' to refer to (1)  the input is a ($\beta, \nu, \myw$)-approximated density field w.r.t. $\trueG$, and (2) $\delta \in [\nu, \beta-\nu)$. Let $\Ghat$ be the output of algorithm \WWLAlgSimp($\K, \rho, \delta$).
The proof of the following result is in 
Appendix \ref{appendix:prop:crit-edge-vertex}. 

\begin{proposition} Under our noise model, we have: 
\begin{itemize}
\item[(i)] There is a single critical vertex left after \SimpVecF() which is in $\thickenG$.
\item[(ii)] Every critical edge considered by \CollectOutputVtwo() {forms a persistence pair}  with a triangle. 
\item[(iii)] Every critical edge considered by \CollectOutputVtwo() is in $\thickenG$. 
\end{itemize}
\label{prop:crit-edge-vertex}
\end{proposition}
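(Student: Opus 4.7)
The plan is to exploit the large gap $\beta - \nu > \delta$ between $\afunc = -\rho$ values inside versus outside $\thickenG$: condition (C-2) together with $\beta > 2\nu$ forces $\afunc(v) \le -\beta$ for every $v \in V(\thickenG)$ and $\afunc(v) \ge -\nu$ for every $v \in V(\K) \setminus V(\thickenG)$, so in the lower-star filtration of $\afunc$ every inside vertex is processed strictly before every outside vertex. All three claims will fall out of this gap together with elementary persistence bookkeeping.

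For part (i), I would first analyze the portion of the filtration restricted to $V(\thickenG)$. Every edge added during this phase has both endpoints in $V(\thickenG)$ and thus $\afunc$-value in $[-\beta-\nu, -\beta]$; any vertex-edge merging pair $(v, e)$ has both entries in this range, giving $\pers(v, e) \le \nu \le \delta$. Since $\thickenG$ deformation retracts to the connected graph $\trueG$, its $1$-skeleton is connected, so once the last vertex of $V(\thickenG)$ has been processed all of $V(\thickenG)$ lies in a single component, born at some $v^* \in V(\thickenG)$. During the subsequent processing of $V(\K) \setminus V(\thickenG)$, the only component with an ``inside'' birth vertex is the $v^*$-component, which is older than any outside-born component; hence any later merge kills an outside-born component, producing a pair $(y, e)$ with both $\afunc(y)$ and $\afunc(e)$ in $[-\nu, 0]$, again of persistence at most $\nu \le \delta$. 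Since $\K$ is connected, exactly one vertex remains unpaired, and it must be $v^* \in V(\thickenG) \subseteq \thickenG$, so \SimpVecF() cancels every vertex-edge pair and leaves only this critical vertex.

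Parts (ii) and (iii) then follow from the gap together with part (i). For (ii), because $\K$ triangulates a contractible cube (and is assumed to be a $2$-complex), $H_1(|\K|) = 0$ and every edge is paired in the lower-star filtration with either a vertex or a triangle; by (i) every vertex-edge pair has persistence at most $\delta$, so any edge considered by \CollectOutputVtwo() (which requires $\pers(e) > \delta$) must be paired with a triangle. For (iii), let $(e, t)$ be the triangle pair guaranteed by (ii). If some endpoint of $e$ were outside $V(\thickenG)$, then $\afunc(e) \ge -\nu$, hence $\afunc(t) = \afunc(e) + \pers(e, t) > -\nu + \delta \ge 0$, contradicting $\afunc(t) = -\min_{v \in t} \rho(v) \le 0$; so both endpoints of $e$ lie in $V(\thickenG)$ and $e \in \thickenG$. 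The only real obstacle is the careful case analysis in part (i) for merges during the outside-vertex phase, where one must track which component's birth vertex is older to be sure no vertex-edge pair can escape the $\nu$-persistence bound; once the $\beta - \nu > \delta$ gap is in hand, parts (ii) and (iii) are essentially one-line computations.
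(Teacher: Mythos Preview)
Your proof is correct and follows essentially the same strategy as the paper: both exploit the $(\beta-\nu)$-gap between $\afunc$-values inside and outside $\thickenG$ to show that every vertex--edge persistence pair has persistence at most $\nu \le \delta$, and then obtain (ii) and (iii) by short persistence/filtration computations. The only difference is organizational: for (i) you run a direct two-phase (inside-then-outside) sweep of the lower-star filtration, whereas the paper argues by contradiction separately for a hypothetical critical vertex outside $\thickenG$ and a hypothetical second critical vertex inside $\thickenG$; and for (iii) you bound $\myf(t)$ against $0$ using $\rho \ge 0$, while the paper does a location case-split on $e$ and $t$ and uses the filtration order $\myf(t) \ge \myf(e)$ --- but the underlying mechanism is the same in each case.
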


\begin{theorem}
Under our noise model, the output graph satisfies $\hat{G} \subseteq \thickenG$.
\label{thm:contain}
\end{theorem}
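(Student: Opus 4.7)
The proof has two main steps. First, I invoke Proposition~\ref{prop:crit-edge-vertex} and Corollary~\ref{cor:output} to reduce to the following claim: for every vertex $w \in V(\thickenG)$, the unique gradient V-path $\pi(w)$ from $w$ to the tree sink consists only of simplices in $\thickenG$. Indeed, by Corollary~\ref{cor:output} the output equals the union, over critical edges $e^* = \langle u, v\rangle$ with $\pers(e^*) > \delta$, of $\pi(u) \cup \pi(v) \cup \{e^*\}$. Proposition~\ref{prop:crit-edge-vertex}(iii) gives $e^* \in \thickenG$, so $u, v \in V(\thickenG)$, and Proposition~\ref{prop:crit-edge-vertex}(i) says the unique sink $\myv^*$ lies in $V(\thickenG)$; thus the claim implies $\Ghat \subseteq \thickenG$.

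Second, I prove the claim by induction along $\pi(w) = w_0, e_0, w_1, \ldots, w_k = \myv^*$, relying on the following key value bound: \emph{for every gradient pair $(x, e)$ in $\GV_\delta$, $\myf(e) - \myf(x) \le \delta$.} Assuming the bound, if $w_j \in V(\thickenG)$ then $\myf(w_j) = \afunc(w_j) = -\rho(w_j) \le -\beta$ by condition C-2 of the noise model; consequently $\myf(e_j) \le \myf(w_j) + \delta \le -\beta + \delta < -\nu$, where the last inequality uses $\delta < \beta - \nu$. Since $\myf(e_j) = \max_{x \in e_j} \afunc(x)$, the other endpoint $w_{j+1}$ of $e_j$ also satisfies $\afunc(w_{j+1}) < -\nu$, i.e.\ $\rho(w_{j+1}) > \nu$, which by C-2 forces $w_{j+1} \in V(\thickenG)$. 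The base case $w_0 = w \in V(\thickenG)$ is immediate, so induction yields $w_j \in V(\thickenG)$ for all $j$; the edge $e_j$, having both endpoints in $V(\thickenG)$ and $\myf(e_j) < -\nu$, then lies in $\thickenG$ as a subcomplex.

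The main obstacle is establishing the value bound. I would derive it by tracking the persistence-guided cancellation in \SimpVecF(). When a pair $(v', e')$ with persistence $p \le \delta$ is cancelled, the simplices along the unique V-path from $e'$ to $v'$ in the current gradient field all have $\myf$-values in $[\myf(v'), \myf(e')] = [\myf(v'), \myf(v')+p]$: this V-path traces exactly the connected component that $v'$ creates at filtration level $\myf(v')$ and that $e'$ destroys at level $\myf(e')$ in the lower-star filtration, an observation implicit in the analysis of~\cite{Bauer2012}. Inverting the V-path only reshuffles pair assignments among these simplices, so each new gradient pair $(x, e)$ still has $\myf(e), \myf(x) \in [\myf(v'), \myf(v')+p]$, hence $\myf(e) - \myf(x) \le p \le \delta$. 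Gradient pairs never touched by any cancellation are the trivial zero-persistence pairs confined to a single lower-star, for which $\myf(e) = \myf(x)$ by construction. Thus every gradient pair in $\GV_\delta$ obeys the bound, and the theorem follows. The whole argument is ultimately powered by the density gap $\beta - \nu > \delta$ of the noise model, which forbids any $\GV_\delta$-gradient pair from straddling the boundary of $\thickenG$.
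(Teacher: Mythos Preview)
Your route is genuinely different from the paper's, and the reduction in your first paragraph is fine. The paper does not attempt any value bound on gradient pairs; instead it proves directly (Claim~\ref{lemma:nocrossing}) that no \emph{crossing} gradient vector $(v,e)$ with $v\in\thickenG$, $e\notin\thickenG$ is ever created during the $\delta$-cancellation, via a short first-offender argument: if $(v,e)$ were the first crossing pair, it arose from inverting a path between some $(v',e')$ with $\pers\le\delta$, forcing $v',e'$ to lie on the same side of the boundary; tracing that path then exhibits an earlier crossing pair, a contradiction. This uses only the subcomplex property of $\thickenG$ and the gap $\beta-\nu>\delta$, and gives $e_j\in\thickenG$ (not merely its endpoints) in one stroke.

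Your argument, by contrast, hinges on the assertion that the V-path from $e'$ to $v'$ at the moment of cancellation has all $\myf$-values in $[\myf(v'),\myf(e')]$, justified by saying it ``traces exactly the connected component that $v'$ creates and $e'$ destroys.'' This is the gap. The tree $T_{v'}$ rooted at $v'$ in the current forest need \emph{not} coincide with that filtration component: earlier cancellations of low-persistence but high-value pairs (e.g.\ a vertex $a$ with $\myf(a)\gg\myf(e')$ paired with an incident edge of persistence $0$) can attach simplices of value far above $\myf(e')$ to $T_{v'}$. So you cannot simply invoke the component description; you would need a separate argument that the \emph{specific} tree-path from the endpoint of $e'$ down to $v'$ avoids all such high-value branches. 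That may well be true, but it is not immediate and is not, to my knowledge, stated in \cite{Bauer2012}; it needs its own induction, which you have not supplied.

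There is also a smaller issue at the end of your induction: from $\myf(e_j)<-\nu$ you conclude both endpoints lie in $V(\thickenG)$, but then assert $e_j\in\thickenG$. The noise model only declares $\thickenG$ to be \emph{a} subcomplex, not the full subcomplex on $V(\thickenG)$, so an edge with both endpoints in $V(\thickenG)$ need not itself belong to $\thickenG$. The paper's crossing argument avoids this because it tracks membership of the edge in $\thickenG$ directly rather than inferring it from vertex values.

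In short: your value-bound strategy is an interesting and potentially more general statement about $\GV_\delta$, but as written the key step is asserted rather than proved, and the final conclusion about edge membership needs an extra hypothesis. The paper's Claim~\ref{lemma:nocrossing} gets the theorem with less machinery.
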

\begin{proof}
Recall that the output graph $\Ghat$ consists of the union of 1-unstable manifolds of all the edges $e^*_1, \ldots, e^*_g$ with persistence larger than $\delta$ -- By Propositions~\ref{prop:crit-edge-vertex} (ii) and (iii), they are all positive (paired with triangles), and contained inside $\thickenG$. 

Take any $i \in [1, g]$ and consider $e^*_i = \langle u, v \rangle$. Without loss of generality, consider the gradient path starting from $u$: 
$\pi: u = u_1, e_1, u_2, e_2, \ldots, u_s, e_s, u_{s+1}. $
By Lemma \ref{lem:01cripairs} and Proposition \ref{prop:crit-edge-vertex}, $u_{s+1}$ must be a critical vertex (a sink) and is necessarily the global minimum $v_0$, which is also contained inside $\thickenG$. 
We now argue that the entire path $\pi$ (i.e, all simplices in it) is contained inside $\thickenG$. 
In fact, we argue a stronger statement: First, we say that a gradient vector $(v, e)$  is \emph{crossing} if $v\in \thickenG$ and $e \notin \thickenG$ (i.e, $e\in \redR$) -- Since $v$ is an endpoint of $e$, this means that the other endpoint of $e$ must lie in $\K \setminus \thickenG$. 

\begin{claim}\label{lemma:nocrossing}
During the $\delta$-Morse cancellation, no crossing gradient vector is ever produced. 
\end{claim}
\begin{proof}
Suppose the lemma is not true: Then let $(v, e)$ be the \emph{first} crossing gradient vector ever produced during the $\delta$-Morse cancellation process. 
Since we start with a trivial discrete gradient vector field, the creation of $(v,e)$ can only be caused by reversing of some gradient path $\pi'$ connecting two critical simplices $v'$ and $e'$ while we are performing Morse-cancellation for the persistence pair $(v', e')$. 
Obviously, $\pers(v', e') \le\delta$. 
On the other hand, due to our $(\beta,\nu,\myw)$-noise model and the choice of $\delta$, it must be that either both $v', e' \in \thickenG$ or both $v', e' \in \K \setminus \thickenG$ -- as otherwise, the persistence of this pair will be larger than $\beta - \nu > \delta$. 

Now consider this gradient path $\pi'$ connecting $v'$ and $e'$ in the current discrete gradient vector field $\GV'$. Since the pair $(v, e)$ becomes a gradient vector after the inversion of this path, it must be that $(w,e)$
currently is a gradient vector where $e = \langle v, w \rangle$. Furthermore, since the path $\pi'$ begins and ends with simplices either both in $\thickenG$ or both outside it,  the path $\pi'$ must contain a gradient vector $(v'', e'')$ going in the opposite direction crossing inside/outside, that is, $v'' \in \thickenG$ and $e'' \notin \thickenG$. In other words, it must contain a crossing gradient vector. This however contradicts to our assumption that $(v,e)$ would be the first crossing gradient vector. Hence the assumption is wrong and no crossing gradient vector can ever be created. 
\end{proof}

\noindent As there is no crossing gradient vector during and after $\delta$-Morse cancellation, it follows that $\pi$, which is one piece of the 1-unstable manifold of the critical edge $e^*_i$, has to be contained inside $\thickenG$. The same argument works for the other piece of $1$-unstable manifold of $e^*_i$ (starting from the other endpoint of $e^*_i$). Since this is for any $i\in [1,g]$, the theorem holds. 
\end{proof}

The previous theorem shows that $\Ghat$ is close to $G$ in geometry. Next we will show that they are also close in topology. 

\begin{proposition}
Under our noise model, $\Ghat$ is homotopy equivalent to $G$.
\label{prop:same-genus}
\end{proposition}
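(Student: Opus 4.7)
The plan is to show that both $\trueG$ and $\Ghat$ are connected $1$-complexes with the same first Betti number; since any connected graph is homotopy equivalent to a wedge of circles, this forces $\Ghat \simeq \trueG$. Write $g := \beta_1(\trueG)$.

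First, I would show $\Ghat$ is connected. By Proposition~\ref{prop:crit-edge-vertex}(i), the global minimum $v_0$ is the unique critical vertex of $\GV_\delta$, so by Lemma~\ref{lem:01cripairs} $\mytree$ is a single spanning tree rooted at $v_0$, and every vertex-edge gradient path terminates at $v_0$. Since $\Ghat$ is the union, over critical edges $e^*_i$ with $\pers(e^*_i) > \delta$, of $e^*_i$ together with the gradient paths $\pi(u_i), \pi(v_i)$ from its endpoints to $v_0$, every component of $\Ghat$ contains $v_0$, so $\Ghat$ is connected.

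Second, I would count the critical edges via persistence. Under our noise model, every persistence pair $(\sigma,\tau)$ with $\pers(\sigma,\tau) > \delta$ is a ``mixed'' pair: $\myf(\sigma) \le -\beta$ and $\myf(\tau) \ge -\nu$. Picking any $\alpha \in (-\beta,-\nu)$, the sublevel subcomplex $\K_\alpha := \{\sigma \in \K : \myf(\sigma) \le \alpha\}$ coincides with the subcomplex $\thickenG$, because no simplex has $\myf$-value in the interval $(-\beta,-\nu)$ under the noise model. By condition (C-1) we then get $\beta_0(\K_\alpha) = 1$ and $\beta_1(\K_\alpha) = g$; and since $\K$ triangulates a cube, $\beta_0(\K) = 1$ and $\beta_1(\K) = 0$. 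Using the standard identity $\beta_k(\K_\alpha) = \#\{(\sigma,\tau) \in P_{\afunc}(\K) : \dim\sigma = k,\ \myf(\sigma) \le \alpha < \myf(\tau)\}$ (with unpaired simplices assigned death $+\infty$), I deduce that there are $0$ vertex-edge pairs and exactly $g$ edge-triangle pairs of persistence $>\delta$. Combined with Proposition~\ref{prop:crit-edge-vertex}(ii), the critical edges processed by \CollectOutputVtwo() are precisely these $g$ positive edges.

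Third, I would compute $\beta_1(\Ghat) = g$. Let $T := \bigcup_{i=1}^g (\pi(u_i) \cup \pi(v_i))$, which is a subtree of $\mytree$ since all its paths share the sink $v_0$. Then $\Ghat = T \cup \{e^*_1,\ldots,e^*_g\}$ and both endpoints of each $e^*_i$ lie in $T$, so attaching the $g$ critical edges to the tree $T$ yields $g$ independent loops, giving $\beta_1(\Ghat) = g$. Thus $\Ghat$ and $\trueG$ are connected graphs with matching first Betti number, each homotopy equivalent to $\bigvee^{g} S^1$, and so $\Ghat \simeq \trueG$.

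The main obstacle is Step~2: identifying $\K_\alpha$ with the subcomplex $\thickenG$ and translating the persistence-pair count into Betti numbers. This hinges on verifying that the noise model creates a clean ``gap'' in $\myf$-values between simplices inside and outside $\thickenG$, so that the lower-star filtration pauses between the two phases; a mild wrinkle is handling ties in $\myf$, which can be resolved via any consistent simplex-wise refinement such as the nested-order tie-breaking already used by the algorithm.
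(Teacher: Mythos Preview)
Your proposal is correct and follows essentially the same route as the paper: identify $\thickenG$ with a sublevel set of the lower-star filtration, use $\beta_1(\thickenG)=g$ to conclude there are exactly $g$ edges of persistence $>\delta$ (all positive), and then observe that adjoining these $g$ edges to a tree yields $\beta_1(\Ghat)=g$. The paper's write-up is terser and leaves connectedness of $\Ghat$ and the persistence--Betti count implicit, whereas you spell both out via Lemma~\ref{lem:01cripairs}, Proposition~\ref{prop:crit-edge-vertex}(i), and the standard barcode identity; but the substance of the argument is the same.
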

\begin{proof}
We show that $\Ghat$ has the same first Betti number as that of $G$ which implies the claim as any two graphs in $\mathbb{R}^d$ with the same first Betti number are homotopy equivalent.

The underlying space of $\omega$-neighborhood $\thickenG$ of $G$ deformation retracts to $G$ by definition. Observe that, by our noise model, $\thickenG$ is a sublevel set in the filtration that determines the persistence pairs. This sublevel set being homotopy equivalent to $G$ must contain exactly $g$ positive edges where $g$ is the first Betti number of $G$. Each of these positive edges pairs with a triangle in $\overline{\thickenG}$. Therefore, $\pers(e,t)>\delta$ for each of the $g$ positive edges in $\thickenG$. By our earlier results, these are exactly the edges that will be considered by procedure \CollectOutputVtwo(). Our algorithm constructs $\Ghat$ by adding these $g$ positive edges to the spanning tree each of which adds a new cycle. Thus, $\Ghat$ has first Betti number $g$.
\end{proof}

We have already proved that $\Ghat$ is contained in $\thickenG$. This fact along with Proposition~\ref{prop:same-genus} can be used to argue that any deformation retraction taking (underlying space) $\thickenG$ to $G$ also takes $\Ghat$ to a subset $G'\subseteq G$ where $G'$ and $G$ have the same first Betti number. In what follows, we use $\thickenG$ to denote also its underlying space.

\begin{theorem}
Let $F: \thickenG \times [0,1]\rightarrow \thickenG$ be any deformation retraction. Then, the restriction $F|_{\Ghat}: \Ghat \times [0,1]\rightarrow \thickenG$  is a homotopy from the embedding $\Ghat$ to $G'\subseteq G$ where $G'$ is the minimal subset so that $G$ and $G'$ have the same first Betti number.
\end{theorem}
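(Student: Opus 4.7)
The plan is to combine Theorem~\ref{thm:contain}, which gives $\Ghat \subseteq \thickenG$, with the properties of the deformation retraction $F$. Since $\Ghat$ lies inside $\thickenG$, the restriction $F|_{\Ghat} : \Ghat \times [0,1] \to \thickenG$ is well-defined and continuous. At $t = 0$ it coincides with the embedding $\Ghat \hookrightarrow \thickenG$, and at $t = 1$ its image lies inside $G$ because $F(\cdot, 1) : \thickenG \to G$ is a retraction. Setting $G' := F(\Ghat, 1)$, this already shows that $F|_{\Ghat}$ is a homotopy from the embedding of $\Ghat$ into $\thickenG$ to the map $r := F(\cdot,1)|_{\Ghat} : \Ghat \to G' \subseteq G$, which handles the structural part of the statement.

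The heart of the proof is to show that $G'$ has first Betti number equal to that of $G$. Write $\iota : G \hookrightarrow \thickenG$ and $i : \Ghat \hookrightarrow \thickenG$ for the inclusions. Because $F|_{\Ghat}$ realizes a homotopy $i \simeq \iota \circ r$, on first homology we obtain $i_* = \iota_* \circ r_*$. Now $\iota_*$ is an isomorphism since $\iota$ is a deformation retract, and by Proposition~\ref{prop:same-genus} each of $H_1(\Ghat)$, $H_1(\thickenG)$, $H_1(G)$ is free abelian of rank $g$. The key step is to show that $i_*$ is itself an isomorphism. I would argue this by revisiting the construction in the proof of Proposition~\ref{prop:same-genus}: $\Ghat$ is formed by attaching the $g$ positive critical edges $e_1^*,\ldots,e_g^*$ to a spanning forest, and each $e_i^*$ lies inside $\thickenG$ but pairs (in the persistence filtration) with a triangle outside $\thickenG$, so $e_i^*$ is exactly a creator of one of the $g$ nontrivial homology classes of $H_1(\thickenG)$. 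Consequently the $g$ fundamental cycles of $\Ghat$ determined by the $e_i^*$ map to independent generators of $H_1(\thickenG)$, which forces $i_*$ to be surjective and hence an isomorphism between rank-$g$ free abelian groups. Composing with $\iota_*^{-1}$ then gives that $r_* : H_1(\Ghat) \to H_1(G)$ is an isomorphism as well.

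Finally, factor $r$ as $\Ghat \xrightarrow{r'} G' \hookrightarrow G$ and let $\iota'$ denote the inclusion. Since $r_* = \iota'_* \circ r'_*$ is an isomorphism of rank-$g$ groups, $\iota'_* : H_1(G') \to H_1(G)$ is surjective; combined with the fact that $G'$ is a subset of the graph $G$ and so cannot carry more first Betti number than $G$ itself, this pins $G'$ down to first Betti number exactly $g$. For minimality, I would then observe that any subset of $G$ whose first Betti number equals that of $G$ must contain every edge lying on a cycle of $G$ (equivalently, the 2-core of $G$), and this 2-core is the unique minimal such subset, giving the minimality claim. The main obstacle is the middle paragraph --- establishing that the inclusion $i : \Ghat \hookrightarrow \thickenG$ induces an isomorphism on $H_1$ --- which is where the persistence-theoretic construction of $\Ghat$, and in particular the placement of its critical edges inside $\thickenG$, is doing the real work.
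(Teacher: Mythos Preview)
Your proposal is correct and follows essentially the same approach as the paper: both arguments hinge on showing that the inclusion $\Ghat \hookrightarrow \thickenG$ sends the $g$ fundamental cycles of $\Ghat$ (coming from the positive edges $e_i^*$ that pair with triangles outside $\thickenG$) to a homology basis of $H_1(\thickenG)$, and then push this through the homotopy equivalence $F(\cdot,1)$ to conclude that the image in $G$ already carries a full-rank cycle basis. Your presentation via the factorization $i_* = \iota_* \circ r_*$ and the explicit rank count is a bit more careful than the paper's contradiction phrasing, but the key persistence-theoretic step and the overall logic are the same.
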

\begin{proof}
The fact that $F|_{\Ghat}(\cdot,\ell)$ is continuous for any $\ell\in[0,1]$ is obvious from the continuity of $F$. Only thing that needs to be shown is that $F|_{\Ghat}(\Ghat,1)=G'$. Suppose not. Then, $G''=F|_{\Ghat}(\Ghat,1)$ is a proper subset of $G$ which has a first Betti number less than that of $G$.

\noindent We observe that the cycle in $\Ghat$ created by a positive edge $e$ along with the paths to the root of the spanning tree is also non-trivial in $\thickenG$ because this is a cycle created by adding the edge $e$ during persistence filtration and the edge $e$ is not killed in $\thickenG$.Therefore, a cycle basis for $\Ghat$ is also a homology basis for $\thickenG$. 
Since the map $F(\cdot,1): \thickenG\rightarrow G$ is a homotopy equivalence, it induces an isomorphism in the respective homology groups; in particular, a homology basis in $\thickenG$ is mapped to a homology basis in $G$. Therefore, the image $G''=F|_{\Ghat}(\Ghat,1)$ must have a basis of cardinality $g$ if $\Ghat$ has first Betti number $g$. But, $G''$ cannot have a cycle basis of cardinality $g$ if it is a proper subset of $G'$ reaching a contradiction.
\end{proof}

\section{Additional guarantee for 2D}
\label{sec:2D}
For $\mathbb{R}^2$, we now show that $\thickenG$ actually deformation retracts to $\Ghat$, which is stronger than saying $G$ and $\Ghat$ are homotopy equivalent.
We are unable to prove this result for dimensions higher than 2, as our current proof needs that the edge-triangle persistence pairs can always be canceled (even though our algorithm does not depend on edge-triangle cancellations at all). It would be interesting, as a future work, to see whether a different approach can be developed to avoid this obstruction for the special case under our noise model.
The main result of this section is as follows.
\begin{theorem}
Under our noise model, $\thickenG$ deformation retracts to $G$ and $\hat{G}$.
\label{thm:deform-retract}
\end{theorem}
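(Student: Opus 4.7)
The theorem has two parts: $\thickenG$ deformation retracts to $G$ and $\thickenG$ deformation retracts to $\Ghat$. The first is immediate from condition C-1 of the noise model, so the substance lies in producing a deformation retraction $\thickenG \to \Ghat$. My plan is to read off such a retraction from a full discrete-Morse simplification of $\afunc = -\rho$, in two stages of elementary collapses: one that removes every triangle of $\thickenG$, and one that prunes the pieces of the remaining spanning tree that lie outside $\Ghat$.

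First, I would extend $\GV_\delta$ to a gradient field $\GV'_\delta$ by additionally canceling every edge-triangle persistence pair of persistence at most $\delta$, processed in increasing order. In dimension $2$ the results of \cite{Bauer2012} guarantee that each such pair is $\validcancel$ at the time it is processed; moreover, the argument of Claim~\ref{lemma:nocrossing} carries over verbatim and shows that no crossing gradient vector is ever produced, so $\GV'_\delta$ restricted to $\thickenG$ is a valid discrete gradient field on the subcomplex $\thickenG$. The noise-model persistence analysis then pins down its critical simplices inside $\thickenG$: any pair with one simplex in $\thickenG$ and the other outside has persistence at least $\beta - \nu > \delta$, while any pair with both simplices in $\thickenG$ has persistence at most $\nu \le \delta$. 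Hence every triangle of $\thickenG$ is paired, via $\GV'_\delta$, with an edge of $\thickenG$. Together with Proposition~\ref{prop:crit-edge-vertex}, this leaves exactly one critical vertex $v_0$, the $g$ positive critical edges $e_1^*, \ldots, e_g^*$, and no critical triangles inside $\thickenG$. An Euler-characteristic check then forces the non-critical simplices of $\thickenG$ to split into edge-triangle pairs absorbing every triangle, plus vertex-edge pairs whose edges form a spanning tree $T$ of $V(\thickenG)$ rooted at $v_0$.

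With the critical content identified, the deformation retraction is obtained in two stages. In the first stage, process the edge-triangle pairs of $\GV'_\delta$ restricted to $\thickenG$ in a sequence consistent with the gradient field; acyclicity guarantees that each pair becomes an elementary free-face collapse when its turn comes, so $\thickenG$ strong-deformation-retracts onto the 1-subcomplex $L$ of $\thickenG$ consisting of all vertices of $\thickenG$, the edges of $T$, and the $g$ critical edges $e_1^*,\ldots,e_g^*$. In the second stage, note by Corollary~\ref{cor:output} that $\Ghat$ consists of the critical edges together with the unique tree-paths $\pi(u),\pi(v)$ in $T$ from each $e_i^* = \langle u, v \rangle$ to $v_0$; the union of these paths is a subtree $T_{\Ghat} \subseteq T$ containing all endpoints of the $e_i^*$'s, and so $L \setminus \Ghat$ is a disjoint union of pendant subtrees of $T$ attached to $T_{\Ghat}$ only at their roots. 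Each pendant subtree deformation retracts to its attachment point by iteratively collapsing leaf vertex-edge pairs, and doing this in parallel retracts $L$ onto $\Ghat$ while fixing the critical edges. Composing the two stages yields the desired deformation retraction $\thickenG \to \Ghat$.

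The main obstacle is the validity of the first-stage edge-triangle cancellations: in dimensions higher than two, canceling edge-triangle persistence pairs in increasing order can fail due to multiple gradient paths between the pair, which is exactly why the theorem is stated only in the planar setting. Once that cancellation step is granted, the rest of the argument reduces to two essentially combinatorial collapse arguments---one on a 2-complex using edge-triangle pairs, and one on a tree using vertex-edge pairs---both of which are standard.
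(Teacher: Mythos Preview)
Your argument is correct and reaches the goal, but it follows a genuinely different route from the paper. The paper does not restrict the gradient field to $\thickenG$ and collapse directly; instead it builds a partition $\{R_i\}$ of \emph{all} triangles of $\K$, one region per surviving edge--triangle persistence pair, by tracking the $2$-chains $C_i^j$ that mimic the persistence algorithm (Propositions~\ref{modchain-prop} and~\ref{main-prop}). It then proves that $\Ghat$ coincides with $\bigcup_i \partial R_i$ together with the ``hair'' $H$ (Proposition~\ref{deform-lem}), and obtains the retraction in two stages via a careful case analysis on positive boundary edges inside each $R_i\setminus t_i$ (Theorem~\ref{deform-thm}). Your approach bypasses all of this region machinery: once you know edge--triangle persistence pairs are \validcancel\ in the surface setting and that no crossing edge--triangle vectors arise (the paper's Claim~\ref{crossing}), the restricted field $\GV'_\delta|_{\thickenG}$ has no critical triangles, and the collapse of $\thickenG$ down to $T\cup\{e_1^*,\ldots,e_g^*\}$ and then to $\Ghat$ is essentially the standard discrete-Morse collapse plus tree pruning. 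What you gain is brevity and conceptual directness; what the paper's approach buys is the explicit structural identification $\Ghat=\bigcup_i B_i\cup H$, and a self-contained proof of edge--triangle cancellability (Proposition~\ref{main-prop}(b.i)) rather than a citation. One small technicality: your appeal to \cite{Bauer2012} for edge--triangle cancellability is cleanest on a closed surface, which is why the paper first augments $\K$ to a triangulated sphere by adding a vertex at infinity; you should note that the same augmentation (or the paper's own Proposition~\ref{main-prop}) justifies this step.
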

This main result follows from Proposition~\ref{deform-lem} and Theorem~\ref{deform-thm} below. To prove them, we will show that there exists a partition ${\mathcal R}:=\{R_i\}$ of the set of triangles in $\K$ for which Theorem~\ref{deform-thm} holds. (This theorem is our main tool in establishing the deformation retract.) 
We first state the results below before giving their proofs. Let $B_i=\partial R_i$ where $\partial$ is the boundary operator operating on the $2$-chain $R_i$. We also abuse the notations $R_i$ and $B_i$ to denote the geometric space that is the point-wise union of simplices in the respective chains. Let $t_i$ be a triangle in $R_i$ whose choice will be explained later.
In the following, let $H$ be the maximal set of edges in $\Ghat$ whose deletions do not eliminate a cycle (assume that a vertex is deleted only if all of its edges are deleted). Observe that $H$ necessarily consists of negative edges forming ``hairs'' attached to the loops of $\Ghat$ and hence to $\cup_i B_i$ because of the following proposition.

\begin{proposition}
Under our noise model, $\Ghat=\cup B_i \bigcup H$.
 \label{deform-lem}
\end{proposition}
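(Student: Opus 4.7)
The plan is to characterize the partition $\mathcal{R} = \{R_i\}$ explicitly from $\Ghat$ and then verify both inclusions of the claimed equality. Let $\Ghat_c := \Ghat \setminus H$ denote the subgraph obtained by removing the hair edges; by the maximality condition in the definition of $H$, every edge of $\Ghat_c$ lies on some cycle of $\Ghat$, so $\Ghat_c$ is exactly the 2-edge-connected part of $\Ghat$ (the union of its loops). Since $\Ghat \subseteq \thickenG$ and $\K$ triangulates $\Omega \subset \mathbb{R}^2$, the planar subgraph $\Ghat_c$ induces an equivalence relation on the triangles of $\K$: call two triangles equivalent if a chain of edge-adjacent triangles connects them while crossing no edge of $\Ghat_c$. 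Take the $R_i$'s to be these equivalence classes, with $t_i \in R_i$ as a representative.

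For the inclusion $\bigcup_i B_i \cup H \subseteq \Ghat$, edges in $H$ lie in $\Ghat$ by definition. For any edge $e \in B_i = \partial R_i$ interior to $\Omega$, $e$ is shared by a triangle of $R_i$ and a triangle not in $R_i$, so by the construction of $\mathcal{R}$ it must belong to $\Ghat_c \subseteq \Ghat$. Edges of $\K$ that sit on $\partial \Omega$ are handled by taking the ``outer'' region as a distinguished $R_i$ and computing $\partial$ relative to $\partial \Omega$, so that boundary-of-domain edges do not contribute to $\bigcup_i B_i$.

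For the reverse inclusion $\Ghat \subseteq \bigcup_i B_i \cup H$, let $e \in \Ghat$. If $e$ lies on no cycle of $\Ghat$ then $e \in H$ by the maximality of $H$. Otherwise $e \in \Ghat_c$; by planarity the two triangles of $\K$ incident to $e$ sit on opposite sides of $\Ghat_c$ locally, hence belong to distinct equivalence classes $R_i,R_j$, so $e \in B_i$. Proposition~\ref{prop:same-genus} guarantees that $\Ghat$ has first Betti number $g$ and therefore $\Ghat_c$ cuts $\Omega$ into $g+1$ faces; this matches the expected number of classes in $\mathcal{R}$ and confirms that no cycle edge of $\Ghat_c$ is missed.

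The main obstacle is the planarity argument that each edge of $\Ghat_c$ genuinely separates its two triangle neighbors in $\K$ into different equivalence classes. This is a Jordan-curve-type claim, but care is needed because distinct loops of $\Ghat$ may share edges and because some edges can abut $\partial \Omega$; both issues are resolved by working in the planar embedding of $\Ghat_c$ in $\Omega$ and using that every edge has exactly two local face sides, together with the convention on $\partial \Omega$ above. Once the separation property is established, both inclusions follow and the proposition is proved.
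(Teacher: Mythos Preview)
Your proposal has a genuine gap: you are not free to \emph{choose} the partition $\mathcal{R}=\{R_i\}$. In the paper, the regions $R_i$ are defined explicitly as $R_i = R_i^*$ for $i\ge m+1$, where the $R_i^*$ are the $2$-chains produced by the edge--triangle Morse cancellation process (Proposition~\ref{main-prop}); $B_i=\partial R_i$ is then the boundary of \emph{that} specific chain. Proposition~\ref{deform-lem} is a statement about this particular partition, and it is the same partition used in the proof of Theorem~\ref{deform-thm}, which relies heavily on the gradient-path properties of the $R_i^*$ (Proposition~\ref{main-prop}(b)). By instead \emph{defining} the $R_i$ as the connected components of triangles separated by $\Ghat_c$, you have made $\bigcup_i B_i = \Ghat_c$ true by fiat, which renders your argument circular rather than a proof of the proposition as stated.

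If your intent was to \emph{characterize} the paper's $R_i$ by showing they coincide with your equivalence classes, that is a legitimate (and ultimately correct) strategy, but the identification is exactly the nontrivial content of the proposition and you have not supplied it. Concretely, you would need to show that an edge $e$ lies on some $\partial R_i^*$ if and only if $e$ is a positive edge of persistence $>\delta$ or a negative edge that survives in $\Ghat_c$. The paper's proof does precisely this, via a spanning-tree argument: the boundaries $\bigcup_i B_i$ consist only of negative edges together with the still-critical positive edges (since any non-critical positive edge has had its two adjacent regions merged during cancellation), and hence $\bigcup_i B_i$ equals the unique minimal subgraph $G_{E,T}$ determined by the spanning tree $T$ of negative edges and the set $E$ of high-persistence positive edges; the same description identifies $\Ghat\setminus H$. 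Your planarity/Jordan-curve reasoning never touches the cancellation construction and therefore cannot bridge this gap.
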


\begin{theorem}
Under our noise model, there exists a partition $\{ R_i\}$ of triangles in $K$ such that, there is a deformation retraction of $\cup_i (R_i\setminus t_i)$ to $\Ghat$ that comprises of two deformation retractions, one from $\cup_i (R_i\setminus t_i)$ to $\thickenG$ and another one from $\thickenG$ to $\cup_i B_i\bigcup H$ which is $\Ghat$.
\label{deform-thm}
\end{theorem}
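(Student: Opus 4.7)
My plan rests on invoking the 2D assumption (explicitly granted in the paper's preceding discussion) that every edge–triangle persistence pair can be cancelled in a valid gradient field, and then using the resulting edge–triangle Morse structure to construct both the partition $\{R_i\}$ and the first deformation retraction. First I would perform a complete edge–triangle persistence-guided Morse cancellation of all pairs of persistence $\le\delta$, on top of the vertex–edge cancellations the algorithm already performs. Since $\Idomain$ is contractible, every triangle in the lower-star filtration is negative, so after this cancellation the remaining critical triangles are exactly the $g$ triangles $t_1,\dots,t_g$ paired in $\Perall(\K)$ with the high-persistence edges $e_1^*,\dots,e_g^*$ of Proposition~\ref{prop:crit-edge-vertex}. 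I then define $R_i$ to be the 2-stable manifold of $t_i$ under the resulting edge–triangle gradient; standard discrete-Morse theory gives that $\{R_i\}$ partitions all 2-simplices of $K$, that each $t_i$ is the unique critical triangle in $R_i$, and that $B_i:=\partial R_i$ is a 1-cycle.

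The next step is to identify $B_i$ with the loop of $\Ghat$ through $e_i^*$. I would first prove an edge–triangle analog of Claim~\ref{lemma:nocrossing} by the same first-crossing argument --- the gap $\beta-\nu>\delta$ forbids any persistence pair straddling $\thickenG$ and $\redR$ from having persistence at most $\delta$ --- so no edge–triangle cancellation creates a crossing gradient vector; consequently $t_i\in\redR$ and $B_i\subseteq\thickenG$. Since $e_i^*$ is the unique high-persistence critical edge paired with $t_i$, it lies on $B_i$, and the single non-trivial 1-cycle through $e_i^*$ in $R_i$ must coincide with the cycle $e_i^*\cup\pi(u)\cup\pi(v)$ produced by \CollectOutputVtwo. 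This matches $\cup B_i$ with the loop part of $\Ghat$ and also yields Proposition~\ref{deform-lem}.

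For the first deformation retraction $\cup_i(R_i\setminus t_i)\to\thickenG$, once $t_i$ is deleted, the edge–triangle gradient vectors in $R_i\cap\redR$ realise a sequence of valid elementary simplicial collapses: the three edges of $t_i$ become free faces of their other incident triangles, collapsing them exposes further free faces, and the collapse propagates in reverse along the gradient paths that formerly flowed into $t_i$. Since no crossing edge–triangle pair exists, the collapses halt exactly at the frontier of $R_i\cap\redR$ with $\thickenG$, so $R_i\cap\redR\setminus t_i$ deformation retracts onto this frontier while $R_i\cap\thickenG$ stays fixed. Gluing across all $R_i$ along their shared $\thickenG$ portions yields the global deformation retraction onto $\thickenG$.

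For the second retraction $\thickenG\to\Ghat$, I would combine (C-1) --- which gives a deformation retraction of $\thickenG$ onto $G$ --- with the fact from paragraph~2 that $\cup B_i$ is a cycle basis of $H_1(\thickenG)$ of cardinality $g$. One can modify the (C-1) retraction to land on $\cup B_i\cup H=\Ghat$ instead of $G$: contract the 2-cells of $\thickenG$ onto their 1-skeleton, then deformation retract the resulting 1-complex onto $\Ghat$ by collapsing tree-like attachments of edges not in $\Ghat$ toward $\Ghat$, extending across the hairs $H$ by the identity. The main obstacle is the paragraph~2 identification: it requires synchronising two otherwise-independent Morse cancellation processes (the vertex–edge one used algorithmically, and the edge–triangle one used only in the analysis) so their outputs are compatible, and it is exactly this synchronisation --- and the need for edge–triangle cancellability --- that forces the restriction to 2D.
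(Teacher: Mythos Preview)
Your construction of the partition $\{R_i\}$ as the 2-stable manifolds of the surviving critical triangles, together with the edge--triangle analogue of Claim~\ref{lemma:nocrossing}, matches the paper's setup. The first retraction is also on the right track: the paper likewise realises it by elementary collapses of edge--triangle gradient pairs in $\redR$, propagating inward from the deleted $t_i$. However, your claim that the collapses ``halt exactly at the frontier of $R_i\cap\redR$ with $\thickenG$'' is too strong. An edge of $R_i$ may admit \emph{two} gradient paths from $t_i$; the paper shows (Proposition~\ref{main-prop}(b.ii)) that any such edge is necessarily negative, gathers these edges into a set $L_i$, and proves that the collapse terminates at $\thickenG\cup L_i$, not at $\thickenG$. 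A further tree-retraction of the forest $L=\cup_i L_i$ onto $\thickenG$ is then required. This is the precise mechanism that makes the first step work, and you skip it.

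The genuine gap is in your second retraction $\thickenG\to\Ghat$. You propose to invoke the (C-1) retraction $\thickenG\to G$ and then ``modify'' it, by first ``contracting the 2-cells of $\thickenG$ onto their 1-skeleton'' and then retracting the resulting 1-complex onto $\Ghat$. Neither step is valid as stated: a 2-complex does not deformation retract onto its 1-skeleton (the 1-skeleton of $\thickenG$ has first Betti number far larger than $g$), and knowing that $\Ghat\hookrightarrow\thickenG$ is a homotopy equivalence does not by itself produce a deformation retraction. The paper does something entirely different here: it \emph{repeats} the same collapse argument inside $\thickenG$, now collapsing edge--triangle gradient pairs lying in $\thickenG$ until only $B_i\cup\hat L_i$ remains (using that no edge--triangle gradient path can cross $B_i$, since $B_i$ consists solely of critical and negative edges), and then tree-retracts the forest $\cup_i\hat L_i$ down to $\cup_i B_i\cup H=\Ghat$. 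It is this explicit second round of collapses---not an appeal to (C-1)---that delivers the deformation retraction; your homotopy-theoretic substitute does not.
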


Now we describe the construction of a partition ${\mathcal R}$ of the triangles in $\K$ to prove Proposition \ref{deform-lem} and Theorem \ref{deform-thm}. 
For technicality we assume that $\K$ is augmented to a triangulation of a sphere by putting a vertex $v$ at infinity and joining it to the boundary of $\K$ with edges and triangles all of whom have function value $\infty$. 
Let $P(\K)$ be the collection of persistence pairs of the form either $(\sigma, \tau)$ or $(\sigma, \infty)$ generated from the lower-star filtration $F(\K)$ as described before. Since $\K$ is 2-dimensional, each pair $(\sigma, \tau)$ is either a vertex-edge pair or an edge-triangle pair. 
We order persistence pairs in $P(\K)$
by their persistence, where ties are broken via the nested order in the filtration $F(\K)$, and obtain: 
\begin{align}\label{eqn:orderperpairs}
P(\K) &= \{ (\sigma_1, \tau_1), \ldots, (\sigma_n, \tau_n), (\alpha_1, \infty), \ldots, (\alpha_s, \infty)\}.
\end{align}

Starting with a trivial discrete gradient vector field $\GV_0$ where all simplices in $\K$ are critical, the algorithm \SimpVecF() performs Morse cancellations for the first $m\le n$ persistence pairs $(\sigma_1, \tau_1),\ldots,(\sigma_m, \tau_m)$ in order where $\pers(\sigma_m,\tau_m)\leq \delta$ but $(\sigma_{m+1}, \tau_{m+1}) > \delta$, Let $M_i$ denote the gradient vector field after canceling $(\sigma_i,\tau_i)$. Recall that in the implementation of the algorithm we do not need to perform Morse cancellation for any edge-triangle pairs. 
However in this section, for the theoretical analysis, we will cancel edge-triangle pairs as well. 
Recall that a positive edge is one that creates a 1-cycle, namely, it is either paired with a triangle or unpaired; while a negative edge is one that destroys a 0-cycle (i.e, paired with a vertex). 

Consider the ordered
sequence of edge-triangle persistence pairs, $(e_1,t_1),\ldots,(e_n,t_n)$, which is a subsequence of the one in (\ref{eqn:orderperpairs}). 
Consider the sequence $t_1,\dots,t_n$ of triangles in $\K$ 
ordered by the above sequence.
Recall the standard persistence algorithm \cite{ELZ02}. It implicitly associates a $2$-chain with a triangle $t$ when searching for the edge it is about to pair with. This $2$-chain is non-empty if $t$ is a destructor, and is empty otherwise. Let $D_i$ denote this $2$-chain associated with $t_i$ for $i\in[1,n]$.
Initially, the algorithm asserts $D_i=t_i$. At any stage, if $D_i$ is
not empty, the persistence algorithm identifies the edge $e$ in the boundary $\partial D_i$ that has been inserted into the filtration $F(\K)$ most recently. If $e$ has not been paired with anyone,
the algorithm creates the persistence pair $(e,t_i)$. Otherwise, if $e$ has already been paired with a triangle, say $t_{i'}$, then $D_i$ is updated with $D_i=D_i+ D_{i'}$ and the search continues.
Given an index $j\in[1,n]$, we define a modified set of chains ${C}_i^j$ inductively as follows. For $j=1$, ${C}_i^1=t_i$. Assume that $C_i^{j-1}$ has been
already defined. To define $C_i^j$, similar to the persistence algorithm, check if the edge $e_{j-1}$
% \jiayuan{I think ``edge $e_{j-1}$ in the persistence order'' already defines  $e_{j-1}$}
is on the boundary $\partial C_i^{j-1}$. If so, define $C_{i}^j:=C_i^{j-1}+C_{j-1}^{j-1}$ 
and
$C_i^j:=C_i^{j-1}$ otherwise.
The following result is proved in Appendix \ref{appendix:modchain-prop}. 
\begin{proposition}
	For $i\in[1,n]$, 
    $e_i$ is in $\partial {C}_i^{i}$. Furthermore, $e_i$ is
	the most recent edge in $\partial {C}_i^{i}$ according to the filtration order $F(\K)$.
	\label{modchain-prop}
\end{proposition}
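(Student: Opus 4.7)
The plan is to prove the proposition by strong induction on $i$, with the hypothesis that for every $i' < i$, the edge $e_{i'}$ lies in $\partial C_{i'}^{i'}$ and is the most recent edge there with respect to the filtration $F(\K)$.

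For the base case $i = 1$, $C_1^1 = t_1$ by definition, so I must show that $e_1$ is the most recent edge in $\partial t_1$. I would argue by contradiction: let $e^*$ denote the most recent edge in $\partial t_1$ and suppose $e^* \neq e_1$. If $e^*$ were unpaired at the time the standard persistence algorithm processes $t_1$, it would pair $t_1$ directly with $e^*$, forcing $e_1 = e^*$. Hence $e^*$ must already be paired with some triangle $t^*$ satisfying $\myf(t^*) < \myf(t_1)$. Because each subsequent reduction cancels the current pivot and leaves a strictly older one, we also get $\myf(e_1) < \myf(e^*)$. Combining these two inequalities yields $\pers(e^*, t^*) < \pers(e_1, t_1)$, contradicting the choice of $(e_1, t_1)$ as the edge-triangle pair of smallest persistence.

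For the inductive step, I would track the most recent edge $M_i^j$ of $\partial C_i^j$ as $j$ ranges from $1$ to $i$. At each step $j$, one of three cases occurs: (i) $e_{j-1} = M_i^{j-1}$, so adding $C_{j-1}^{j-1}$ (whose pivot is also $e_{j-1}$ by the inductive hypothesis) cancels $e_{j-1}$ from $\partial C_i^j$ and strictly decreases the pivot; (ii) $e_{j-1} \in \partial C_i^{j-1}$ but $e_{j-1} \neq M_i^{j-1}$, in which case the addition occurs but contributes only edges no more recent than $e_{j-1}$, so $M_i^j = M_i^{j-1}$; (iii) $e_{j-1} \notin \partial C_i^{j-1}$, so nothing changes. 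The goal is to show that the pivot-decreasing subsequence produced by Case (i) events reproduces the reduction sequence that the standard persistence algorithm performs while computing $D_i$, so that $M_i^i = e_i$.

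The main obstacle is establishing this pivot-sequence correspondence precisely. Even though $C_{j-1}^{j-1}$ and $D_{j-1}$ share the same pivot $e_{j-1}$, they need not be equal, so the correspondence must be derived independently of chain equality. Moreover, the modified algorithm processes indices in persistence order, so one must argue that whenever $M_i^{j-1} = e_k$ for a paired index $k$ (where a persistence comparison analogous to the base case forces $k < i$), then at step $k+1$ the pivot is still $e_k$ so that Case (i) triggers. This in turn requires ruling out the pathological scenario in which a Case (i) reduction produces a new pivot $e_{k'}$ with $k' < k$ whose triggering step $k'+1$ has already been passed. Handling this bookkeeping cleanly, most likely by strengthening the inductive invariant to record not only the pivot but also which paired edges can still appear in $\partial C_i^j$, is the main technical challenge of the argument.
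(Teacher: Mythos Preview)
Your overall strategy—tracking the pivot $M_i^j$ of $\partial C_i^j$ and trying to show it descends to $e_i$—is natural but takes a genuinely different route from the paper, and the obstacle you flag is real rather than mere bookkeeping. The specific hope that the Case~(i) events ``reproduce the reduction sequence that the standard persistence algorithm performs while computing $D_i$'' is almost certainly too strong: the chain $C_{j-1}^{j-1}$ you add at step $j$ need not equal the reduced column $D_{j-1}$, and moreover Case~(ii) events silently modify $\partial C_i^{j-1}$ below the pivot before any Case~(i) fires. Consequently, after the first Case~(i) cancellation the new pivot need not be the second pivot $e_{m_2}$ of the standard reduction of $t_i$, so the two pivot sequences can diverge. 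Your proposed fix—strengthening the invariant to control which paired edges can appear in $\partial C_i^{j}$—is the right instinct (one can indeed show that $e_1,\ldots,e_{j-1}$ never lie in $\partial C_i^{j}$), but even with that, pinning the final pivot down to $e_i$ rather than some $e_p$ with $p>i$ requires a further argument you have not supplied.

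The paper sidesteps pivot-tracking entirely. Its key idea is to prove, by a double induction, that the persistence-algorithm chain $D_i$ is a \emph{subchain} of $C_i^i$: the outer induction assumes $D_j\subseteq C_j^j$ for $j<i$, and the inner induction walks through the summands $D_{i_1}=t_i, D_{i_2},\ldots$ of $D_i$, using the persistence comparison $i_k<i$ to conclude that each $C_{i_k}^{i_k}$ (and hence $D_{i_k}\subseteq C_{i_k}^{i_k}$) is absorbed into $C_i^i$. Once $D_i\subseteq C_i^i$ is in hand, the two conclusions are argued separately and briefly: if $e_i$ were absent from $\partial C_i^i$ it would have to be cancelled by some added $\partial C_{i'}^{i'}$, forcing an impossible second pairing for $e_i$; and if some $e_{i'}$ were more recent than $e_i$ in $\partial C_i^i$, it would have entered via some $\partial C_{j'}^{j'}$ with $j'<i$, whose own pivot $e_{j'}$ bounds $e_{i'}$ and yields a contradiction. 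The payoff of the paper's route is that it never needs to identify intermediate pivots at all—only the containment $D_i\subseteq C_i^i$ and the inductive fact that each $\partial C_{j'}^{j'}$ has pivot $e_{j'}$—so the bookkeeping you anticipate simply does not arise.
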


Procedure \SimpVecF() also implicitly maintains a $2$-chain $R_i^*$ with each triangle $t_i$.
Initially, $R_i^*=t_i$ as in the case of ${D}_i$. Then, inductively assume that $R_i^*$ is the $2$-chain implicitly associated with $t_i$ when a persistence pair $(e_{i'},t_{i'})$ is about to be considered by 
\SimpVecF() and the boundary $\partial R_i^*$ contains $e_{i'}$. By reversing a gradient path between $t_{i'}$ and $e_{i'}$, it implicitly updates
the $2$-chain $R_i^*$ as $R_i^*:=R_i^*+R_{i'}^*$. We observe that $R_i^*$ is identical with $C_i^{i'}$.
Proposition~\ref{main-prop} below establishes this fact along with some other inductive properties useful to prove Theorem~\ref{deform-thm}. 
The proof can be found in Appendix \ref{appendix:main-prop}.
\begin{proposition}
Let $(e_j,t_j)$ be the edge-triangle persistence pair \SimpVecF$()$ is about to consider and
let $C_i^{j}$ be the $2$-chains defined as above . Then, the following statements hold:
\noindent
\begin{enumerate}[label=(\alph*)]
\item For each triangle $t_i$, $i=1,\ldots,n$, in the persistence order, the $2$-chain $R_i^*$ satisfies the following conditions:(a.i) $R_i^*=C_i^j$, (a.ii) interpreting $R_i^*$ as a set of triangles, one has that the sets $R_i^*$, $i=j,\ldots, n$, partition the set of all triangles in $\K$.
\item There is a gradient path from $t_i$ to all edges of the triangles in $R_i^*$, and (b.i) the path is unique if the edge is in the boundary $\partial R_i^*$ for every $i=j,\ldots, n$; (b.ii) if there is more than one gradient path from $t_i$ to an edge $e$, then $e$ must be a negative edge. 
\end{enumerate}
\label{main-prop}
\end{proposition}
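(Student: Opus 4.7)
The plan is to prove statements (a) and (b) together by induction on $j$, the persistence index of the edge-triangle pair that \SimpVecF() is about to consider. The three claims reinforce each other at every step: (b.i) certifies that the cancellation at stage $j$ is valid, (a.i) algebraically tracks the chain $R_i^*$ through the cancellation, and (a.ii) together with (b.ii) controls the combinatorial geometry of the triangle partition and of the gradient paths in the remaining chains.

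For the base case $j=1$, no edge-triangle cancellation has yet been performed, so $R_i^* = t_i$ for every $i$; by definition $C_i^1 = t_i$ as well, giving (a.i). Every triangle of the sphere-augmented $\K$ that appears in a finite persistence pair occurs as some $t_i$, so $\{t_i\}$ partitions these triangles, giving (a.ii). Each $t_i$ reaches its three boundary edges via the trivial one-step gradient path, so (b.i) holds and (b.ii) is vacuous.

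For the inductive step, suppose the claims hold just before \SimpVecF() processes $(e_j, t_j)$. By the inductive hypothesis together with Proposition \ref{modchain-prop}, $e_j \in \partial R_j^* = \partial C_j^j$; by (b.i) the gradient path from $t_j$ to $e_j$ is unique, so the cancellation is valid and takes place. Reversing this path implements the update $R_i^* \leftarrow R_i^* + R_j^*$ for exactly those $i$ with $e_j \in \partial R_i^*$, which is the same rule that defines $C_i^{j+1}$; this preserves (a.i). Because $e_j$ is incident to exactly two triangles in the sphere-augmented $\K$ and each triangle sits in a unique chain of the current partition, there is exactly one $k > j$ with $e_j \in \partial R_k^*$; the update merges $R_k^*$ with $R_j^*$ along their shared boundary edge, producing a disjoint union on triangles and leaving $\{R_i^*\}_{i=j+1}^n$ as a partition, giving (a.ii).

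The principal difficulty is statement (b). Existence of a gradient path from $t_i$ (with $e_j \in \partial R_i^*$) to every edge of the updated chain follows by concatenating the inductively supplied path from $t_i$ to $e_j$ with the reversed path from $e_j$ back to $t_j$ and then with the old paths from $t_j$ into $R_j^*$. For uniqueness of the path to a boundary edge of the updated chain, the observation is that two distinct paths would, after reversal of a common segment, assemble into a nontrivial closed V-path, contradicting the acyclicity maintained by Morse cancellation; this gives (b.i). For (b.ii), the anticipated argument runs by contradiction: if an interior edge $e$ of the updated chain carried multiple gradient paths and were positive, then $e$ would be the creator partner of some later edge-triangle pair $(e, t_{j'})$, and applying (a.i) and (a.ii) inductively at step $j'$ would force $e \in \partial R_{j'}^*$, contradicting that $e$ is interior at the current step. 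Making this look-ahead precise, and verifying that the inductive invariants remain intact across the intervening edge-triangle and vertex-edge cancellations, is the main technical hurdle and the reason all three statements must be proved in tandem rather than separately.
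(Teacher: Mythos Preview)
Your inductive framework and your treatment of statement (a) match the paper's proof. The gap is in statement (b).

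For (b.i), the mechanism you invoke is not correct. Two distinct gradient paths from $t_i$ to the same edge do \emph{not} assemble into a cyclic V-path: reversing one of the divergent segments does not produce a V-path at all (the Morse pairings point the wrong way), and more generally an acyclic directed graph can have two directed paths between a given pair of vertices. What actually forces uniqueness here is that each triangle lies in at most one Morse pair, so in the dual directed graph every triangle has in-degree at most one; together with acyclicity this makes the dual of each $R_i^*$ an out-tree rooted at $t_i$, and a boundary edge has a unique adjacent triangle in $R_i^*$. The paper avoids this global analysis and argues more directly: since $\partial(R_{j'}^*+R_j^*)=\partial R_{j'}^*+\partial R_j^*$ over $\mathbb{Z}_2$, a boundary edge of the merged region lay on the boundary of exactly one of the two pre-merge regions, and uniqueness is inherited from the inductive hypothesis on that one region.

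For (b.ii), the look-ahead you propose is genuinely circular. You are establishing the invariant at step $j+1$, yet ``applying (a.i) and (a.ii) inductively at step $j'$'' with $j'>j$ invokes the proposition at a future step that is not yet proved; and if $e=e_k$ with $k\le j$ (its persistence pair already cancelled), your argument gives no contradiction at all, since $e$ legitimately became interior at step $k$ and remains so. The paper stays entirely within the hypothesis at step $j$ by a direct case split: if $e$ was already an edge of the old $R_{j'}^*$, its paths from $t_{j'}$ are unaffected by the reversal inside $R_j^*$, so it already had two paths and is negative by induction; if $e$ lies only in the old $R_j^*$, the three-piece concatenation used for existence shows that two post-update paths from $t_{j'}$ force two pre-update paths from $t_j$, again negative by induction. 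No forward reference is needed, which is why the paper can carry the induction through.
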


We are now ready to setup the regions $R_i$s needed for Theorem \ref{deform-thm} and Proposition \ref{deform-lem}. Suppose the first $m$ edge-triangle pairs have persistence less than or equal to $\delta$, the parameter supplied to \SimpVecF(). Then, we set $R_i=R_i^*$ as in Proposition~\ref{main-prop} for $i\geq  m+1,\ldots,n$. 
The proof for Proposition \ref{deform-lem} is in Appendix \ref{appendix:deform-lem}. 

Finally, similar to the vertex-edge gradient vectors, we say that a gradient vector $(e, t)$  is \emph{crossing} if $e\in \thickenG$ and $t \notin \thickenG$. 
The following claim can be proved similarly as Claim~\ref{lemma:nocrossing}.

\begin{claim} \label {crossing}
During the $\delta$-Morse cancellation of edge-triangle pairs, no crossing gradient vector is ever produced. 
\end{claim}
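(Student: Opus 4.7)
The plan is to mirror the proof of Claim~\ref{lemma:nocrossing} almost verbatim, replacing vertex-edge arguments by edge-triangle ones and using crucially that $\thickenG$ is a subcomplex of $\K$. I would proceed by contradiction: assume that at some point during the $\delta$-Morse cancellation an edge-triangle crossing gradient vector is created, and let $(e,t)$ be the \emph{first} such vector ever produced. Since the process starts from the trivial discrete gradient vector field, $(e,t)$ must arise by inverting a gradient path $\pi'$ during the Morse cancellation of an edge-triangle persistence pair $(e',t')$ with $\pers(e',t') \le \delta$.

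First, using the values of $\afunc = -\rho$ coming from the noise model together with $\delta < \beta - \nu$, I would verify that $(e',t')$ itself must lie entirely inside or entirely outside $\thickenG$. For $\sigma \in \thickenG$, every vertex of $\sigma$ has $\afunc$-value in $[-\beta-\nu, -\beta]$, so $\myf(\sigma) \le -\beta$; for $\sigma \notin \thickenG$, $\sigma$ contains a vertex with $\afunc$-value in $[-\nu, 0]$, so $\myf(\sigma) \ge -\nu$. Thus if $e' \in \thickenG$ and $t' \notin \thickenG$, then $\pers(e',t') \ge \beta - \nu > \delta$, a contradiction; the configuration $e' \notin \thickenG$, $t' \in \thickenG$ is ruled out immediately because $\thickenG$ is a subcomplex and hence contains every face of $t'$.

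Next, writing $\pi' = \tau_0, \sigma_1, \tau_1, \ldots, \sigma_\ell, \tau_\ell, \sigma_{\ell+1}$ with $\tau_0 = t'$ and $\sigma_{\ell+1} = e'$, the new vector $(e,t)$ produced by the reversal equals $(\sigma_{i_0}, \tau_{i_0-1})$ for some index $i_0$, so $\sigma_{i_0} \in \thickenG$ while $\tau_{i_0-1} \notin \thickenG$. From here I would extract a pre-existing crossing gradient vector inside $\pi'$, contradicting the minimality of $(e,t)$. If $\{e',t'\} \subseteq \thickenG$, take the smallest $k \le i_0 - 1$ with $\tau_k \notin \thickenG$; then $\tau_{k-1} \in \thickenG$ forces its facet $\sigma_k \in \thickenG$, so the gradient vector $(\sigma_k, \tau_k)$ is already crossing. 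If $\{e',t'\} \subseteq \K \setminus \thickenG$, take the largest $k \in \{i_0, \ldots, \ell\}$ with $\sigma_k \in \thickenG$; such $k$ exists since $\sigma_{i_0} \in \thickenG$ and satisfies $k \le \ell$ because $\sigma_{\ell+1} = e' \notin \thickenG$. If $\tau_k$ were in $\thickenG$, its facet $\sigma_{k+1}$ would also lie in $\thickenG$, contradicting the maximality of $k$; hence $\tau_k \notin \thickenG$ and $(\sigma_k, \tau_k)$ is the sought crossing vector.

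The only subtlety relative to the vertex-edge case is the one-sided nature of the subcomplex condition: faces of simplices in $\thickenG$ lie in $\thickenG$, but cofacets of simplices in $\thickenG$ need not. The two cases above each invoke this property in the direction in which it does hold, which is the sole step needing any care; beyond that, the argument is a direct transcription of the proof of Claim~\ref{lemma:nocrossing}.
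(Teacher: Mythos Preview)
Your proposal is correct and follows exactly the approach the paper intends: the paper merely states that the claim ``can be proved similarly as Claim~\ref{lemma:nocrossing}'', and your write-up is precisely that transcription, with the added care of invoking the subcomplex property of $\thickenG$ in the appropriate direction in each of the two cases. If anything, your version is more explicit than the paper's own proof of Claim~\ref{lemma:nocrossing} about how the pre-existing crossing vector is located along $\pi'$.
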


{\noindent {\bf Proof of Theorem \ref{deform-thm}.}}
Set $\hat{R}_i=R_i\setminus t_i$. 
Let $\mytree$ be the spanning tree formed by all negative edges and their vertices. Let $L_i$ be the set of edges in $R_i$ that has more than one gradient path from $t_i$ to them; $L_i \subset \mytree$ by Proposition \ref{main-prop} (b.ii). 
First, we want to establish a deformation retraction from $\cup (\hat{R_i}\setminus {t_i})$ to $\thickenG$. 
To do this, for $k= 0,1,\ldots,s$, we will define $\hat{R}_i^k$ 
inductively where $\hat{R}_i^{k-1}$ deformation retracts to $\hat{R}_i^k$ and $\hat{R}_i^s \subseteq \thickenG\cup L_i$. 
Let $\hat{R}^0_i=\hat{R}_i$. For $k=1,\ldots,s$, consider a \emph{positive} edge $e$ in $\hat{R}_i^{k-1}$ where (a) $e$ is not in $\thickenG$ and (b) there is a unique gradient path in $R_i$ from $t_i$ to $e$ that passes through triangles all of which are in $R_i\setminus \hat{R}_i^{k-1}$. If such an edge $e$ exists, then $e$ is necessarily incident to a single triangle, say $t$, in $\hat{R}_i^{k-1}$. We collapse the pair $(e,t)$, which is necessarily an edge-triangle gradient vector pair because $e$ is positive. 
We take $\hat{R}_i^k$ to be $\hat{R}_i^{k-1}\setminus \{e,t\}$. 
If no such $e$ exists, then either (A) there is no positive edge in $\hat{R}_i^{k-1} \setminus \thickenG$ any more; or (B) for each positive edge $e' \in \hat{R}_i^{k-1} \setminus \thickenG$, (B-1) there is a unique gradient path from $t_i$ to $e'$ but this path passes through some triangle in $\hat{R}_i^{k-1}$; or (B-2) there are two gradient paths from $t_i$ to $e'$.

If there is no positive edge in $\hat{R}_i^{k-1} \setminus \thickenG$ any more, then $\hat{R}_i^{k-1} \subseteq \thickenG \cup L_i$, as otherwise, there will be at least some triangle from $\hat{R}_i^{k-1} \setminus \thickenG \cup L_i$ with at least one boundary edge of it being positive. The induction then terminates; we set $s=k-1$ and reach our goal.

We now show that case (B-1) is not possible. 
Suppose it happens, that is, $e'\in \hat{R}_i^{k-1} \setminus L_i$ is an edge not in $\thickenG$ for which the unique gradient path from $t_i$ passes through triangles in $\hat{R}_i^{k-1}$. Let $e''$ be the first edge in this path that is in $\hat{R_i}^{k-1} \setminus L_i$. Then, if $e''\not\in \thickenG$, it qualifies for the conditions (a) and (b) required for $e$ reaching a contradiction. So, assume $e''\in \thickenG$. But, in that case, we have a gradient path that goes into $\thickenG$ and then comes out to reach $e'\not\in \thickenG$. There has to be a gradient pair in this path where the edge is in $\thickenG$ and the triangle is not in $\thickenG$. This contradicts Claim~\ref{crossing}. Thus, case (B-1) is not possible. 
Now consider (B-2):
$e'$ must be negative by Proposition \ref{main-prop} (b.ii). So, it is not possible either.

To summarize, the induction terminates in case (A), at which time we would have that $\hat{R}_i^s \subseteq \thickenG \cup L_i$. 
Furthermore, this process also establishes a deformation retraction from $\hat{R}_i$ to 
$\hat{R}_i^s$ realized by successive collapses of edge-triangle pairs.
Furthermore, by construction, each collapsed pair $(e,t)$ must be from $\redR{}$, hence $\cup_i \hat{R}_i^s$ contains all simplices in $\thickenG$.
Combined with that $\hat{R}_i^s \subseteq \thickenG \cup L_i$, we have that $\cup_i \hat{R}_i^s = \thickenG \cup L$, where $L = \cup_i L_i$ is a subset of the spanning tree $\mytree$. 
The edges in $L$ being part of a spanning tree cannot form a cycle and thus can be retracted along the tree to $\thickenG$, which gives rise to
a deformation retraction from $\cup_i(R_i \setminus t_i)$ to $\thickenG \cup L$ and then to $\thickenG$, establishing the first part of Theorem \ref{deform-thm}. 

We now show that
$(\cup_i \hat{R}_i^s) \bigcap \thickenG = \thickenG$ deformation retracts to $\cup B_i \bigcup H$. 
Let $\hat{L}_i$ be the edges in $\hat{R}_i^s \cap \thickenG$ with more than one gradient path from $t_i$ to them. These edges are negative by Proposition \ref{main-prop} (b.ii). 
Replacing $\hat{R}_i$ with $\hat{R}_i^s \cap \thickenG$ and edges in $B_i\cup \hat{L}_i$ playing the role of edges in $\thickenG \cup L_i$ in the above induction, we can obtain that $\hat{R}_i^s \cap \thickenG$ deformation retracts to $B_i\cup \hat{L}_i$. Observe that now instead of Claim 2, we use the fact that no edge-triangle gradient path crosses $B_i$ that consists of only negative and critical edges. 
To this end, we also observe that $\cup \hat{L}_i = \mathcal{T}\cap \thickenG$ where $\mathcal{T}$ is the spanning tree formed by all negative edges, as we only collapse edge-triangle pairs that are gradient pairs (hence the participating edges are always positive). This implies that $H \subset \cup \hat{L}_i$. 
Again, edges in $\hat{L}_i$ (being part of a spanning tree) can be retracted along the spanning tree till one reaches $B_i$ or edges in $H$. Performing this for each $i$, we thus obtain a deformation retraction from $(\cup_i \hat{R}_i^s) \bigcap \thickenG = \thickenG$ to $\cup B_i\bigcup \cup \hat{L}_i$and further to $\cup B_i \bigcup H = \Ghat$.
This finishes the proof of Theorem \ref{deform-thm}.

In Appendix \ref{appendix:sec:exp}, we also provide some experiments demonstrating the efficiency of the simplified algorithm, as well as discussion on thresholding strategies. 

\subparagraph*{Acknowledgments}
We thank Suyi Wang for generously helping with the software. The Enzo dataset used in our experiments is obtained from \cite{ENZO_code}. The Bone dataset is obtained from \cite{ct_bone}. This work is supported by National Science Foundation under grants CCF-1526513, CCF-1618247, and CCF-1740761, and by National Institute of Health under grant R01EB022899.
\bibliography{discreteMS}

\newpage
\appendix
\section{Missing proofs}

\subsection{Proof of Proposition \ref{prop:crit-edge-vertex}}
\label{appendix:prop:crit-edge-vertex}

\subparagraph*{Proof of claim (i):}
We first show that there can be no critical vertex of $\GV_\delta$ from the interior of \redregion{} $\redR$. 
Indeed, suppose there is a critical vertex $v \notin \thickenG$, then it forms a persistence pair  either with $\infty$ or with a critical edge $e\in K$ in $\Perall(K)$. 
The former cannot happen as the only vertex unpaired by the persistence algorithm is the global minimum of the input function $f$, which necessarily lies in the \greenregion{}  $\thickenG$. 
So suppose $(v, e)$ is the persistent pair containing $v$. 
It then follows that $e \in \redR$ as it must come after $v$ in the lower-star filtration $F(\K)$ induced by $f$. 
Hence $\pers(v,e)$ is necessarily smaller than $\delta$ under our noise model. 
In other words, the persistence pairing $(v,e)$ should have already been canceled during the $\delta$-Morse simplification process. 
As a result, there cannot be any critical vertex left in $\redR$. 

Next, we argue that there is exactly one critical vertex in the \greenregion{} $\thickenG$ in the final discrete gradient vector field $\GV_\delta$. 
First, note that the global minimum $v_0$ of the PL-function $f$ must remain critical, as $v_0$ will be paired with $\infty$ by the persistent homology induced by the lower-star filtration, and the persistence pairing $(v_0, \infty)$ remains after the $\delta$-Morse cancellation by Lemma \ref{lem:01cripairs} (ii). 
We now prove that there cannot be any other critical vertex in $\thickenG$.

Assume on the contrary that there is another critical vertex $u\in \thickenG$. This means we have a persistence pairing $(u, e')$ with $\pers(u, e') > \delta$. It then follows that $e' \in \redR$ by our assumption on the noise model for $f = -\rho$.  
Recall that the low-star filtration $F(\K)$ is induced by adding the lower-star of $v_i$s in order where $v_0, v_1, \ldots, v_n$ are sorted in increasing order of $f$-values. Specifically, $F(\K)$ contains the following sequence: 
$\emptyset \subset \bar{K}_1 \subset \bar{K}_2 \subset \cdots \subset \bar{K}_n,$  
where $\bar{K}_i = \bigcup_{j \le i} \LowSt(v_j)$. 
Assume that $e' = <v_a, v_b>$ such that $a < b$ (i.e, $f(v_a) < f(v_b)$). Then $\bar{K}_b$ is the subcomplex in the filtration $F(\K)$ when the edge $e'$ is first included. 
It follows from the persistence algorithm \cite{ELZ02} that, there are two connected components $C_1, C_2 \subset \bar{K}_{b-1}$ that is merged with the addition of $e'$; that is, $v_a \in C_1$ and $v_b \in C_2$. Furthermore, since $e'$ forms a persistence pair  with $u$, this means that $u$ must be the global minimum of one of the components, say $C_1$ w.o.l.g, and $f(u) > f(z)$ where $z$ is the global minimum of the other component $C_2$.  
See Figure \ref{fig:lem4_4}. 

\begin{figure}[H]
\centering
\includegraphics[height=.2\textheight]{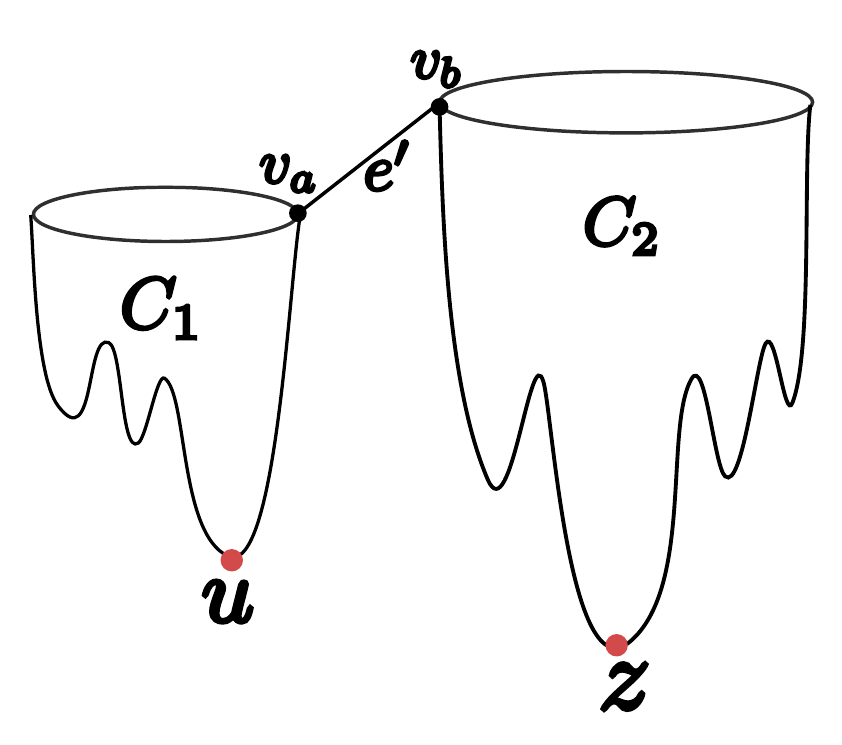}
\caption{Illustration for proof of Proposition \ref{prop:crit-edge-vertex}.}
\label{fig:lem4_4}
\end{figure}

Hence $z \in \thickenG$ as $u \in \thickenG$ and $f(u) > f(z)$. 
This however is not possible. Indeed, note that $\thickenG \subseteq \bar{K}_{b-1}$: This is because that $e' \in \redR$ (and thus $\myf(e') > \delta > \nu$), implying $\bar{K}_{b-1}$ will contain all vertices (thus as well as simplices they span) whose function value is lower than $\delta$, which contains $\thickenG$ (whose vertices have $f$ values $<\nu < \delta$). 
Since both $u$ and $z$ are in $\thickenG$ (which is connected), they are 
already connected in $\bar{K}_{b-1}$, contradictory to that $u\in C_1$ and $z\in C_2$ are two different connected components in $\bar{K}_{b-1}$. Hence the persistence pairing $(u, e)$ cannot exist, and  
there is no critical vertex other than the global minimum $v_0$. 

\subparagraph*{Proof of claim (ii).}
A critical edge forms a persistence pair with either a vertex, or a triangle, or remains unpaired. The last case is not possible as the input domain is simply connected. Any remaining critical edge cannot be paired with a (critical) vertex either by claim (i). 
Statement (ii) then follows. 

\subparagraph*{Proof of claim (iii). }
We now prove Proposition (iii): 
Let $e$ be a critical edge that is considered by \CollectOutputVtwo(). By Proposition~\ref{prop:crit-edge-vertex},
$e$ forms a persistence pair with a triangle $t$. Since $e$ is considered by \CollectOutputVtwo(), $\pers(e,t)> \delta$. Under our noise model, this means that $e$ and $t$ cannot be both in $\thickenG$ or both in its complement $\overline\thickenG$. Then, we have two possibilities:
(i) $e\in\thickenG$ and $t\not\in \thickenG$ or (ii) $t\in\thickenG$ and $e\not\in\thickenG$. In case (i) there is nothing to prove because $e\in\thickenG$; case (ii) is impossible because the function value of $t$ will be less than that of its pairing edge $e$.
\subsection{Proof of Proposition \ref{deform-lem}}
\label{appendix:deform-lem}

First, we need the following obvious claim.
\begin{claim}
Let $G$ be any graph and $T\subseteq G$ be a spanning tree. Let $E\subseteq G\setminus T$ be a set of edges. Let $G_{E,T}\subseteq G$ be the subgraph where $G_{E,T}= (T\cup E)\setminus H$ where $H$ is the largest set of edges in $T$ whose deletions do not eliminate any cycle. Given $T$ and $E$, the set $H$ and hence $G_{E,T}$ is unique.
\end{claim}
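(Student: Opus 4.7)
The plan is to show that the family
\[
\mathcal{H} \;=\; \{\, H \subseteq T \;:\; \text{every cycle of } T\cup E \text{ is still a cycle of } (T\cup E)\setminus H \,\}
\]
has a unique maximum element, which is then the $H$ of the claim. The interpretation I use is that ``deletion of $H$ does not eliminate a cycle'' means each cycle of $T\cup E$ survives as a cycle in $(T\cup E)\setminus H$; equivalently, no edge of $H$ lies on any cycle of $T\cup E$. With this reading, $\mathcal{H}$ is closed downward under inclusion and, more importantly, closed under arbitrary unions: if $H_1,H_2\in\mathcal{H}$ then neither contains a cycle edge, so $H_1\cup H_2$ doesn't either. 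Thus $H^{\star}:=\bigcup_{H\in\mathcal{H}} H$ is itself in $\mathcal{H}$ and is the unique maximum element.

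To make $H^{\star}$ concrete (and thus to make $G_{E,T}$ concrete), I would characterize it edge-wise. Let $U \subseteq T$ be the set of tree edges that lie on at least one cycle of $T\cup E$. By the above, $H^{\star}\subseteq T\setminus U$, and conversely $T\setminus U$ lies in $\mathcal{H}$ (none of its edges is on a cycle, so removing them leaves every cycle intact), giving $H^{\star}=T\setminus U$. Since $T$ is a spanning tree, every edge $e\in E$ has a unique fundamental cycle $C_T(e)\subseteq T\cup\{e\}$, and every cycle of $T\cup E$ is the symmetric difference of a subset of these fundamental cycles. So a tree edge $h\in T$ belongs to some cycle of $T\cup E$ if and only if $h\in C_T(e)$ for some $e\in E$. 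Hence
\[
H^{\star} \;=\; T\setminus\bigcup_{e\in E} C_T(e),
\]
which is manifestly determined by $T$ and $E$ alone. Consequently $G_{E,T}=(T\cup E)\setminus H^{\star}$ is unique.

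The argument has no real obstacle; the only substantive point is the ``closed under union'' step, which in turn hinges on fixing the right reading of ``does not eliminate a cycle'' (namely, preserving the cycle set, not merely preserving the cycle space dimension). Once that is in place, uniqueness is immediate from the union-closure of $\mathcal{H}$, and the explicit formula via fundamental cycles provides both uniqueness and a constructive description of $H^{\star}$ and $G_{E,T}$.
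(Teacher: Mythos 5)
Your argument is correct, but there is nothing in the paper to compare it against: the paper labels this claim ``obvious'' and gives no proof at all, so your write-up supplies the justification the authors omit. The essential content is exactly what you isolate: a set $H\subseteq T$ eliminates no cycle if and only if no edge of $H$ lies on any cycle of $T\cup E$, whence the family of admissible sets is closed under union and has the unique maximum $H^{\star}=T\setminus\bigcup_{e\in E}C_T(e)$. Your interpretive caveat is also the right one to flag, and it is consistent with how the paper uses $H$ in Section 6 (the ``hairs'' of $\Ghat$, i.e.\ edges on no cycle); note that even under the alternative reading (preserving the first Betti number rather than the cycle set) one lands on the same $H^{\star}$, since deleting a tree edge that lies on a cycle keeps the graph connected and strictly drops $\beta_1$, while deleting a bridge leaves $\beta_1$ unchanged. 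One small point worth making explicit if you keep the fundamental-cycle characterization: the forward direction (a tree edge on \emph{some} cycle of $T\cup E$ lies on some fundamental cycle) follows either from the symmetric-difference decomposition you cite (membership in an odd, hence nonzero, number of the $C_T(e)$) or directly from the cut/cycle duality for the cut $T-h$; as written you gesture at the former, which is fine.
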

We call $G_{E,T}$ in the above claim to be the minimal subgraph with respect to the edge set $E$ and the spanning tree $T$.
Now, consider the spanning tree $T$ of the $1$-skeleton of $\K$ consisting of all negative edges. Addition of positive edges creates cycle. Let $E$ be the set of all positive edges with persistence more than $\delta$. The graph consisting of edges $T\cup E$ has cycles. Consider the minimal subgraph $G_{E,T}$. The graph $\hat{G}$ computed by the algorithm is the graph $G_{E,T}$ plus a maximal set of edges in $T$ whose deletions do not eliminate any cycle in $\Ghat$. Denoting this set of edges as $H$, one has $\hat{G}=G_{E,T}\cup H$.

On the other hand, the union of the boundaries $\cup_iB_i$ of the regions $R_i$ consist of only negative edges or positive edges that are critical. This is because otherwise the edges have to be positive and non-critical, but then the two regions containing such edges are already merged, eliminating them from boundaries. Therefore, $\cup_iB_i$ can be formed by taking the spanning tree $T$ consisting of negative edges, adding the positive edges $E$ to it and then eliminating all edges that cannot eliminate any cycle. Therefore, $\cup_i B_i$
also forms a subgraph of the $1$-skeleton of $\K$ that is minimal with respect to the same set of positive edges $E$ and the spanning tree $T$. Hence, by claim $G_{E,T}=\cup_i B_i$ and thus $\hat{G}=\cup_iB_i \bigcup H$.

\subsection{Proof of Proposition \ref{modchain-prop}}
\label{appendix:modchain-prop}

First we show that the chain $D_i$ constructed by the persistence algorithm~\cite{ELZ02} is a subchain of $C_i^{i}$.
The chain $D_i$ itself is constructed by adding chains as the algorithm searches for the edge $e_i$ to be paired with
$t_i$. Let $D_{i_1}=t_i, D_{i_2},\ldots, D_{i_k}$ be the ordered sequence of chains that are added to construct $D_i$.

We use double induction. First, assume inductively that, for $j < i$, $D_j$ is a subchain of $C_j^j$. It is true
initially for $j=1$ because $C_1^1=D_1=t_1$. To prove the hypothesis for $C_i^i$, assume in a nested induction
that $D_{i_1},\ldots, D_{i_{k-1}}$ are
subchains of $C_i^i$. Initially, 
$t_i$ and hence $D_{i_1}$ is a subchain of $C_i^i$. 
For the nested induction consider the edge $e_{i_k}$ which is in the boundary $\partial (D_{i_1}+\cdots + D_{i_{k-1}})$. The
persistence pair $(e_{i_k},t_{i_k})$ appears before the pair $(e_i,t_i)$. Therefore, the chain $C_{i_k}^{i_{k}}$ is added to
$C_i^{i_k}$ and thus becomes a subchain in $C_i^{i}$. But, the chain $C_{i_k}^{i_{k}}$ contains $D_{i_k}$ as a subchain which
is also added to $C_i^{i}$ as a result. This establishes that $D_i$ is a subchain of $C_i^{i}$.

If the edge $e_i\in \partial D_i$ is not in the boundary $\partial C_i^{i}$, it must be the case that, for some $i'< i$,
$C_{i'}^{i'}$ has been added to $C_i^{i}$ where $\partial C_{i'}^{i'}$ contains $e_i$. In that case, $(e_i,t_{i'})$ must be
a persistence pair according to the construction of $C_i^j$s. This is impossible because $(e_i,t_i)$ is a persistence pair and
$i> i'$. 

To show that $e_i$ is the most recent edge in $\partial C_i^{i}$, assume inductively that $e_j$ is the most recent
edge in $\partial C_j^{j}$ for $j< i$. If $e_i$ is not the most recent edge in $\partial C_i^{i}$, let $e_{i'}$ be 
the most recent one. Let $C_{j'}^{j'}$ be the chain that was added to $C_i^{i}$ because of which $e_{i'}$ was included
in the boundary $C_i^{i}$. Clearly, $j' < i$. Then, $e_{i'}$ was in the boundary $\partial C_{j'}^{j'}$ and by inductive hypothesis
$i' \leq j'$. It follows that $i' \leq j' < i$ reaching a contradiction that $e_i$ is not the most recent edge.
The proposition then follows. 

\subsection{Proof of Proposition \ref{main-prop}}
\label{appendix:main-prop}

	 We induct on $j$. When $j=1$, we have $R_i^*=t_i$ for every $i=1,\ldots, n$. Then, (a) \& (b) are satisfied trivially. Assume that they hold for $j$. Since $R_i^*=C_i^{j}$ by inductive hypothesis, Proposition~\ref{modchain-prop} ensures that $\partial R_j^*$ contains $e_j$, the edge with which $t_j$ pairs with. Then, by inductive hypothesis (b), there is a unique gradient path $\pi_j$ from $t_j$ to $e_j$. The algorithm \SimpVecF() only reverses $\pi_j$. 
	 
	 The set $R_j^*$ is a constituent of the sets $\{R_i^*\},i=j,\ldots,n$ that partition the set of triangles in $\K$. Hence, the edge $e_j\in \partial R_j^*$ necessarily belongs to another
	 boundary $\partial R_{j'}^*$ for some $j'>j$. Observe that since each edge can be incident to at most two triangles, there is a unique such $j'$. 
	 We update $R_{j'}^*:=R_{j'}^*+ R_j^*$. Clearly, the sets $R_i^*, i=j+1,\ldots, n$ partitions the set of triangles in $\K$, proving claim (a.ii). We argue that $R_{j'}^*=C_{j'}^{j+1}$ completing the proof that (a) holds after \SimpVecF() processes $t_j$. Before the update it holds inductively that $R_{j'}^*=C_{j'}^{j}$.
	 After \SimpVecF() processes $(e_j,t_j)$, the only $2$-chain that gets updated is $R_{j'}^*$ because $e_j$ is only in $\partial R_{j'}^*$ where $j'\not = j$. The new $2$-chain $R_{j'}^*$ after the update exactly satisfies the definition of $C_{j'}^{j+1}$ because $R_{j'}^*:=R_{j'}^*+R_j^*=C_{j'}^{j}+C_j^{j}$. This proves claim (a.i).

	 To show that claim (b) holds as well, we need to consider the only updated $2$-chain $R_{j'}^*$. We say that an edge is in a $2$-chain if one of its triangles contains the edge in its boundary. Let $e$ be any edge in $ R_{j'}^*$. If $e$ is in $R_{j'}^*$ before the update, then we already have a gradient path from $t_{j'}$ to $e$ by inductive hypothesis. If $e$ is in $R_j^*$ but not in $R_{j'}^*$ before the update, we have a gradient path from $t_{j'}$ to $e$ in the updated $R_{j'}^*$. This gradient path is obtained by concatenating three sequences, say $\pi_1,\pi_2$ and $\pi_3$. The sequence $\pi_1$ is the gradient path from $t_{j'}$ to $e_j$ in $R_{j'}^*$ before the update. Let $t$ be the last triangle shared by a path from $t_j$ to $e$ and the unique path from $t_j$ to $e_j$ before the update. The sequence $\pi_2$ is the subsequence of the reversed path $\pi_j$ from $e_j$ to $t$ with $e_j$ and $t$ removed, and the sequence $\pi_3$ is the gradient path from $t$ to $e$ that existed in $R_j^*$. This establishes that there is a gradient path from $t_{j'}$ to all edges in $R_{j'}^*$ after the update. Now, to prove (b.i), if $e$ is a boundary edge of $R_{j'}^*$ after the update, it must be the case that $e$ is in the boundary of either $R_j^*$ or $R_{j'}^*$ but not in both before the update. In that case, uniqueness of the gradient path from either $t_j$ or $t_{j'}$ to $e$ before the update implies the uniqueness of the path after the update. Hence, (b.i) holds for updated $R_{j'}^*$ and hence for all $i>j$. 
     
     To prove claim (b.ii), suppose $e \in R^*_{j'}$ has two gradient paths from $t_{j'}$ to it after the update. If $e$ is from $R^*_{j'}$ before the update, then the gradient path from $t_{j'}$ to it will not change. Hence it must be negative in this case. 
     So now suppose $e$ is from $R_j^*$ but not from $R_{j'}^*$ before the update. By the argument in the previous paragraph, we now that new gradient path from $t_{j'}$ to $e$ consists of three portions, and it is easy to see that we can have two paths from $t_{j'}$ to $e$ only if there were two paths from $t_j$ to $e$ in $R_j^*$ before the update. Hence by induction hypothesis, $e$ must be negative as well. This proves claim (b.ii), and finishes the proof of Proposition \ref{main-prop}.

\section{Experiments} 
\label{appendix:sec:exp}
In this section, we perform our algorithm on 2D and 3D density fields. 
The 2D density fields are generated from the GPS trajectories and the goal is to extract the hidden road network behind \cite{WWL15}. 
There are three 3D density fields: A synthetic dataset where the ground truth is known; and two real-life datasets where the noise model assumptions may or may not be satisfied: Specifically, we have 
the \emph{Enzo dataset} \cite{2007arXiv0705.1556N,2004astro.ph..3044O}, which comes from the simulations of cosmological structure formation in university and where the goal is to extract the filament structure behind; and the \emph{Bone dataset} which are Micro CT images of bones from the CT Dataset Archive from CIBC \cite{ct_bone}. 
For the \emph{Bone dataset}, we crop a portion of one bone since the triangulation is huge.
The sizes of these data sets are in Table \ref{table:data_size}, where ``Athens'', ``Beijing'', and ``Berlin'' represents the three 2D density fields obtained from large collection of noisy GPS traces in the three respective cities. 

The input points of the datasets are actually vertices from a 2D or 3D grid, so we obtain a triangulation of input points by simply triangulating each 2D/3D cubic cell. 
It is possible to use a cubical complex directly, but using a triangulation allows us to threshold on the input density function to remove noise and reduce the size of input simplicial complex.  

There are two parameters used in our experiments: the threshold $\delta$ which is the input parameter of \WWLAlg() and \WWLAlgSimp() used for persistence-based simplification; and the parameter $t$ to threshold the density function as used by the thresholding method (not needed for our algorithm). Both parameters are chosen empirically.

Below, we first show the efficiency of our simplified algorithm \WWLAlgSimp() versus the original algorithm \WWLAlg(). 
Then, we present some experimental results showing that, while the practical data does not fall under our noise model, empirically the algorithm still works well (as also demonstrated earlier in work such as \cite{2011MNRAS,WWL15}). Furthermore, as we mentioned earlier, thresholding may be able to produce a graph with theoretical guarantee for input density fields under our noise model, when it is combined with say, medial axis type approaches, to extract the graph after thresholding (even for that it is not yet straightforward to obtain such theoretical result). 
However, we will present experiments that show that simple thresholding does not work well empirically for non-uniformly sampled input. 

\begin{table}[h]
\resizebox{\textwidth}{!}
{
\begin{minipage}{\textwidth}
\centering
\begin{tabular}{ | c | c | c | c| }
  \hline
  Name & \#vertex &\#edge &\#triangle\\ \hline
  Athens & 444,600&1,331,111&886,512\\ \hline
  Beijing &3,754,580&11,255,893 &7,501,314\\ \hline
  Berlin & 80,741&241,084&160,344\\\hline
ENZO & 262,144&1,536,192&2,524,284\\\hline
  Bone & 5,351,976&31,829,419&52,768,394\\
  \hline

\end{tabular}
\end{minipage} }
\caption{Size of the triangulations of the datasets.} \label{table:data_size}
\end{table}

\subparagraph*{Running time.}
We implemented our simplified algorithm \WWLAlgSimp() and now compare its running time with the original algorithm \WWLAlg(). 
As we mentioned in Section \ref{sec:alg} in \cite{newpaper}, this algorithm has already  been simplified so that Morse-cancellation is {\bf only done} for vertex-edge critical pairs. Hence we implemented this improved version of \WWLAlg(), which we refer to as \WWLAlg$^+$() (the version of \SimpVecF() where only vertex-edge critical pairs are canceled is referred to as \SimpVecF$^+$()). Note that a speedup of a factor of at least 2 has already been reported for \WWLAlg$^+$() over \WWLAlg() on GPS data \cite{newpaper}. 
The step of computing the persistence pairing for the negation of density field (both in \WWLAlg$^+$() and our \WWLAlgSimp()) is done by PHAT software package \cite{bauer2017phat}. The comparison of their running time is shown in Table \ref{tab:running_time}. 
Specifically, note the step of computing persistence pairing is common to both the original \WWLAlg() algorithm and our simplified \WWLAlgSimp() version. 
Furthermore, for 3D data, this step is currently the bottleneck (although it may be improved by using persistence algorithm more suitable for volumetric data, such as DiPha). Hence we report the time for this step separately in the 3rd column of the table. The 4th and 5th columns of Table \ref{tab:running_time} show the running time (in seconds) of algorithm \WWLAlg$^+$() (without persistence computation) and of algorithm \WWLAlgSimp() (without persistence computation). 
As we can see, our simplified algorithm is more efficient, generally we see at least a factor of 2 speed-up.

\begin{table}[h]
\resizebox{\textwidth}{!}
{
\begin{minipage}{\textwidth}
  \begin{tabular}{| p{1.1cm}| p{0.6cm} | c | c | c | c| }
  \hline
  Name & $\delta$&Pre-process &\SimpVecF$^+$ +\CollectOutput&\SimpVecFVtwo +\CollectOutputVtwo\\ \hline
  Athens & 0.01&12.3&1.2&0.5\\ \hline
  Beijing & 0.1&97.8 &13.1&5.4\\ \hline
  Berlin & 10&2.0&0.25&0.17\\
  \hline
  \end{tabular}
  \bigbreak
\begin{tabular}{| p{1.1cm} | p{0.6cm} | c | c |c | }
   \hline
  Name & $\delta$&Pre-process &\SimpVecF$^+$ +\CollectOutput&\SimpVecFVtwo +\CollectOutputVtwo\\ \hline
  ENZO &50& 26.5&1.0&0.38\\ \hline
   Bone &40&869 &21.6&8.2\\ \hline
\end{tabular}
\end{minipage} }
\caption{Running time (in seconds) of pre-process (column 3, including filtration and persistence computation ), algorithm \WWLAlg$^+$()  w/o persistence part (column 4) and our simplified algorithm \WWLAlgSimp() w/o persistence part (column 5). 
\label{tab:running_time}}
\end{table}

\subparagraph*{Reconstructions results.}
We do not show the reconstruction results for 2D data sets, as these data sets are originally used in both \cite{WWL15,newpaper}
and our output is the same (Corollary \ref{cor:output}). 
We now show the reconstruction results for 3D data sets. 
Since it is hard to obtain ground truth for real life datasets, we first show the result of a synthetic dataset where we know the ground truth in Figure \ref{fig:arti_data}. 
This dataset is generated as follows: First we create an arbitrary graph (the black lines) 
, then we diffuse this graph by convolution with a Gaussian kernel to obtain a 3D density field. The band width of the Gaussian kernel is 4, comparing to the radius 50 of the input data.
As we can see in Figure \ref{fig:arti_data}, our output graph captures the two loops in the ground truth graph, which follows our theoretical results.

The reconstruction results for real life data sets Enzo and Bone are given in Figure \ref{fig:enzo} and in Figure \ref{fig:bone}, respectively, overlayed with the original density field.  There is no single good threshold $t$ as the density has a rather non-homogeneous distribution, and structures can exist at different level of thresholds (which we will show more shortly below). Hence we provide a volume rendering of the input density field overlapped with our output reconstruction. 

We also provide reconstructions of the bone dataset at different threshold $\delta$ level in Figure \ref{fig:bone_simp}. As we increase the threshold $\delta$, we capture fewer features of the data.

\begin{figure}[H]
\captionsetup[subfigure]{justification=centering}
    \centering
        \begin{subfigure}[b]{0.28\textwidth}
        \includegraphics[height=0.19\textheight]{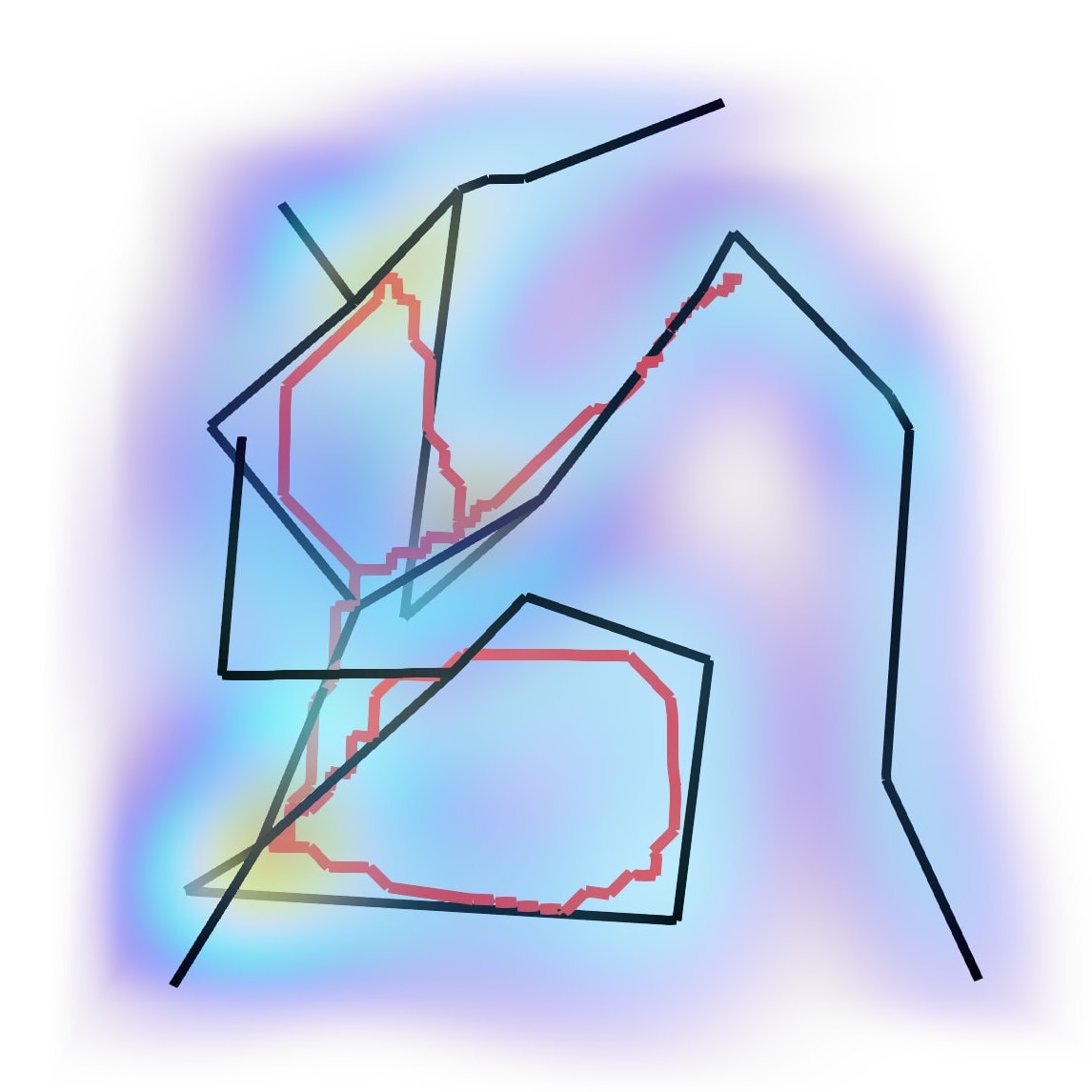}
        \caption{}
        \label{fig:arti_data}
    \end{subfigure}\qquad
    \begin{subfigure}[b]{0.28\textwidth}
        \includegraphics[height=0.19\textheight]{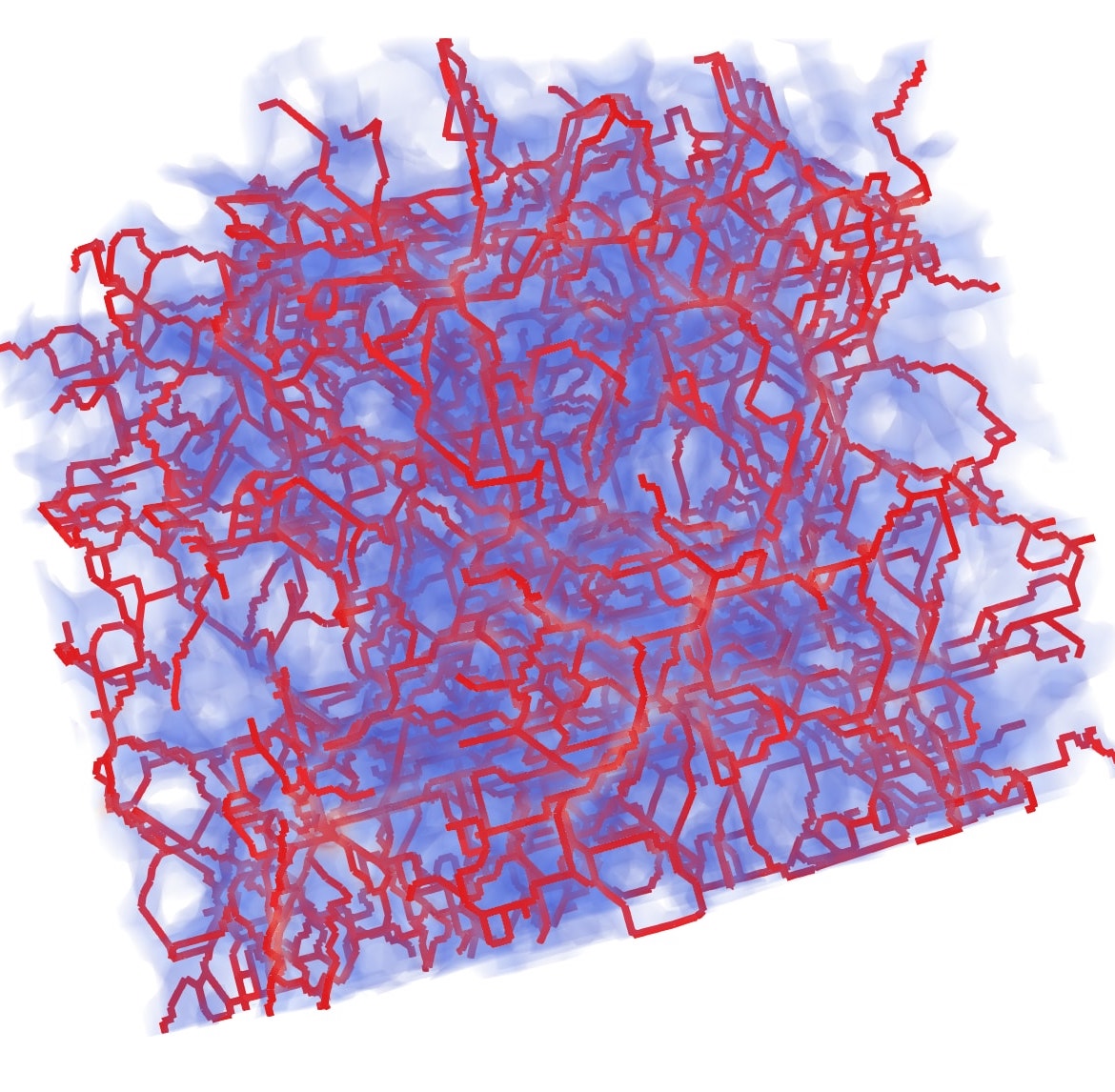}
        \caption{}
        \label{fig:enzo}
    \end{subfigure}\quad
    \begin{subfigure}[b]{0.28\textwidth}
        \includegraphics[height=0.19\textheight]{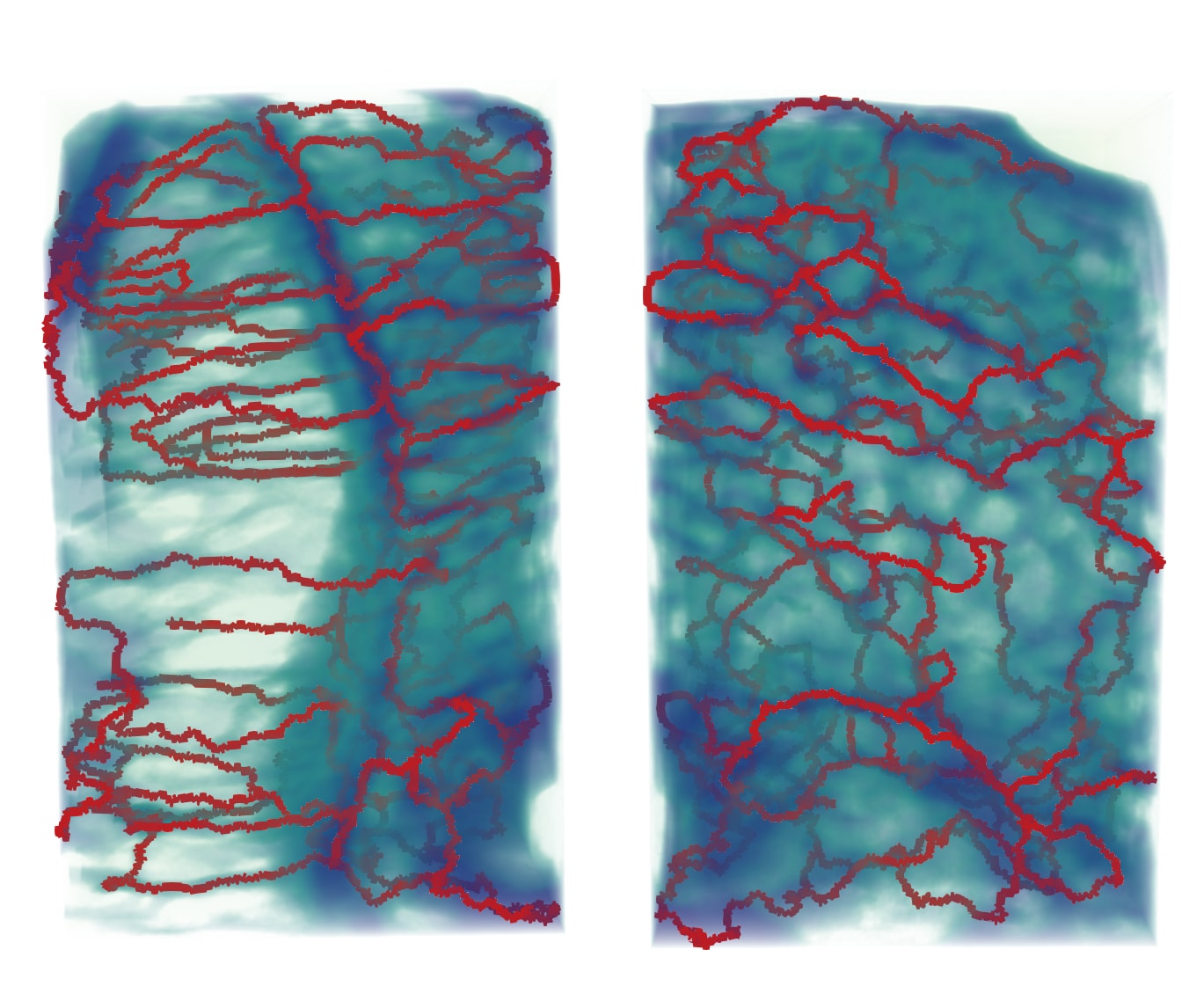}
        \caption{}
        \label{fig:bone}
    \end{subfigure}

 \vspace*{-0.1in}   \caption{In the three figures above, the red lines are the reconstructions, the volumes show the volume renderings of the input density functions. (a) Synthetic dataset.The black lines are the ground truth.  (b) Enzo dataset. (c) Part of a bone dataset.  }
\end{figure}

\begin{figure}[H]
\captionsetup[subfigure]{justification=centering}
    \centering
    \begin{subfigure}[b]{0.22\textwidth}
        \includegraphics[width=\textwidth]{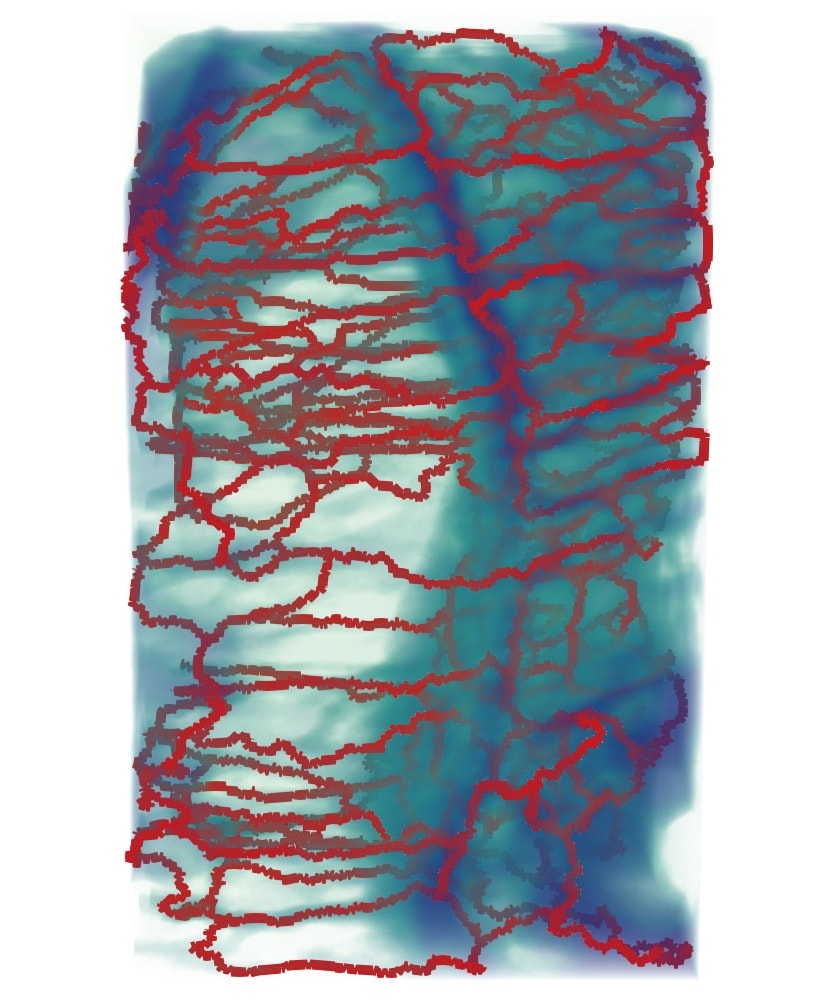}
        \caption{40}
        \label{}
    \end{subfigure}
    \begin{subfigure}[b]{0.22\textwidth}
        \includegraphics[width=\textwidth]{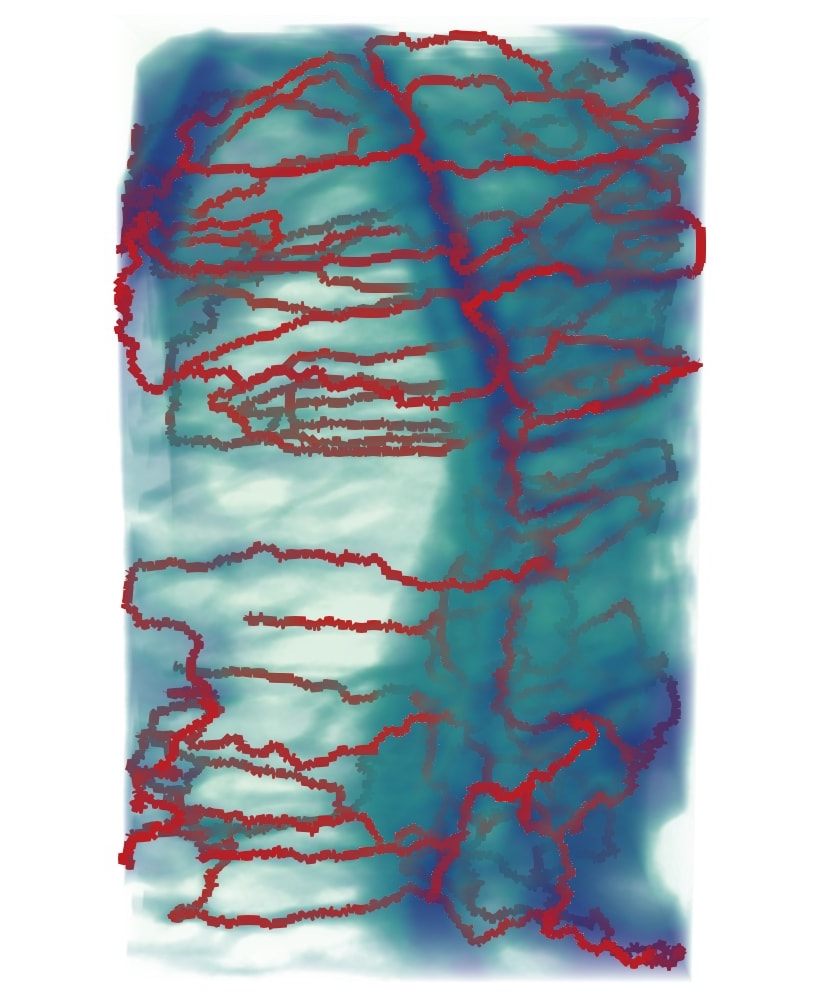}
        \caption{50}
        \label{}
    \end{subfigure}
    \begin{subfigure}[b]{0.22\textwidth}
         \includegraphics[width=\textwidth]{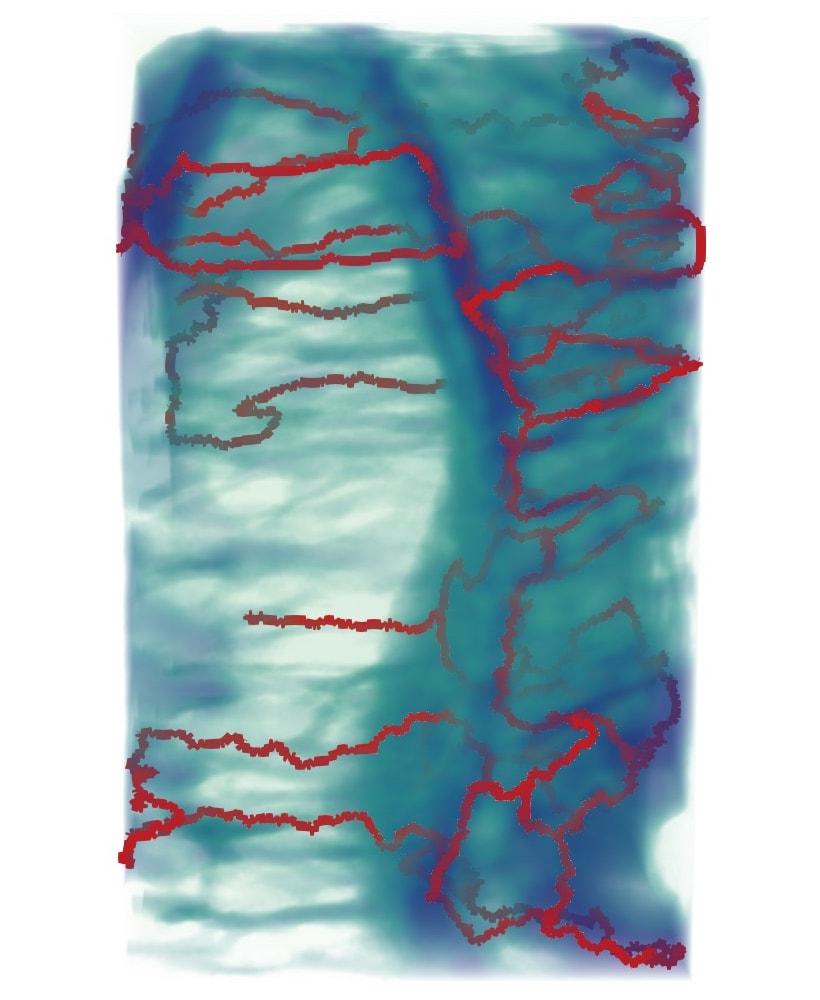}
        \caption{60}
        \label{}
    \end{subfigure}
        \begin{subfigure}[b]{0.22\textwidth}
         \includegraphics[width=\textwidth]{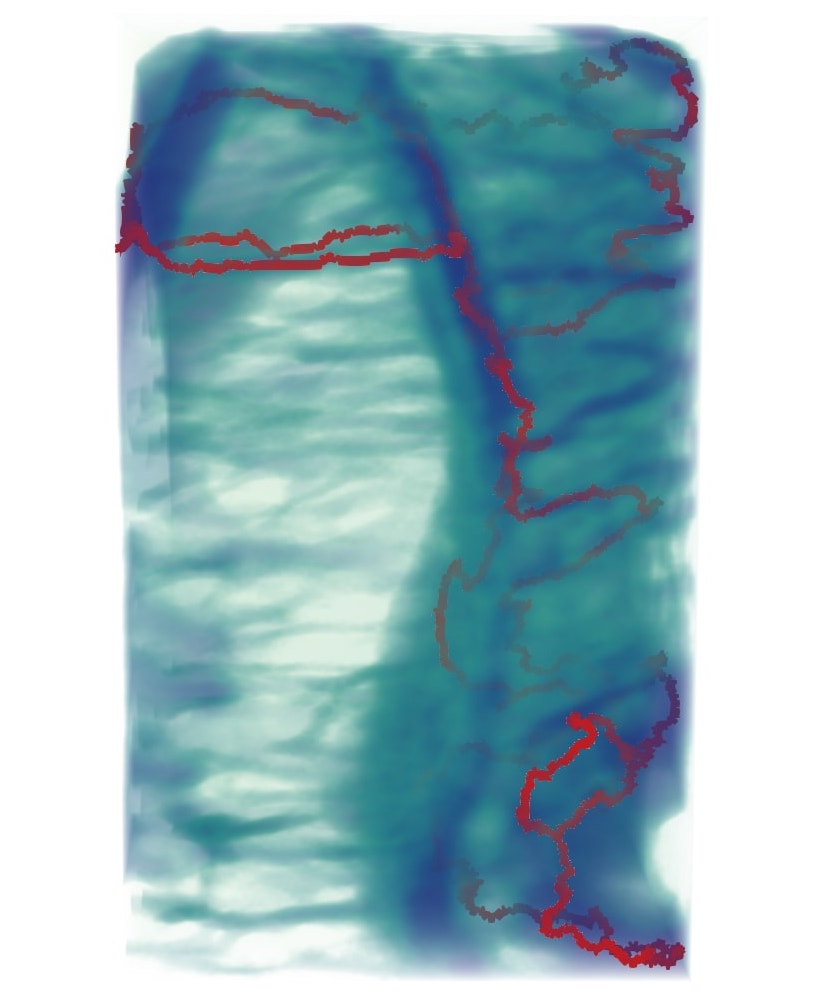}
        \caption{70}
        \label{}
    \end{subfigure}
 \vspace*{-0.1in}   \caption{(a)-(d): Reconstructions of the bone dataset at different threshold $\delta$ level (show in the sub caption).}
  \label{fig:bone_simp}
\end{figure}

\subparagraph*{Comparing with the thresholding method.}
A natural way to denoise is simply thresholding. However, for many data, due to the non-homogeneousness, no single threshold $t$ can capture all features. 
For example, as we show in Figures \ref{fig:enzo_th} and \ref{fig:bone_th}, branching / loop features appear at different time as we vary the density threshold $t$. 
Furthermore, features that appear at high density threshold $t$ may actually be destroyed at low threshold $t$. Hence there is no single good threshold $t$ to capture all features. 
As an example, see the two loops circled with blue in Figure \ref{fig:27} for a high density threshold $t$, which is filled in a lower threshold $t$ in \ref{fig:15} and \ref{fig:10} respectively. However, we need to lower the threshold $t$ as many new features, such as the loops circled with green shown in \ref{fig:15} only appears in a lower threshold $t$. 
Similarly, for the bone dataset, we note that most of the vertical fibers in the lower part can only be captured at a lower threshold $t$. However, at that point, the features in the high density regions (say the top part) are already merged. 

On the other hand, while we do not yet have theoretical guarantees for the discrete Morse based graph reconstruction algorithm for such non-homogeneous data sets, we note that it performs very well empirically, captures these features of different density scales. (The red graphs in both figures are the output reconstruction by our algorithm \WWLAlgSimp.)

\begin{figure}[H]
\captionsetup[subfigure]{justification=centering}
    \centering
    \begin{subfigure}[b]{0.22\textwidth}
        \includegraphics[width=\textwidth]{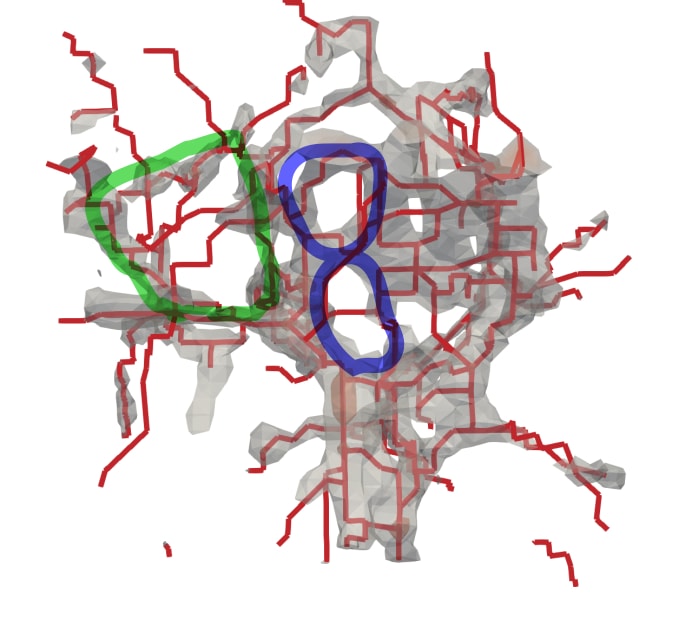}
        \caption{ $2.7\mathrm{e}{-30}$  }
        \label{fig:27}
    \end{subfigure}
        \begin{subfigure}[b]{0.22\textwidth}
        \includegraphics[width=\textwidth]{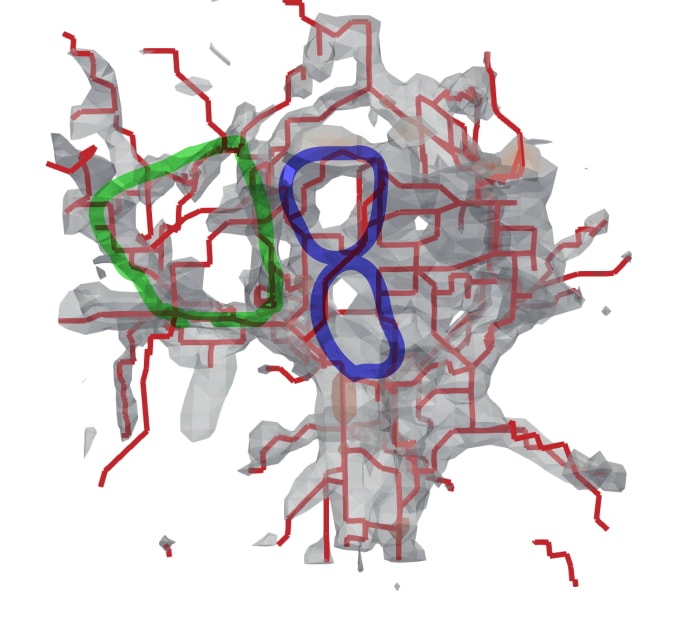}
        \caption{ $2.0\mathrm{e}{-30}$  }
        \label{fig:20}
    \end{subfigure}
        \begin{subfigure}[b]{0.22\textwidth}
        \includegraphics[width=\textwidth]{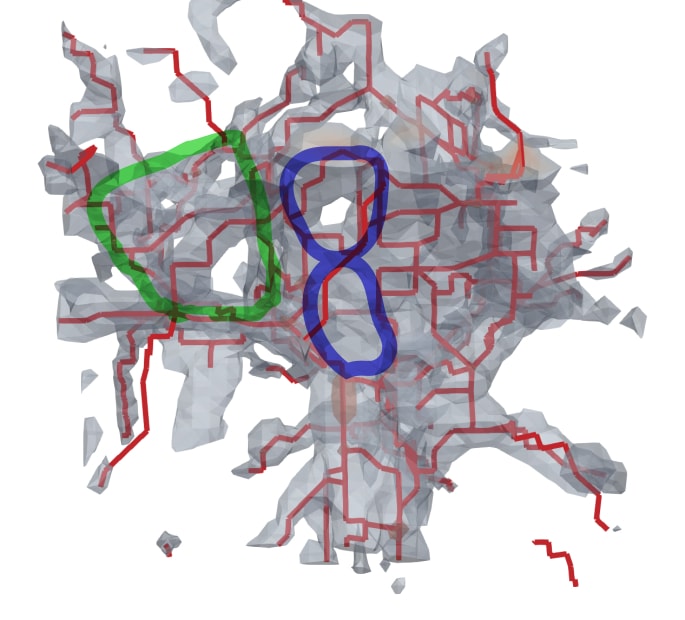}
        \caption{ $1.5\mathrm{e}{-30}$  }
        \label{fig:15}
    \end{subfigure}
        \begin{subfigure}[b]{0.22\textwidth}
        \includegraphics[width=\textwidth]{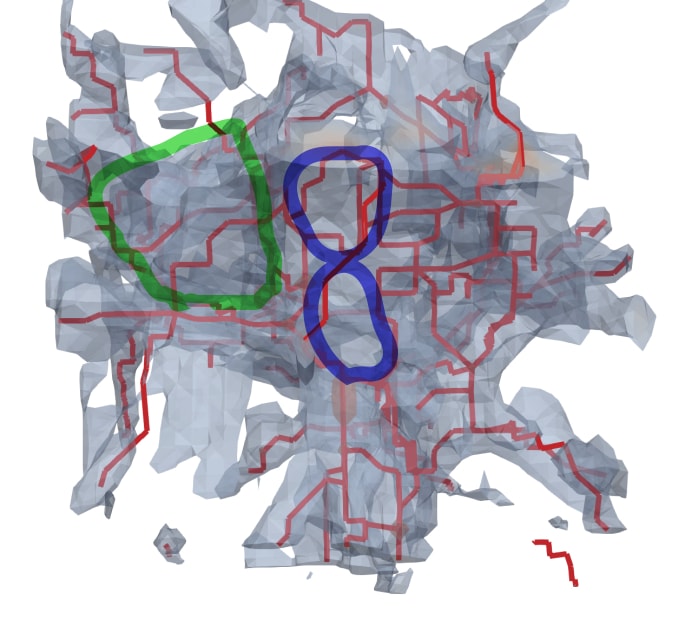}
        \caption{ $1.0\mathrm{e}{-30}$  }
        \label{fig:10}
    \end{subfigure}
\
 \vspace*{-0.1in}   \caption{(a)-(d): Part of the Enzo dataset. The gray volumes are isovolumes with increasing lower bounds (show in the sub caption), the red lines are the reconstructions. Reliable features existing in high thresholds got killed when we lower it.}
 \label{fig:enzo_th}
\end{figure}

\begin{figure}[H]
\captionsetup[subfigure]{justification=centering}
    \centering
    \begin{subfigure}[b]{0.22\textwidth}
        \includegraphics[width=\textwidth]{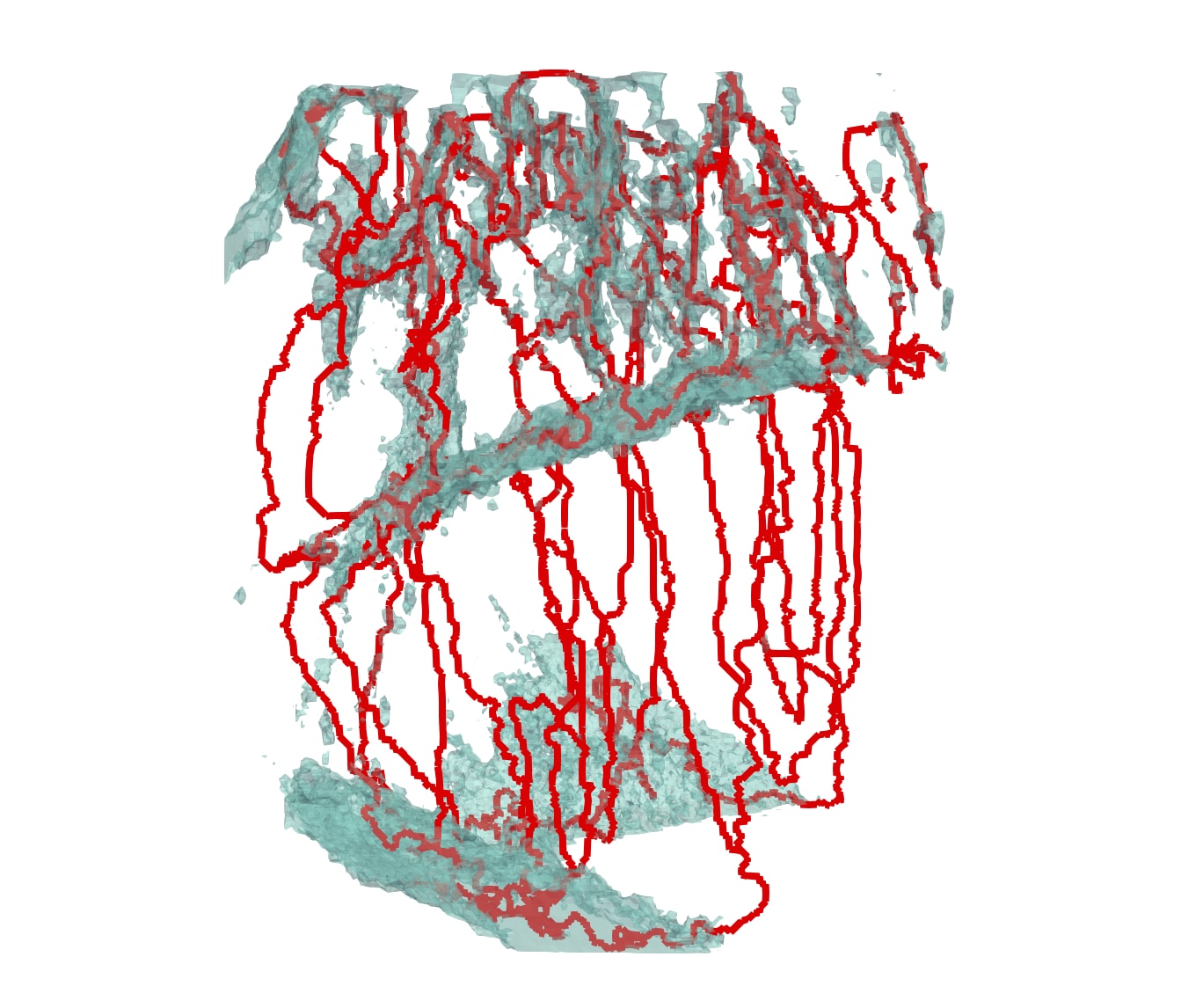}
        \caption{210}
        \label{}
    \end{subfigure}
    \begin{subfigure}[b]{0.22\textwidth}
        \includegraphics[width=\textwidth]{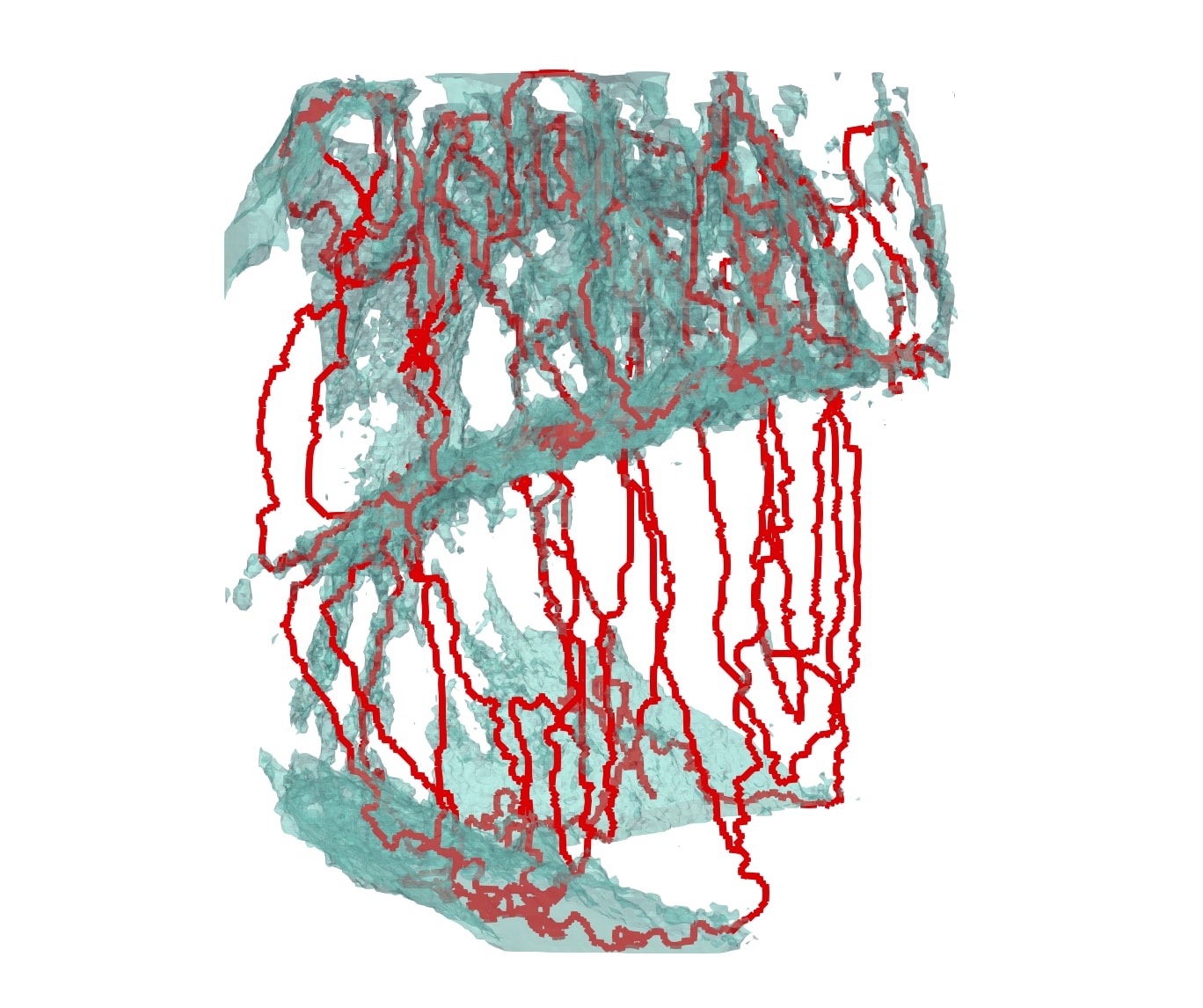}
        \caption{200}
        \label{}
    \end{subfigure}
    \begin{subfigure}[b]{0.22\textwidth}
         \includegraphics[width=\textwidth]{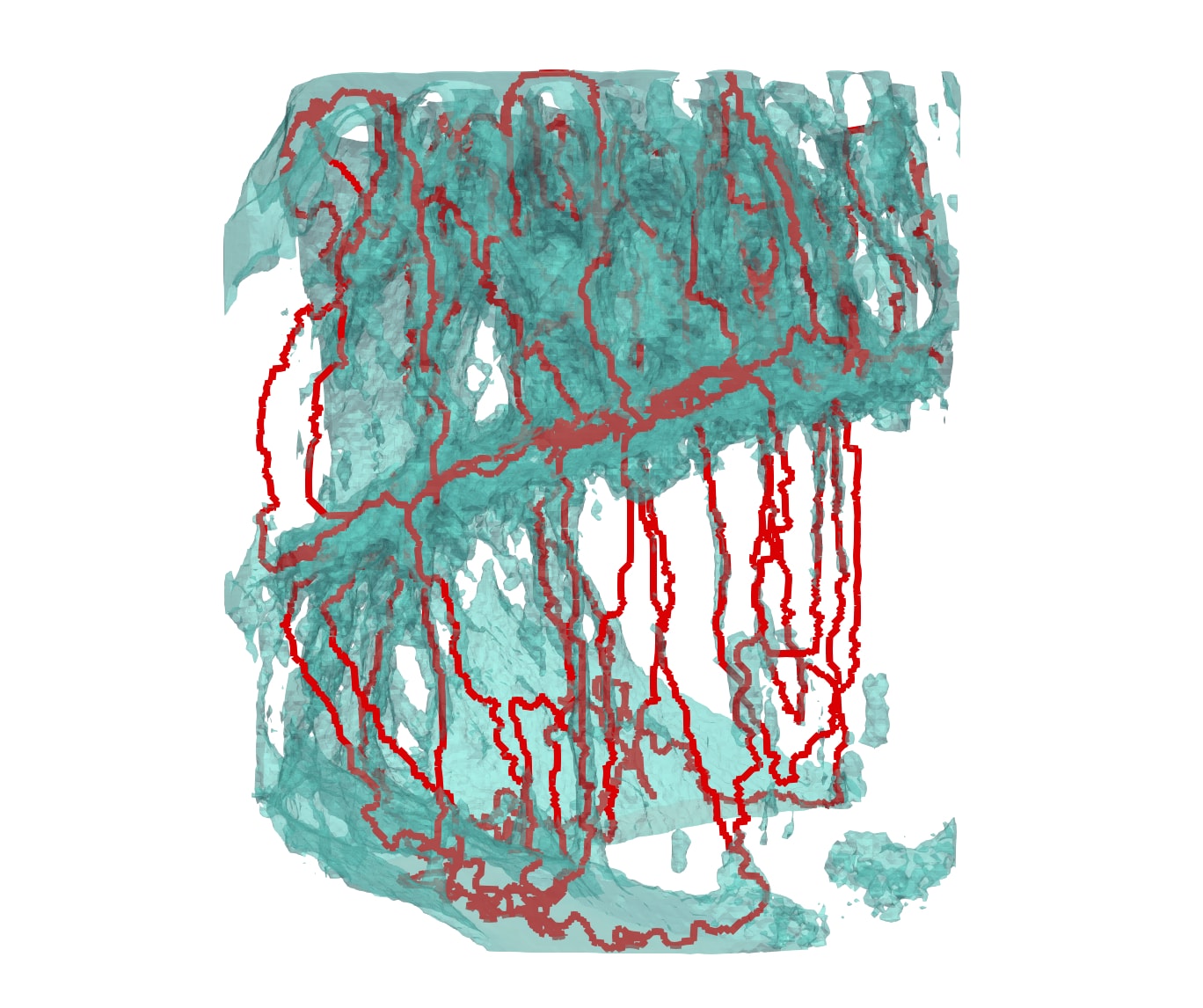}
        \caption{180}
        \label{}
    \end{subfigure}
        \begin{subfigure}[b]{0.22\textwidth}
         \includegraphics[width=\textwidth]{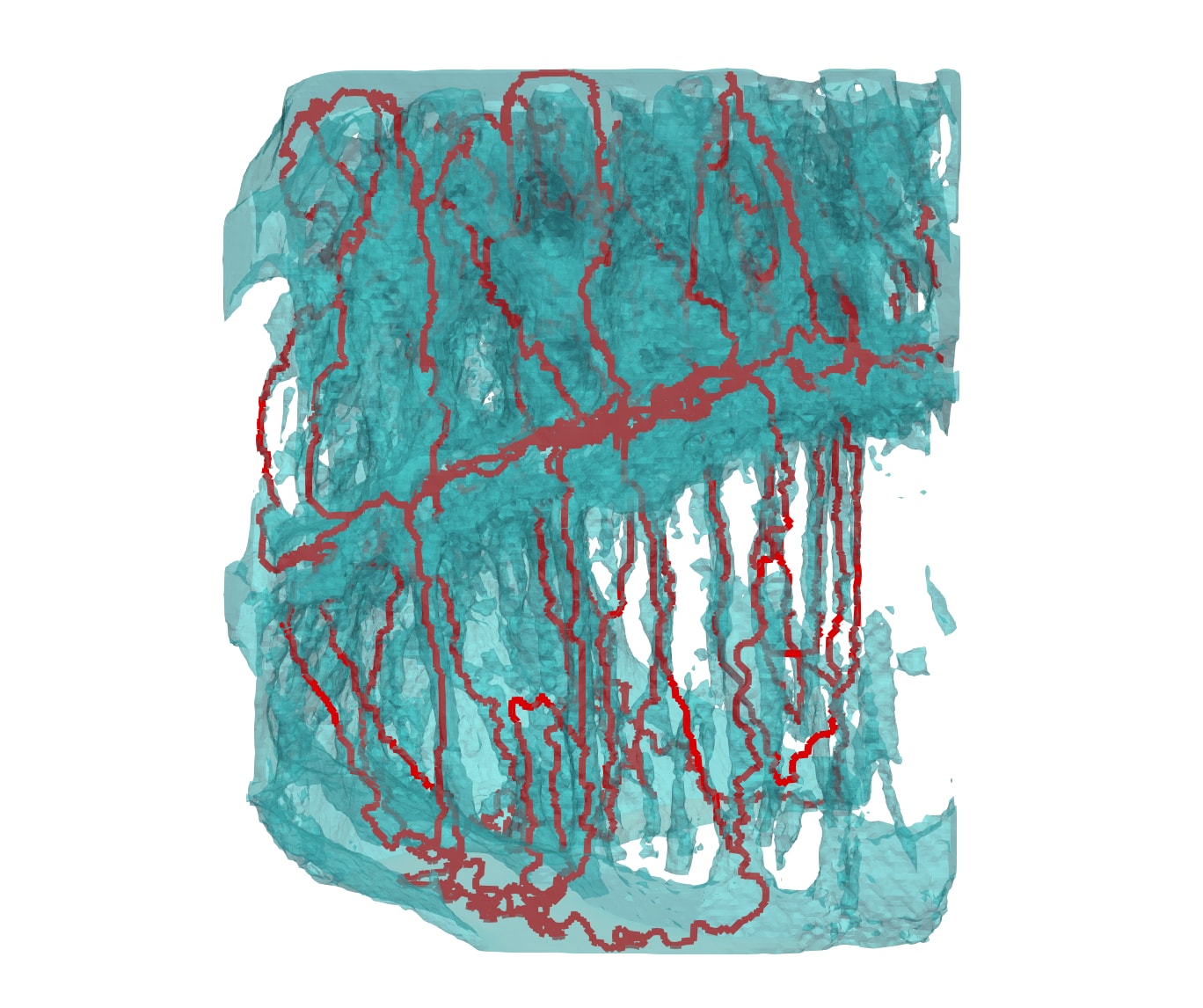}
        \caption{160}
        \label{}
    \end{subfigure}
 \vspace*{-0.1in}   \caption{(a)-(d): Bone dataset. The blue volumes are isovolumes with increasing lower bounds (show in the sub caption), the red lines are the reconstructions.  Low threshold $t$ captures features in lower part, high threshold $t$ captures loops in upper part.}
  \label{fig:bone_th}
\end{figure}

\end{document}